\newcommand{\dr}{d_{\hspace{-1pt}R}}
\newcommand{\Dr}{\Delta_{\hspace{-1pt}R}}
\newcommand{\Nr}{N_{\hspace{-1pt}R}}
\theoremstyle{remark}
\newtheorem{claim}{$\rhd$ Claim}
\newcommand{\cqed}{\ensuremath{\lhd}}
\newcommand{\Oof}{\mathcal{O}}
\newcommand{\Cc}{\mathscr{C}}
\newcommand{\Pp}{\mathcal{P}}
\newcommand{\Tt}{\mathcal{T}}
\newcommand{\minor}{\preccurlyeq}
\renewcommand{\phi}{\varphi}
\renewcommand{\epsilon}{\varepsilon}
\pgfplotsset{compat = newest}
\newcommand{\ERCagreement}{This paper is part of a project that has received funding from the European Research Council (ERC) under the European Union's Horizon 2020 research and innovation programme (grant agreements No 810115 -- {\sc Dynasnet} and 948057) and from the German
	Research Foundation (DFG) with grant agreement
	No 444419611.\\
			\includegraphics[width=.25\textwidth]{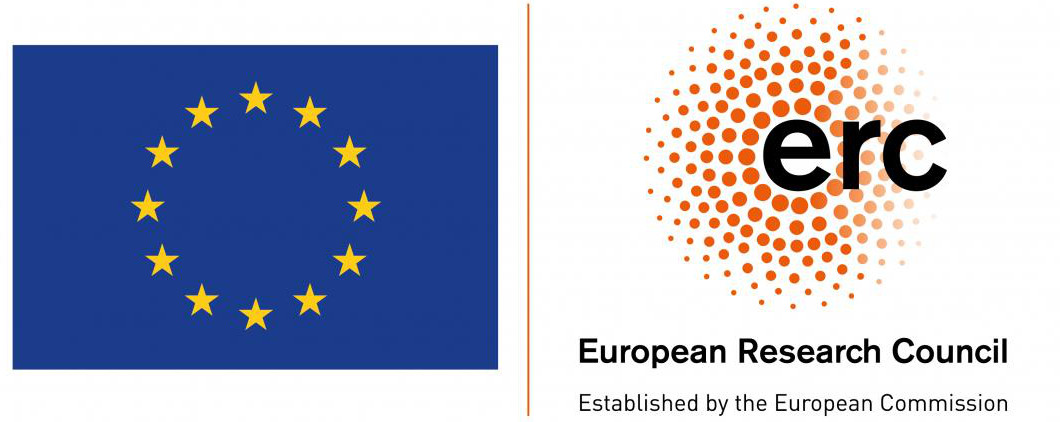}\
			\includegraphics[width=.2\textwidth]{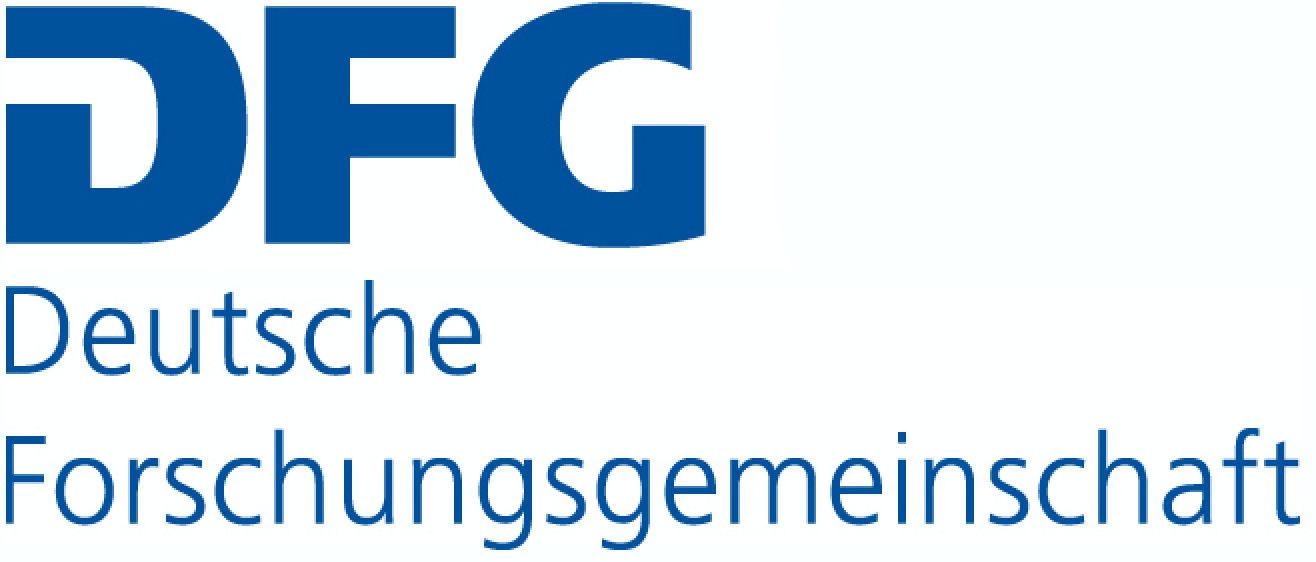}}
\newtheorem{theorem}{Theorem}[section]
\newtheorem{corollary}{Corollary}[section]
\newtheorem{definition}{Definition}[section]
\newtheorem{lemma}{Lemma}[section]
\newtheorem{example}{Example}
\crefname{ext_theorem}{Theorem}{Theorems}
\crefname{corollary}{Corollary}{Corollaries}
\crefname{lemma}{Lemma}{Lemmas}
\crefname{section}{Section}{Sections}
\newcommand{\nc}[1]{\newcommand{#1}}
\nc{\e}{\epsilon}
\nc{\dd}{\delta}
\nc{\dom}{\text{dom}}
\nc{\nn}{\nabla}
\nc{\bn}{\nabla\!\!\!\nabla_1}
\journal{}
\begin{document}
\begin{frontmatter}
\title{Distributed domination on sparse graph classes}
\tnotetext[ERC]{\ERCagreement}
\author{Ozan Heydt}\address{University of Bremen, Bremen, Germany}\ead{heydt@uni-bremen.de}
\author{Simeon Kublenz}\address{University of Bremen, Bremen, Germany}\ead{kublenz@uni-bremen.de}
\author{Patrice Ossona de Mendez}\address{Centre d'Analyse et de Math\'ematiques Sociales (CNRS, UMR 8557), Paris, France and \\Computer Science Institute of Charles University, Praha, Czech Republic}\ead{pom@ehess.fr}
\author{Sebastian Siebertz}\address{University of Bremen, Bremen, Germany}\ead{siebertz@uni-bremen.de}
\author{Alexandre Vigny}\address{University of Bremen, Bremen, Germany}\ead{vigny@uni-bremen.de}
\begin{keyword}
  Dominating set, Distributed LOCAL algorithms, Bounded expansion graph classes, Planar graphs.\\[2.5mm]

This journal paper subsumes the results of the extended abstracts~\cite{kublenz2020distributed} and \cite{heydt2021local}.
\end{keyword}

\begin{abstract}
  We show that the dominating set problem admits a constant factor
  approximation in a constant number of rounds in the \mbox{LOCAL}
  model of distributed computing on graph classes with bounded
  \mbox{expansion}.  This generalizes a result of Czygrinow et al.\
  for graphs with excluded topological minors to very general classes
  of uniformly sparse graphs.  We demonstrate how our general
  algorithm can be modified and fine-tuned to compute an
  ($11+\e$)\hspace{1pt}\raisebox{0.3pt}{-}\hspace{0.9pt}approximation
  (for any $\epsilon>0)$ of a minimum dominating set on planar
  graphs. This improves on the previously best known approximation
  factor of~52 on planar graphs, which was achieved by an elegant and
  simple algorithm of Lenzen et al.
\end{abstract}

\end{frontmatter}


\section{Introduction}

A dominating set in an undirected and simple graph $G$ is a set
$D\subseteq V(G)$ such that every vertex $v\in V(G)$ either belongs
to $D$ or has a neighbor in $D$.
The dominating set problem has many
applications in theory and practice, see e.g.~\cite{du2012connected,sasireka2014applications}, unfortunately
however, already the decision
problem whether a graph admits a dominating set of size $k$
is NP-hard~\cite{karp1972reducibility} and this even holds in
very restricted settings, e.g.\ on planar graphs of maximum degree
$3$~\cite{garey1979computers}.

Consequently, attention
shifted from computing exact solutions to approxi\-mating
near optimal dominating sets. A simple greedy algorithm computes
an $\ln n$ approximation (where $n$ is number of vertices
of the input graph)
of a minimum dominating set \cite{johnson1974approximation,lovasz1975ratio}, and for
general graphs this algorithm is near optimal -- it is NP-hard
to approximate minimum dominating sets within factor
$(1-\epsilon)\ln n$ for every $\epsilon>0$~\cite{dinur2014analytical}.

Therefore, researchers tried to identify restricted
graph classes where better (sequential) approximations are possible.
For example, the problem
admits a PTAS on classes with sub\-exponential expansion~\cite{har2017approximation}. Here, expansion refers to the edge
density of bounded depth minors, which we will define formally
below. Important examples of classes with subexponential
expansion include the class of planar graphs and more generally
classes that exclude some fixed graph as a minor. The dominating
set problem admits a constant factor approximation on classes of
bounded degeneracy (equivalently, of bounded arboricity)~\cite{bansal2017tight,lenzen2010minimum}
and an $\Oof\hspace{1pt}(\ln \gamma)$ approxi\-mation (where~$\gamma$ denotes the size
of a minimum dominating set) on classes of bounded VC-dimension~\cite{bronnimann1995almost,even2005hitting}. In fact, the greedy
algorithm can be modified to yield an $\Oof\hspace{1pt}(\ln \gamma)$
approximation on biclique-free graphs (graphs that exclude some fixed
complete bipartite graph $K_{t,t}$ as a subgraph)~\cite{siebertz2019greedy}
and even a constant factor approximation on
graphs with bounded degeneracy~\cite{jones2017parameterized}.
However, it is unlikely
that polynomial-time constant factor approximations exist even on
$K_{3,3}$-free graphs~\cite{siebertz2019greedy}.
The general goal in this line of research is to identify the broadest
graph classes on which the dominating set problem (or other important
problems that are hard on general graphs) can be approximated
efficiently with a certain guarantee on the approximation factor.
These limits of tractability are often captured by abstract notions, such
as expansion, degeneracy or VC-dimension. 

In this paper we study the distributed time complexity of finding
dominating sets in the classic LOCAL model of distributed computing,
which can be traced back at least to the seminal work of Gallager,
Humblet and Spira~\cite{gallager1983distributed}. In this model, a
distributed system is modeled by an undirected (connected) graph~$G$,
in which every vertex represents a computational entity of the network and every edge represents a bidirectional communication channel. The vertices are equipped with unique identifiers.
In a distributed algorithm, initially, the nodes have no knowledge about
the network graph. They must then communicate and coordinate
their actions by passing messages to one another in order to achieve
a common goal, in our case, to compute a dominating set of the
network graph. The LOCAL model focuses on the aspects of
communication complexity and therefore the main measure for
the efficiency of a distributed algorithm is the number of communication
rounds it needs until it returns its answer.

Kuhn et al.~\cite{KuhnMW16} proved that in~$r$ rounds on an~$n$-vertex graphs of maximum degree
$\Delta$ one can approximate minimum dominating sets only within a factor $\Omega(n^{c/r^{\mathrlap{2}}}/r)$
and~$\Omega(\Delta^{1/(r+1)}/r)$, respectively, where~$c$ is a constant.
This implies that, in general, to achieve a constant approximation ratio,
we need at least $\Omega\hspace{1pt}(\sqrt{\log
    n/\log \log n})$ and~$\Omega\hspace{1pt}(\log \Delta/\log \log \Delta)$ communication rounds, respectively.
Kuhn et al.~\cite{KuhnMW16} also presented a~$(1+\epsilon)\ln \Delta$-approximation that runs in $\Oof\hspace{1pt}(\log(n)/\epsilon)$ rounds for any~$\epsilon>0$,
Barenboim et al.~\cite{barenboim2018fast}
presented a deterministic $\Oof\hspace{1pt}((\log n)^{k-1})$-time algorithm that provides an
$\Oof\hspace{1pt}(n^{1/k})$-approximation, for any integer parameter~$k \ge 2$.
More recently, the combined results of Rozhon, Ghaffari, Kuhn, and Maus~\cite{DBLP:conf/stoc/GhaffariKM17,DBLP:conf/stoc/RozhonG20}
provide an algorithm computing a $(1+\epsilon)$-approximation of the dominating set
in poly$(\log(n)/\epsilon)$ rounds~\cite[Corollary 3.11]{DBLP:conf/stoc/RozhonG20}.

Since by the results of Kuhn et al.~\cite{KuhnMW16} in general graphs it is not
possible to compute a constant factor approximation in a constant number
of rounds,
much effort has been invested to improve the ratio between approximation
factor and number of rounds on special graph classes.
For graphs of degeneracy~$a$ (equivalent to arboricity up to factor $2$),
Lenzen and Wattenhofer~\cite{lenzen2010minimum}
provided an algorithm that achieves a factor~$\Oof\hspace{1pt}(a^2)$ approximation
in randomized time~$\Oof\hspace{1pt}(\log n)$, and a deterministic~$\Oof\hspace{1pt}(a \log
\Delta)$ approximation algorithm
with $\Oof\hspace{1pt}(\log \Delta)$ rounds. Graphs of bounded degeneracy include all graphs that exclude a fixed graph as a (topological) minor and in particular, all planar graphs and any class of bounded genus.

Amiri et al.~\cite{akhoondian2018distributed} provided a deterministic
$\Oof\hspace{1pt}(\log n)$ time constant factor approximation algorithm on
classes of bounded expansion (which extends also to connected
dominating sets).
Czygrinow et al.~\cite{czygrinow2008fast} showed
that for any given~\mbox{$\epsilon>0$}, $(1+\epsilon)$-approximations of a maximum independent
set, a maximum matching, and a minimum dominating set, can be computed in
$\Oof\hspace{1pt}(\log^* n)$ rounds in planar graphs, which is asymptotically optimal~\cite{lenzen2008leveraging}.

Lenzen et al.~\cite{lenzen2013distributed} proved that on planar graphs
a 130\hspace{1pt}\raisebox{0.3pt}{-}\hspace{0.9pt}approximation of a minimum dominating set can be computed in a
constant number of
rounds. A careful analysis of Wawrzyniak~\cite{wawrzyniak2014strengthened}
later showed that the algorithm computes in fact a 52\hspace{1pt}\raisebox{0.3pt}{-}\hspace{0.9pt}approximation.
In terms of lower bounds, Hilke et al.~\cite{hilke2014brief} showed that there is no
deterministic local algorithm (constant-time distributed graph algorithm) that
finds a~$(7-\epsilon)$-approximation of a minimum dominating set on
planar graphs, for any positive constant~$\epsilon$. Better approximation
ratios are known for some special cases, e.g.\ 32 if the planar graph is
triangle-free \mbox{\cite[Theorem 2.1]{alipour2020distributed}}, 18 if the planar graph has girth
five~\cite{alipour2020local} and 5 if the graph is
outerplanar (and this bound is tight)~\cite[Theorem 1]{bonamy2021tight}.
Wawrzyniak~\cite{wawrzyniak2013brief} showed
that message sizes of $\mathcal{O}(\log n)$ suffice to give a
constant factor approximation on planar graphs in a constant number
of rounds.

The constant factor approximations in a constant number of rounds for planar
graphs were gradually extended to classes with bounded genus~\cite{akhoondian2016local,amiri2016brief}, classes with sublogarithmic expansion~\cite{amiri2019distributed} and eventually by Czygrinow et al.~\cite{czygrinow2018distributed} to classes with excluded topological minors.
Again, one of the main goals in this line of research is to find the most general
graph classes on which the dominating set problem admits a constant
factor approximation in a constant number of rounds.

\medskip

\begin{figure}[h!]
\begin{center}
\begin{tikzpicture}

\node (bd-deg) at (11,-2.7) {\scriptsize\textit{bounded degree}};
\node[align=center] (topminor) at (8.5,-1.7) {\scriptsize\textit{excluded}\\[-2mm]\scriptsize\textit{topological}\\[-2mm] \scriptsize\textit{minor}};
\node[align=center] (sublog) at (4.5,-1.7) {\scriptsize\textit{subexponential}\\[-2mm]\scriptsize\textit{expansion}};
\node (bd-exp) at (6.5,-0.8) {\scriptsize\textbf{\textit{bounded expansion}}};
\node (degenerate) at (6.5,0) {\scriptsize\textit{bounded degeneracy}};
\node (biclique-free) at (6.5,0.8) {\scriptsize\textit{biclique-free}};
\node (vc) at (6.5,1.6) {\scriptsize\textit{bounded VC-dimension}};
\node (planar) at (6.5,-4.1) {\scriptsize\textit{planar}};
\node (genus) at (6.5,-3.4) {\scriptsize\textit{bounded genus}};
\node (minor) at (6.5,-2.7) {\scriptsize\textit{excluded minor}};


\draw[->,>=stealth] (planar) to (genus);
\draw[->,>=stealth] (genus) to (minor);
\draw[->,>=stealth] (minor) to (topminor);
\draw[->,>=stealth] (topminor) to (bd-exp);
\draw[->,>=stealth] (bd-deg) to (topminor);
\draw[->,>=stealth] (minor) to (sublog);
\draw[->,>=stealth] (sublog) to (bd-exp);
\draw[->,>=stealth] (bd-exp) to (degenerate);
\draw[->,>=stealth] (degenerate) to (biclique-free);
\draw[->,>=stealth] (biclique-free) to (vc);

\end{tikzpicture}
\end{center}
\caption{Inclusion diagram of the mentioned graph classes. }
\end{figure}
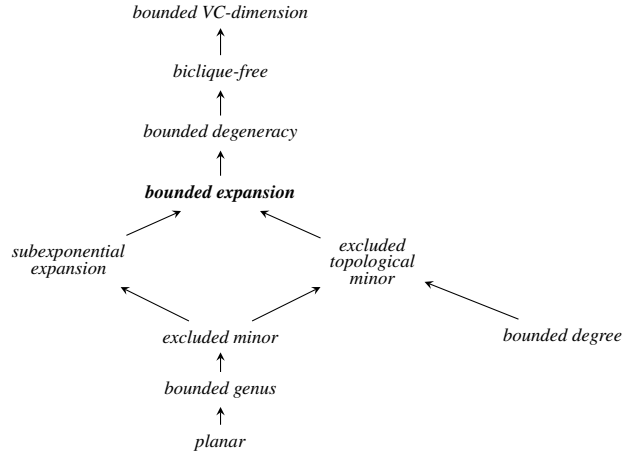\label{fig:classes}

We take a step towards this goal and generalize the result of
Czygrinow et al.~\cite{czygrinow2018distributed} to classes of bounded
expansion. The notion of bounded expansion was introduced
by Ne\v{s}et\v{r}il and Ossona de Mendez~\cite{nevsetvril2008grad} and
offers an abstract definition of uniform sparseness in graphs. It is based on bounding the density of shallow minors. Intuitively, while
a minor is obtained by contracting arbitrary connected subgraphs of a graph
to single vertices, in an $r$-shallow minor we are only allowed to contract
connected subgraphs of radius at most~$r$.

A class of graphs has
bounded expansion if for every radius $r$ the set of all \mbox{$r$-shallow}
minors has edge density bounded by a constant depending only on~$r$.
We write $\nabla_r(G)$ for the maximal edge density of an
$r$-shallow minor of a graph~$G$.
Of course, every class~$\Cc$ that excludes a fixed graph $H$ as
a minor has bounded expansion. For such classes there exists an
absolute constant
$c$ such that for all $G\in\Cc$ and all~$r$
we have $\nabla_r(G)\leq c$.
Special cases are the class of
planar graphs, every class of graphs that can be drawn
with a bounded number of crossings, and every class of graphs
that embeds into a fixed surface.
Every class of intersection graphs of low density objects in low
dimensional Euclidean space has polynomial expansion, that is, the function~$\nabla_r$ is bounded polynomially in $r$ on $\Cc$. Also
every class $\Cc$ that excludes a fixed graph $H$ as
a topological minor has bounded expansion.
Important special cases are classes of
bounded degree and classes of graphs that can be drawn
with a linear number of crossings.
Further examples include
classes of graphs with bounded queue-number, bounded stack-number or bounded
non-repetitive chromatic number. Also, for each constant $d>0$ there is a bounded expansion class $\mathscr R_d$, to which  the Erd\H os-R\'enyi random graphs $G(n,d/n)$ belong
asymptotically almost surely.
See \cite{har2017approximation,nevsetvril2012characterisations} for reference
to  all these examples.

Classes of bounded expansion are more general than
classes excluding a topological minor. However, maybe
not surprisingly, when performing local
computations, it is not properties of minors or topological minors, but
rather of shallow minors that enable the necessary combinatorial arguments
in the algorithms. This observation was already made in the study of the kernelization complexity of dominating set on classes of sparse graphs \cite{DrangeDFKLPPRVS16,eiben2019lossy,EickmeyerGKKPRS17,FabianskiPST19,kreutzer2018polynomial}.
Moreover, bounding the edge density of shallow minors might be needed only up to some depth.
For example, degenerate classes
are those classes where only $\nabla_{\hspace{-1pt}0}(G)$, the edge
density of subgraphs, is bounded, and
these classes are  more general than classes of bounded
expansion.

The algorithm of Czygrinow et al.~\cite{czygrinow2018distributed} for
classes excluding a topological minor is
based on an quite complicated iterative process of choosing dominating
vertices from so called
\emph{pseudo-covers}. Based on the fact that classes excluding a topological minor in particular exclude some complete bipartite graph
$K_{t,t}$ as a subgraph, it is proved that this iterative process terminates
after at most $t$ rounds and
produces a good approximation of a minimum dominating set.

\section{Our contribution}
Our contribution is threefold:
First, we simplify the
arguments used by Czygrinow et al.\ and give a  more accessible
description of their algorithm.
Second, we identify the boundedness of~$\nabla_{\hspace{-1pt}1}(G)$ as the key property that makes the algorithm
work. Classes with only this restriction are less general than degenerate classes, but
 more general than bounded expansion classes.
 We generalize the algorithm to these general classes
 and prove that the pseudo-covering method cannot
be extended further, e.g.\ to classes of bounded degeneracy.
Last, we optimize the bounds that arise in
the algorithm in terms of several parameters.
Czygrinow et al.\ explicitly stated that they did not aim to
optimize any constants, and as presented, the constants in their
construction are enormous.
%
Even though the constants in our analysis
are still large, they are by magnitudes smaller than those in the
original presentation. The following is our first main theorem.

\begin{theorem}\label{thm:main-general}
Let $\nabla_1>0$ be an integer.
There exists a LOCAL algorithm that computes  in a constant number of rounds, for any input graph $G$ with $\nabla_{\hspace{-1pt}1}(G)\le \nabla_1$, a dominating set
of size~$\mathcal{O}\hspace{1pt}(\gamma(G))$, where
$\gamma(G)$ denotes the size of a minimum dominating set of $G$.
\end{theorem}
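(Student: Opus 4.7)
The plan is to generalize the pseudo-covering algorithm of Czygrinow et al.~\cite{czygrinow2018distributed} and, crucially, to replace their $K_{t,t}$-exclusion-based termination argument with a direct edge-density count that works whenever $\nabla_{\hspace{-1pt}1}(G)$ is bounded. Fix an unknown optimum dominating set $D^\star$ and, for each vertex $v$, a designated dominator $d(v)\in D^\star$; the fibers $d^{-1}(u)$ partition $V(G)$ into stars of radius at most $1$ around the $|D^\star|$ centers. The goal is to elect, in $\mathcal{O}(1)$ rounds, a set $D$ of size $\mathcal{O}(|D^\star|)$ that dominates $V(G)$, using only information available within a constant-radius neighborhood of each vertex.

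The algorithm proceeds in phases. In each \emph{pseudo-cover phase}, every vertex inspects a ball of constant radius around itself and identifies candidate vertices $v$ whose $2$-neighborhood contains ``many'' still-undominated targets, where ``many'' means above a threshold depending on $\nabla_1$. Such candidates are added to $D$ in parallel, and all the vertices they dominate are removed from further consideration. The key combinatorial lemma I would prove is that this phase can be repeated only $g(\nabla_1)$ times before the pseudo-cover becomes empty: if it ran longer, one could build a $1$-shallow minor on vertex set essentially $D^\star$ by contracting, around each selected candidate, a connected radius-$1$ subgraph witnessing the covered-target relation, and this minor would have edge density exceeding $\nabla_1$, contradicting the hypothesis. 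This is the step where boundedness of $\nabla_{\hspace{-1pt}1}$ is used in an essential way — only contractions of radius $1$ arise — and it is also where I expect the main technical work to sit, since one has to define the pseudo-cover so that the contracted stars remain vertex-disjoint and of radius $1$. The same lemma delimits the method: a bound on $\nabla_{\hspace{-1pt}0}$ alone (i.e.\ degeneracy) does not control such contractions, which is why the technique cannot be pushed further.

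After the pseudo-covering stage, every still-undominated vertex $v$ has the property that every candidate dominator in its $2$-neighborhood pseudo-covers only a bounded number of targets. For these ``low-profile'' vertices I would run a short mop-up routine: each such $v$ either puts itself into $D$ or elects a neighbor of $d(v)$ chosen deterministically from the locally visible identifiers. A second density count, again using $\nabla_{\hspace{-1pt}1}(G)\le \nabla_1$, shows that this round adds only $\mathcal{O}(|D^\star|)$ vertices. Since every decision depends on a neighborhood of radius bounded by a function of $\nabla_1$, the whole procedure finishes in $\mathcal{O}(1)$ rounds in the LOCAL model, and summing the $\mathcal{O}(|D^\star|)$ vertices added in each of the constantly many phases gives the claimed $\mathcal{O}(\gamma(G))$ bound. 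The remaining work is engineering: quantifying the threshold, the number of phases, and the final constant so that they depend only polynomially on $\nabla_1$, which the paper explicitly promises to carry out more carefully than~\cite{czygrinow2018distributed}.
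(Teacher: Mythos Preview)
Your outline diverges from the paper's actual three-phase structure in ways that matter, and one step cannot work as written.

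First, the paper does \emph{not} replace the $K_{s,t}$-exclusion termination argument by a $1$-shallow-minor density count inside the iterative pseudo-cover process. The $1$-shallow-minor argument appears in a separate \emph{preprocessing} phase (\cref{lenzen-improved}): it shows that only $\rho(G)\gamma$ vertices have an open neighborhood that cannot be dominated by $2\nabla-1$ others, and those vertices form $D_1$. After this, every remaining vertex $v$ has a small set $A_v$ covering $N_R(v)$, and only then does the pseudo-cover / domination-sequence machinery of Phase~2 make sense. The termination of the domination sequences (\cref{lem:max-dom-sequence}) still uses $K_{s,t}$-freeness; the observation is merely that one may take $s=t=2\lfloor\nabla_0\rfloor+1$, so a bound on $\nabla_1$ suffices to supply suitable $s,t$. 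Your sketch omits the preprocessing phase entirely, but without it there is no a-priori bound on $|A_v|$ and the pseudo-cover analysis has nothing to work with; the contraction argument you describe is doing the job of Phase~1, not of the iteration bound.

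Second, your mop-up routine is not a LOCAL algorithm: you have each residual vertex ``elect a neighbor of $d(v)$'', where $d(v)\in D^\star$ is its designated optimum dominator. But $D^\star$ is unknown, so no vertex can identify $d(v)$ or its neighbors. The paper's Phase~3 is quite different and non-trivial: after Phase~2 one proves that every vertex has residual degree at most a constant $\Delta_R$ (\cref{lem:smalldegree}); one then discards duplicate neighborhood-profiles over $R$ and uses a neighborhood-complexity bound (this is the second genuine use of $\nabla_1$, via \cref{cor:size-g-general}) to reduce to an instance of bounded \emph{maximum} degree; finally one runs the distributed LP solver of Kuhn et al.\ and the rounding of Bansal--Umboh (\cref{lem:ds-factor}) to obtain $D_3$ of size $(2\nabla_0+1)(1+\varepsilon)\gamma_R$. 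A ``second density count'' alone does not give you a locally computable small dominating set of the residual red vertices; it only certifies that one exists. Also, the final constants are of order $\kappa^{2s\kappa}$ with $\kappa,s=\Theta(\nabla_1)$, not polynomial in $\nabla_1$ as you suggest.
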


Note that the algorithm depends on the constant $\nabla_1$.
The reason for this is that we cannot locally compute or approximate
$\nabla_1(G)$
in a constant number of rounds. This is also
the case for the algorithm of Czygrinow et al., which works with the
assumption that the inputs exclude a complete graph $K_t$ with~$t$~vertices
as a topological minor, a property that can also not be verified locally.
Furthermore, the number of rounds depends on $\nabla_1$.

The algorithm is actually tuned using more parameters, like upper bounds on $\nabla_0(G)$ and
on integers $s$ and $t$ such that the complete bipartite graph $K_{s,t}$
is not subgraph of $G$, in order to  improve the approximation ratio of the algorithm.
However, all these parameters can be upper bounded in terms of $\nabla_1$.
When these parameters are given, the algorithm computes a
$(2(\nabla_0+1)((2\nabla_1)^{4s\nabla_1}+2)\gamma$ approximation
in a number of rounds that depends on $\nabla_1$ and $t$.


\medskip

Then, we modify and fine-tune the algorithm for graphs
excluding $K_{3,t}$ as a subgraph (and having~$\nabla_1$ bounded).
Important examples of graphs with this property are graphs that
can be embedded into a fixed surface of bounded genus. We prove
the following theorem.

\begin{theorem}\label{thm:K3t-free-total}
Let $\nabla_1>0$ and $t\geq 3$ be integers.
There exists a LOCAL algorithm and a function $C$ that for every
$K_{3,t}$-free graph $G$ with $\nabla_1(G)\leq \nabla_1$ and every
$\epsilon>0$, computes in~$C(\epsilon)$ rounds a dominating
set of size at most $(6\nabla_1+3)\gamma$.
\end{theorem}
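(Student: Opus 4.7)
The plan is to refine the pseudo-cover algorithm underlying \cref{thm:main-general} by exploiting the $K_{3,t}$-free assumption to collapse its iterative structure. The structural consequence of $K_{3,t}$-freeness is that any three vertices share at most $t-1$ common neighbors, so pseudo-covers supported on three or more centers each dominate only a bounded (in $t$) number of vertices. This means only $1$-pseudo-covers and $2$-pseudo-covers are effective, which eliminates the tower-of-exponentials factor $(2\nabla_1)^{4s\nabla_1}$ from the general bound and replaces it with the clean linear dependency $6\nabla_1+3$.

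I would first fix a (fictitious) optimum dominating set $D^*$ of size $\gamma$, and classify each non-dominator $v\in V(G)\setminus D^*$ by its \emph{projection} $N(v)\cap D^*$: private (singleton), paired (size two), or higher-order (size at least three). The private vertices are handled by paying at most $\gamma$ — one surrogate per element of $D^*$ selected greedily from $N[d]$. The higher-order vertices, by $K_{3,t}$-freeness, are at most $t-1$ per triple of centers, so their total mass is dominated by a controlled residual that the algorithm cleans up in a final local pass whose cost is absorbed into the round count $C(\epsilon)$.

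The paired vertices are the heart of the argument and where the $6\nabla_1$ comes from. For each pair $\{d_1,d_2\}\subseteq D^*$ whose common neighborhood is large (at least a threshold $\tau=\tau(\epsilon,\nabla_1,t)$), the algorithm commits two carefully chosen surrogate dominators, one from each side, via a local tie-breaking using vertex identifiers as in the pseudo-cover procedure of \cref{thm:main-general}. The combinatorial key is to bound the number of such heavy pairs: contract each large common neighborhood into a single vertex to obtain a $1$-shallow minor $H$ of $G$ whose edges encode precisely the heavy pairs among $D^*$; since $\nabla_{\!1}(G)\le \nabla_1$, this yields at most $\nabla_1\cdot|D^*|$ edges, hence at most $(2\nabla_1)\gamma$ surrogates from one side and $(4\nabla_1)\gamma$ from careful accounting of both sides together, summing with the private contribution to the stated $(6\nabla_1+3)\gamma$.

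The main obstacle is the interaction between the two sparsity hypotheses. The $\nabla_{\!1}$ bound controls the number of heavy pairs \emph{globally} in terms of $\gamma$, but locally the algorithm must decide without seeing $D^*$. The hard part is therefore showing that a constant-round procedure can correctly identify which pairs of real vertices qualify as heavy surrogates for some pair in $D^*$, without double-counting and without missing higher-order vertices. The $K_{3,t}$-freeness is exactly the lever that makes this local-to-global translation possible: any vertex projecting on three or more centers of $D^*$ lives in a common neighborhood of size at most $t-1$, so it is either dominated by one of the already-selected pair surrogates or falls into the small residual that the $\epsilon$-parametrized cleanup phase handles.
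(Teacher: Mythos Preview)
Your proposal is not a proof but a plan with the central step left open, and the plan itself has structural problems.

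First, the algorithm is never specified. Everything is phrased in terms of the unknown optimum $D^*$: ``one surrogate per element of $D^*$'', ``two carefully chosen surrogate dominators, one from each side'', ``identify which pairs of real vertices qualify as heavy surrogates for some pair in $D^*$''. You explicitly flag this (``the algorithm must decide without seeing $D^*$\ldots\ the hard part is therefore showing\ldots'') and then do not show it. That is the whole content of the theorem: a \emph{local} procedure with the stated guarantee. The paper's algorithm is concrete and oblivious to $D^*$: Phase~1 picks $D_1=\{v:\text{$N(v)$ needs more than $2\nabla-1$ dominators}\}$; Phase~2 picks $D_2=\bigcup_v(\{v\}\cup B_v)$ where $B_v=\{z:|N_R(v)\cap N_R(z)|\ge (2\nabla-1)t+1\}$; Phase~3 runs an LP-rounding approximation \`a la Bansal--Umboh after trimming to bounded degree. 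The $K_{3,t}$-freeness is used only to prove $B_v\subseteq A_v$ (\cref{lem:dominating-dominators}), which forces $|B_v|\le 2\nabla-1$ and gives the residual-degree bound of \cref{lem:res-degree}; the $(6\nabla_1+3)$ then drops out of $|D_1\cup D_2|<\rho(G)\gamma+2\nabla\eta\gamma$ and $|D_3|\le (2\nabla_0+1)(1+\epsilon)(1-\eta)\gamma$ with $\rho(G),2\nabla,2\nabla_0+1\le 2\nabla_1+1$.

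Second, your ``higher-order'' step fails quantitatively. Saying that each triple in $D^*$ has at most $t-1$ common neighbors bounds the number of such vertices by $(t-1)\binom{\gamma}{3}$, which is cubic in $\gamma$, not a ``controlled residual'' absorbable into the round count. You still have to \emph{dominate} those vertices with $O(\gamma)$ dominators, and nothing in your sketch does that. The paper avoids this issue entirely: after Phase~2 every vertex has residual degree at most $(2\nabla-1)^2t+(2\nabla-1)$, so whatever remains is handled uniformly by the LP phase with factor $2\nabla_0+1$.

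Third, the arithmetic does not close. You claim $2\nabla_1\gamma$ from one side, $4\nabla_1\gamma$ from both, and $\gamma$ from the private case; that sums to $(6\nabla_1+1)\gamma$ at best, with the ``$+3$'' unaccounted for, and the shallow-minor contraction you describe (``contract each large common neighborhood'') does not produce disjoint branch sets since a vertex can lie in many pairwise common neighborhoods. The paper's shallow-minor argument (\cref{lenzen-improved}) is different: it builds a \emph{bipartite} $1$-shallow minor on carefully disjoint branch sets to bound $|\hat D\setminus D|$, and the factor $\rho(G)\le 2\nabla_1+1$ enters through an independent-set extraction, not through counting pairs.
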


Since planar graphs satisfy $\nabla_1\leq 3$ and exclude $K_{3,3}$ as
a subgraph, from \cref{thm:K3t-free-total} we can derive an approximation
factor of $21$ for planar graphs. A more careful analysis leads to the
following theorem.

\begin{theorem}\label{thm:planar}
  There exists a LOCAL algorithm and a function $C$ that for every planar
  input graph~$G$ and $\epsilon>0$, computes in $C(\epsilon)$ rounds a dominating set
  of size at most \mbox{$(11+\e)\cdot\gamma(G)$}.
\end{theorem}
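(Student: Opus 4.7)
The plan is to specialize the algorithm of \cref{thm:K3t-free-total} to planar inputs and tighten its counting arguments using Euler's formula. Planar graphs satisfy $\nabla_0(G)\le 3$ and $\nabla_1(G)\le 3$, exclude $K_{3,3}$ as a subgraph, and moreover every bipartite planar subgraph satisfies $|E|\le 2|V|-4$. The factor $21$ obtained by plugging $\nabla_1=3$ into \cref{thm:K3t-free-total} comes from bounds that use only $\nabla_1$; it discards the bipartite-planar slack and the fact that the auxiliary density-witness graphs produced by the algorithm remain planar.

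First, I would audit the proof of \cref{thm:K3t-free-total} and separate its approximation factor into the contributions of (a)~vertices picked to dominate locally-easy targets (governed by $\nabla_0+1$), (b)~vertices selected through the pseudo-cover iterations (governed by the density of a $1$-shallow minor built on top of $G$), and (c)~a cleanup phase that accounts for the additive constant. For a planar~$G$, the auxiliary graphs appearing in (b) are minors of shallow minors of $G$ and hence planar themselves, and the relevant incidence structures between the output set and the optimum dominating set can be made bipartite by a standard splitting/side-identification trick. This replaces the generic estimate $|E|\le \nabla_1\cdot |V|$ by $|E|\le 2|V|-4$ in precisely the places where the factor blows up.

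Next, I would propagate these refined bounds through the approximation analysis, re-optimizing the trade-off between the three contributions above rather than tightening each in isolation: the count of ``projection classes'' a single optimum dominator can witness, the per-class charge from the pseudo-cover step, and the additive cleanup. The aim is to show that the three contributions sum to $11$ on planar graphs, matching the outerplanar lower tightness philosophy of~\cite{bonamy2021tight}. The $\epsilon$ slack comes from a cleanup loop: after the main procedure, a constant-round rule reduces a lower-order error term (vertices whose domination is still unaccounted for, e.g.\ ones in small local components or near low-degree projections) by a constant factor per iteration, so running $C(\epsilon)=O(\log(1/\epsilon))$ iterations drives this term below $\epsilon\cdot\gamma(G)$.

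The main obstacle I expect is the bookkeeping rather than any single inequality: \cref{thm:K3t-free-total} bundles the three sources of dominators together, and to reach $11$ one must verify that the planar/bipartite-planar strengthenings can be applied simultaneously in all three counting arguments without double-charging vertices, and that the auxiliary graphs constructed for the $K_{3,t}$-free analysis really do stay planar (not merely $\nabla_1$-bounded) after the pseudo-cover contractions. Handling the edge cases where the Euler bound $2|V|-4$ degenerates (parts with fewer than three vertices on one side) will require a small separate argument that those parts contribute only to the additive constant absorbed into the $\epsilon$ term.
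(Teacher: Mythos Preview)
Your high-level instinct---specialize the $K_{3,t}$-free analysis to planar inputs and exploit tighter density constants---is exactly what the paper does, but your proposal misidentifies which constants matter and how the $\epsilon$ arises, so as written it would not reach~$11$.

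The paper's proof is much shorter than you anticipate: it simply re-reads \cref{lem:size-D12} and \cref{lem:D3-LP} with planar-specific values. The two parameters that drive the improvement are the Hall ratio $\rho(G)\le 4$ (so $|D_1\cup D_2|<4\gamma+4\eta\gamma$) and the orientation out-degree bound $\nabla_0\le 3$ (so $|D_3|\le (7+\epsilon)(1-\eta)\gamma$); the sum $4+4\eta+(7+\epsilon)(1-\eta)$ is maximized at $\eta=0$, giving $11+\epsilon$. You correctly spot the bipartite-planar density $\nabla_1^B<2$, which enters as $\nabla=2$ and controls the threshold $2\nabla-1=3$ in the definition of $D_1$ and the coefficient $2\nabla$ on the $\eta\gamma$ term---but that term vanishes at the maximizing $\eta=0$, so it is not where the $21\to 11$ gain comes from. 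The decisive ingredient you do not mention is $\rho(G)\le 4$, which replaces the generic $2\nabla_1+1=7$ in the bound on $|\hat D\setminus D|$ from \cref{lenzen-improved}; without it you are stuck near $14+\epsilon$.

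Your description of the $\epsilon$ is also off. It is not an iterated cleanup that shrinks a residual error geometrically. It comes from the LP-based third phase (\cref{sec:LP}): after picking the $O(\epsilon\gamma_R)$ vertices of degree above a threshold $\Gamma(\epsilon)$, the remaining graph has bounded maximum degree, and the distributed LP solver of Kuhn et al.\ delivers a $(1+\epsilon)$-approximate fractional solution in $O\big((\log\Gamma/\log(1+\epsilon))^2\big)$ rounds, which \cref{lem:ds-factor} rounds to an integral set of size at most $(2\nabla_0+1)(1+\epsilon)\gamma_R$. That is where $C(\epsilon)$ comes from. Finally, there is no pseudo-cover bookkeeping to worry about in the $K_{3,t}$-free regime: Phase~2 is just ``select every pair $v,z$ with at least $(2\nabla-1)t+1$ common red neighbors'' (\cref{sec:D2}), and the Euler-degeneration edge cases you flag never arise.
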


We further analyze our algorithm on restricted classes of planar graphs and
improve the upper bounds in several cases (see \cref{tab:approx_factor}).

\begin{table}[h!t]
	\begin{tabular*}{\textwidth}{|c|c|c|c|}
		\hlx{hv}
Graph class&Lower bound&Previous upper bound&Our upper bound\\
\hlx{vhhv}
Planar graphs&\hfill$7-\e$\hfill\cite{hilke2014brief}&\hfill52\hfill\cite{wawrzyniak2014strengthened}&$11+\e$\\
Triangle-free planar graphs&&\hfill$32$\hfill\phantom{5}\cite{alipour2020distributed}&$8+\e$\\
Bipartite planar graphs&&&$7+\e$\\
Outerplanar graphs&\hfill$5$\hfill\phantom{3}\cite{bonamy2021tight}&\hfill$5$\hfill\phantom{5}\cite{bonamy2021tight}&$(8+\e)$\\
Girth $\ge 5$ planar graphs&&\hfill$18$\hfill\phantom{5}\cite{alipour2020local}&$7$\\
\hlx{vh}
	\end{tabular*}
	\caption{Approximation factors for a LOCAL approximation of $\gamma(G)$ is a constant number of rounds. Our algorithm improves the approximation factors in all these cases, except for the class of outerplanar graphs.
	}
	\label{tab:approx_factor}
\end{table}


\medskip
Before we go into the technical details, let us give an overview of the
algorithm. The algorithm works in three phases. Each phase ($i\in \{1,2,3\}$) computes a small set $D_i$ that is added to the output dominating set
(where by a  small set we mean a set whose size is linear in $\gamma$).

The first phase is a preprocessing phase, which was similarly employed in the
algorithm of Lenzen et al.~\cite{lenzen2013distributed}. In a key lemma,
Lenzen et al.\ proved that for planar graphs
there are only few vertices whose open neighborhood
cannot be dominated by at most six vertices. This lemma generalizes
to graphs $G$ where $\nabla_{\hspace{-1pt}1}(G)$ is bounded as
shown in~\cite{amiri2019distributed} (where six is replaced by a different constant depending
on $\nabla_{\hspace{-1pt}1}(G)$). We improve this general lemma and
derive in particular that in the case of planar graphs there are only few
vertices whose open neighborhood cannot be
dominated by \emph{three} other vertices.
We pick these few vertices as the set $D_1$,
remove them from~$G$ and mark all their neighbors as dominated.
Hence, after the first phase the open neighborhoods of all
remaining vertices can be dominated by a constant number of
other vertices.

In the second phase, 
we compute concurrently for each vertex $v$ all the so-called
\emph{domination sequences} $v_1,\ldots, v_s$ starting at $v$ (see
 \cref{def:dom-sequence} for a formal definition). The analysis of this phase is based
on the construction of
\emph{pseudo-covers} as in the work of Czygrinow et al.~\cite{czygrinow2018distributed} and in the approach of greedy domination
in biclique-free graphs~\cite{siebertz2019greedy}. The domination
sequences intuitively
provide a tool to carry out a fine-grained analysis of the vertices that
can potentially dominate the remaining non-dominated neighborhoods.
All the vertices $v_s$ are gathered in the set~$D_2$ and are removed from $G$, with all their neighbors marked as dominated. For $K_{3,t}$-free graphs, we slightly modify the
algorithm and provide an even finer analysis.

Call  the number  $\dr(v)$ of non-dominated neighbors of a vertex $v$
the \emph{residual degree} of $v$. We prove that
after the second phase we are left with a graph where every vertex
has residual degree at most~$\Dr$ for a constant $\Dr$. In particular, every vertex
from a minimum dominating set of size $\gamma$ can dominate at most $\Dr+1$
non-dominated vertices (each
vertex dominates its neighbors and itself) and we conclude that the set $R$
of non-dominated vertices has size bounded by~$(\Dr+1)\gamma$ . Hence, we could at this
point pick all non-dominated vertices to add at most $(\Dr+1)\gamma$ vertices
and conclude. Instead, we study two different ways to proceed with a third phase.

Our first option for the third phase is to apply an LP-approximation based on
results of Bansal and Umboh~\cite{bansal2017tight}, who showed
that a very simple selection procedure leads to a constant factor
approximation when the solution to the dominating set linear
program (LP) is given. As shown by Kuhn et al.~\cite{kuhn2006price} we can approximate such a solution in a
constant number of rounds when the maximum degree~$\Delta$
of the graph is bounded.
To apply these results, we have to overcome two obstacles. First,
note that even though we have established that the maximum residual
degree is bounded by a constant~$\Dr$, we may still have unbounded maximum
degree~$\Delta$. We overcome this problem by keeping only a few
representative potential dominators around the set $R$ of non-dominated
vertices. By a simple
density argument, there can be only very few high degree vertices left
that we simply select into the dominating set. As a result, we are left with
a graph where $\Delta$ is bounded by a constant. The second obstacle,
which is easily overcome, is that we do not need to dominate the
whole remaining graph but only the set $R$. This requires a small
adaptation of the LP-formulation of the problem and a proof that the
algorithm of Bansal and Umboh still works for this slightly different setting.
In total, in this version of the third phase of the algorithm, we add at most $\Oof(\gamma)$
vertices.

Our second option for the third phase is to design a  distributed version of the
classical greedy algorithm.
We proceed in a greedy manner in $d$ rounds, as follows (where
$d$ is a bound on the maximum residual degree $\Dr$ of the graph after phase 2).
In the first round, if a non-dominated
vertex has a neighbor of residual degree~$d$, it elects one such neighbor into
the dominating set (or if it has residual degree~$d$ itself, it may choose itself).
The neighbors of the chosen elements are
marked as dominated and the residual degrees are updated. Note that
all non-dominated neighbors of a vertex of residual degree~$d$
in this round select a dominator. Hence, the residual degrees
of all vertices of residual degree~$d$ are decreased to~$0$ and, after
this round, there are no vertices of residual degree $d$ left.
In the second round, if a non-dominated vertex has a neighbor
of residual degree~$d-1$, it elects
one such vertex into the dominating set, and so on, until after $d$ rounds
in the final round every vertex selects a dominator. Unlike in the general
case, where
nodes cannot learn the current maximum residual degree in a constant
number of rounds, by establishing
an upper bound on the maximum residual degree and proceeding in exactly
this number of rounds, we ensure that we iteratively exactly selects the
vertices of maximum residual degree. For the case of planar graphs,
we prove that $\Dr\leq 30$. It remains to analyze the performance
of this algorithm.

A simple density argument shows
that there cannot be too many vertices of degree $i\geq 2\nabla_{\hspace{-1pt}0}(G)$ ($i\geq 6$ in a planar graph). At a first glance it seems that the algorithm would perform worst
when in every of the $\dr$ rounds it would pick as many vertices as possible,
as the constructed dominating set would grow as much as possible. However,
this is not the case, as picking many high degree vertices at the same time makes
the largest progress towards dominating the whole graph. It turns
out that there is a delicate balance between the vertices that we pick
in round $i$ and the remaining non-dominated vertices that leads
to the worst case.
For planar graphs in total, this
leads to a 20\hspace{1pt}\raisebox{0.3pt}{-}\hspace{0.9pt}approximation.
While the greedy algorithm falls short of achieving the best approximation
factor, it is much simpler than the LP-based approach, and interesting to
analyze in its own right.

%



\section{Preliminaries}

In this paper we study the distributed time complexity of finding
dominating sets in undirected and simple graphs in the classical
LOCAL model of distributed computing.
We assume familiarity with this model and refer to the survey~\cite{suomela2013survey} for an extensive
background on distributed computing and on the LOCAL model.

We use standard notation from graph theory and refer to the
textbook~\cite{diestel} for extensive background.  All graphs in this
paper are undirected and simple. We write $V(G)$ for the vertex set of
a graph $G$ and~$E(G)$ for its edge set. The \emph{girth} of a graph
$G$ is the length of a shortest cycle in~$G$. A graph is 
\emph{triangle-free} if it does not contain a triangle (that is, a
cycle of length three) as a subgraph. Equivalently, a triangle-free
graph is a graph with  girth at least four.

A graph is \emph{bipartite} if its vertex set can be partitioned into
two parts such that all its edges are incident with two vertices from
different parts. We write $K_{s,t}$ for the complete bipartite
graph with parts of size $s$ and $t$, respectively. A set $A$ is
\emph{independent} if no two vertices $u,v\in A$ are
adjacent. We write $\alpha(G)$ for the size of the largest
independent set in a graph~$G$. The \emph{Hall ratio} $\rho(G)$ of
$G$ is defined as $\max\bigl\{|V(H)|/\alpha(H)\mid H\subseteq G\bigr\}$.
Hence, every subgraph $H$ of $G$ contains an independent set of
size at least $|V(H)|/\rho(G)$.


A graph~$H$ is a \emph{minor} of a
graph~$G$, written~$H\minor G$, if there is a set
\mbox{$\{G_v :v\in V(H)\}$} of  vertex disjoint and connected
subgraphs $G_v\subseteq G$ such that if~$\{u,v\}\in E(H)$, then there
is an edge between a vertex of~$G_u$ and a vertex of~$G_v$. We 
say that 
$V(G_v)$ is the \emph{branch set} of $v$ and say that it is
\emph{contracted} to the vertex~$v$. 
The set~\mbox{$\{G_v :v\in V(H)\}$} is a \emph{minor model} of $H$.
The  \emph{depth} of a minor model is the maximum radius of one of its branch sets.
 We call $H$ a \emph{$1$-shallow
  minor}, written~$H\minor_1 G$, if $H\minor G$ and there is a minor
model \mbox{$\{G_v :v\in V(H)\}$} with depth at most $1$ witnessing this.
In other
words, $H\minor_1 G$ if $H$ is obtained from $G$ by deleting some
vertices and edges and then contracting a set of pairwise disjoint
stars. We refer to~\cite{nevsetvril2012sparsity} for an in-depth study
of the theory of sparsity based on shallow minors.

A graph is \emph{planar} if it can be embedded in the plane, that is,
it can be drawn on the plane in such a way that its edges intersect
only at their endpoints. By the famous theorem of Wagner, planar
graphs can be characterized as those graphs that exclude the complete
graph $K_5$ on five vertices and the complete bipartite $K_{3,3}$ with
parts of size three as a minor. In particular, a minor of a planar graph
is again planar. 

A graph is \emph{outerplanar} if it has an embedding in the plane such
that all vertices belong to the unbounded face of the embedding.
Equivalently, a graph is outerplanar if it does not contain the
complete graph $K_4$ on four vertices and the complete bipartite graph
$K_{2,3}$ with parts of size~$2$ and~$3$, respectively, as a minor. 
Again, a minor of an outerplanar graph is again outerplanar. 

By Euler's formula, planar graphs are sparse: every planar $n$-vertex
graph ($n\geq 3$) has at most $3n-6$ edges (and a graph with at most
two vertices has at most one edge).
The ratio $|E(G)|/|V(G)|$ is
called the \emph{edge density} of $G$.
In particular, every planar
graph $G$ has edge density strictly smaller than three.
We define
\begin{align*}
\nabla_0(G)&=\max\Bigl\{\ \frac{|E(H)|}{|V(H)|}\ \Bigr|\  H\subseteq G\Bigr\},\\
\nabla_1(G)&=\max\Bigl\{\ \frac{|E(H)|}{|V(H)|}\ \Bigr| \  H\minor_1 G\Bigr\},\\
\nabla_1^B(G)&=\max\Bigl\{\ \frac{|E(H)|}{|V(H)|}\ \Bigr| \  H\minor_1 G, H\text{ bipartite}\Bigr\}.
\end{align*}

  It is immediate 
that $\nabla_0(G) \le \nabla_1^B(G)\le\nabla_1(G)$. Note that every graph $G$ (and each of its subgraphs)
contains a vertex with degree at most $2\nabla_0(G)$. By iteratively
removing a minimum degree vertex and its neighbors, we can  find a 
large independent set, as stated in the next lemma. The bounds for graphs
on surfaces are well known.

%
%
\begin{lemma}\label{lem:bounds}
For all graphs $G$ we have $\rho(G)\leq 2\nabla_0(G)+1$. Furthermore,
  \begin{enumerate}
    \item if $G$ is planar, then $\nabla_0(G)<3$, $\nabla_1^B(G)<2$ and $\rho(G)\le 4$;
    \item if $G$ is outerplanar, or planar and triangle free, then $\nabla_0(G)<2$ and $\rho(G)\le 3$;
    \item if $G$ is planar and bipartite, then $\nabla_0(G)<2$ and $\rho(G)\le 2$.
  \end{enumerate}
\end{lemma}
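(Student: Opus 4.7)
The plan is to first prove the general inequality $\rho(G)\le 2\nabla_0(G)+1$ by a greedy argument, and then deduce the specific bounds in (1)--(3) from Euler's formula together with well-known chromatic-number results.

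For the general bound, I would observe that $\nabla_0(H) \le \nabla_0(G)$ for every subgraph $H\subseteq G$, which means the average degree of $H$ is at most $2\nabla_0(G)$, so $H$ contains a vertex $v$ of degree at most $2\nabla_0(G)$. Starting from any $H \subseteq G$, I iteratively pick such a minimum-degree vertex, add it to an independent set $I$, and delete its closed neighborhood $N[v]$, removing at most $2\nabla_0(G)+1$ vertices. Because $H \setminus N[v]$ is again a subgraph of $G$, the process can be repeated until $H$ is exhausted, yielding an independent set of size at least $|V(H)|/(2\nabla_0(G)+1)$; taking the maximum over all $H\subseteq G$ gives $\rho(G)\le 2\nabla_0(G)+1$.

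The $\nabla_0$ bounds in (1)--(3) all follow from Euler's formula: a planar graph on $n\ge 3$ vertices has at most $3n-6$ edges; if it is additionally triangle-free or bipartite the bound sharpens to $2n-4$; and an outerplanar graph on $n\ge 2$ vertices has at most $2n-3$ edges. In each case the class is hereditary, so the ratio bound transfers to $\nabla_0$ and is strict because of the additive constants. For $\nabla_1^B(G)<2$ on planar $G$, I would use that the planar class is closed under taking $1$-shallow minors (contractions and deletions preserve planarity), so a bipartite $1$-shallow minor is itself a bipartite planar graph and the same $|E|\le 2|V|-4$ inequality applies.

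The Hall-ratio bounds below $2\nabla_0+1$ cannot be obtained from edge density alone (the general inequality would only give $7$, $5$ and $5$ in cases (1), (2) and (3), respectively). Instead I would invoke $\alpha(H)\ge |V(H)|/\chi(H)$ on each subgraph, together with classical coloring theorems: the Four Color Theorem for planar graphs ($\rho\le 4$); Gr\"otzsch's theorem for triangle-free planar graphs ($\chi\le 3$, hence $\rho\le 3$); $2$-colorability for bipartite planar graphs ($\rho\le 2$); and the folklore fact that every outerplanar graph has a vertex of degree at most $2$, giving $2$-degeneracy and hence $\chi\le 3$ and $\rho\le 3$. The only heavy ingredient here is the Four Color Theorem; everything else is an elementary application of Euler's formula or a standard coloring result, in line with the authors' comment that the bounds for graphs on surfaces are well known.
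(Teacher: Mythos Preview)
Your proposal is correct and matches the paper's approach: the paper sketches exactly the greedy argument you give for $\rho(G)\le 2\nabla_0(G)+1$ in the paragraph preceding the lemma (``iteratively removing a minimum degree vertex and its neighbors'') and then simply declares the surface bounds ``well known'' without further proof. Your elaboration via Euler's formula for the density bounds and via $\rho\le\chi$ together with the Four Color Theorem, Gr\"otzsch's theorem, $2$-degeneracy of outerplanar graphs, and $2$-colorability of bipartite graphs is the standard way to justify these well-known facts, and nothing more is needed.
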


For a graph $G$ and $v\in V(G)$ we write
$N(v)=\{u\colon\{u,v\}\in E(G)\}$ for the \emph{open neighborhood} of $v$
and $N[v]=N(v)\cup\{v\}$ for the \emph{closed neighborhood}
of~$v$. For a set $A\subseteq V(G)$ let $N[A]=\bigcup_{v\in A}N[v]$.
A~\emph{dominating set} in a graph~$G$ is a set $D\subseteq V(G)$ such
that $N[D]=V(G)$. We write~$\gamma(G)$ for the size of a minimum
dominating set of $G$. For a set $R\subseteq V(G)$ we say that a set
$Z\subseteq V(G)$ \emph{dominates} or \emph{covers}
$R$ if $R\subseteq N[Z]$. For $v\in V(G)$ we let 
$\Nr(v)=N(v)\cap R$ and $\dr(v)=|\Nr(v)|$.


An \emph{orientation} of a graph $G$ is a directed graph $\vec{G}$ that
 has exactly one of the arcs $(u,v)$ and
$(v,u)$
for each edge $\{u,v\}\in E(G)$. The \emph{out-degree} $d^+(v)$ of a vertex $v$ in  $\vec{G}$
is the number of arcs leaving~$v$. The following
lemma is implicit in the work of Hakimi~\cite{sup_orie}, see \mbox{also~\cite[Proposition 3.3]{nevsetvril2012sparsity}.}

\begin{lemma}\label{lem:orientations}
  Every graph $G$ has an
  orientation with maximum out-degree $\nabla_0$.
\end{lemma}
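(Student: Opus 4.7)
My plan is to derive the result from Hakimi's classical orientation theorem: $G$ admits an orientation with maximum out-degree at most~$d$ if and only if $|E(H)|\leq d\cdot |V(H)|$ for every nonempty subgraph $H\subseteq G$. Since $\nabla_0(G)$ is defined as the largest edge density among all subgraphs of $G$, the density hypothesis is satisfied automatically with $d=\nabla_0(G)$ (interpreted as an integer bound where needed), so it remains to prove the nontrivial sufficiency direction.

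For that direction I would use an exchange / path-reversal argument. Start with any orientation of $G$ and, among all orientations minimizing the maximum out-degree, pick one that also minimizes the number of vertices attaining this maximum. Let $k$ denote this maximum and suppose for contradiction that $k>\nabla_0(G)$. Choose a vertex $v$ with $d^+(v)=k$ and let $S\subseteq V(G)$ be the set of vertices reachable from $v$ via directed paths in $\vec G$. If some $u\in S$ satisfies $d^+(u)\leq k-2$, take a directed $v$-to-$u$ path and reverse all of its arcs: this lowers $d^+(v)$ by one, raises $d^+(u)$ by one, and leaves the out-degrees of all internal vertices of the path unchanged, contradicting the minimality of the chosen orientation. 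Otherwise every vertex of $S$ has out-degree at least $k-1$ and, since by definition of $S$ every arc leaving a vertex of $S$ has its head back in $S$, we get
\[|E(G[S])|\;=\;\sum_{u\in S}d^+(u)\;\geq\;(k-1)\,|S|+1,\]
which contradicts the density bound $|E(G[S])|\leq \nabla_0(G)\cdot|S|$ coming from the definition of $\nabla_0(G)$.

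The main obstacle is really just keeping the case analysis clean: checking that reversing a directed path changes only the out-degrees of its two endpoints, and verifying that the ``dense'' case produces a strict contradiction against $\nabla_0(G)$. A minor subtlety is the interplay between the integrality of $k$ and the possibly non-integer value of $\nabla_0(G)$; this is handled by taking $d$ to be the smallest integer that still satisfies Hakimi's density hypothesis, which the definition of $\nabla_0(G)$ supplies for free.
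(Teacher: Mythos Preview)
Your argument is correct: it is the standard path-reversal proof of Hakimi's orientation theorem, and the counting in the ``dense'' case is clean (all out-arcs from $S$ stay in $S$, so $|E(G[S])|=\sum_{u\in S}d^+(u)\ge (k-1)|S|+1$). The integrality subtlety you raise is genuine but disappears in the paper's setting, since there $\nabla_0$ is explicitly fixed as an \emph{integer} with $\nabla_0\ge\nabla_0(G)$; assuming $k>\nabla_0$ then forces $k\ge\nabla_0+1$, and your inequality yields $|E(G[S])|\ge \nabla_0|S|+1>\nabla_0(G)\,|S|$, an immediate contradiction.

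As for comparison: the paper does not actually give a proof of this lemma. It simply attributes the statement to Hakimi and points to \cite[Proposition~3.3]{nevsetvril2012sparsity}. Your write-up is therefore strictly more informative than what appears in the paper; it supplies exactly the argument those references contain.
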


We immediately deduce the next corollary.

\begin{corollary}\label{cor:planar-orientations}
  Let $G$ be a planar graph. Then
  \begin{enumerate}
    \item $G$ has an orientation with maximum out-degree $3$.
    \item If $G$ is triangle-free or outerplanar, then
    $G$ has an orientation with maximum out-degree~$2$.
  \end{enumerate}
\end{corollary}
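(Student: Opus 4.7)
The plan is to deduce the corollary as a direct consequence of \cref{lem:orientations} applied with the density bounds from \cref{lem:bounds}, after a small integrality remark.

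First I would observe that \cref{lem:orientations} is really the Hakimi inequality: $G$ admits an orientation with maximum out-degree at most $k$ as soon as $|E(H)|\le k\,|V(H)|$ for every subgraph $H\subseteq G$, equivalently as soon as $\lceil \nabla_0(G)\rceil \le k$. The proof of the corollary then reduces to checking, for each graph class, that the density bound of \cref{lem:bounds} forces $|E(H)|\le k\,|V(H)|$ for the announced value of $k$.

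For part (1), by \cref{lem:bounds} every subgraph $H$ of a planar graph $G$ satisfies $\nabla_0(H)<3$; hence $|E(H)|\le 3|V(H)|-6\le 3|V(H)|$ (trivially true for $|V(H)|\le 2$), so \cref{lem:orientations} yields an orientation with maximum out-degree at most $3$. For part (2), if $G$ is triangle-free planar then $|E(H)|\le 2|V(H)|-4\le 2|V(H)|$ for every subgraph $H$, and if $G$ is outerplanar then $|E(H)|\le 2|V(H)|-3\le 2|V(H)|$; both are restated in \cref{lem:bounds} as $\nabla_0(G)<2$. In either case \cref{lem:orientations} produces an orientation with maximum out-degree at most $2$.

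There is no real obstacle here; the only subtlety worth stating explicitly is the passage from the strict inequality $\nabla_0(G)<3$ (respectively $<2$) to an integer out-degree bound, which is immediate because $|E(H)|$ and $|V(H)|$ are integers and the Euler-type inequalities $|E(H)|\le 3|V(H)|-6$ and $|E(H)|\le 2|V(H)|-c$ already provide the required integral slack. Thus both items follow in one line each from the two preceding lemmas.
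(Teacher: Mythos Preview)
Your proposal is correct and matches the paper's approach: the paper simply states ``We immediately deduce the next corollary'' from \cref{lem:orientations} and \cref{lem:bounds}, without writing out a proof. Your explicit treatment of the integrality passage from the strict density inequalities to an integer out-degree bound is a helpful clarification but not a departure from the intended one-line deduction.
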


\begin{tcolorbox}
  \begin{tabular}{p{.45\textwidth}p{.5\textwidth}}
  	  In the following we fix&and (possibly defined in terms of $\nabla_1$)\\
  	  \hlx{vv}
  	\begin{itemize}
  		\item the input graph $G$,
  		\item a minimum dominating set~$D$,
  		\item $\gamma \coloneqq |D|$,
  		\item the parameter $\nabla_1\ge \nabla_1(G)$,
  	\end{itemize}
  	&
  	\begin{itemize}
  		\item the parameter $\nabla_0\in[\nabla_0(G),\nabla_1]$,
  		\item the parameter $\nn\in (\nabla_1^B(G),\nabla_1+1]$,
  		\item the parameters $s\leq t$ with $K_{s,t}\not\subseteq G$,
  	\end{itemize}\\
  \hlx{vv}
  \end{tabular}\\
  where $D$ and $\gamma$ are used only to analyze the performance of the algorithm, and where all the  parameters are integers.
\end{tcolorbox}

Above, $[a,b]$ denotes the closed real interval containing all 
$x$ with $a\leq x\leq b$ and $(a,b]$ denotes the half-open 
interval containing all $x$ with $a<x\leq b$. We can choose
$s=t=2\lfloor\nabla_0(G)\rfloor+1$.



\section{Phase 1: Preprocessing}\label{sec:step1}
\subsection{Small neighborhoods dominators}
As outlined in the introduction, our algorithm works in three phases.
In phase~$i$ for $1\leq i\leq 3$ we select a partial dominating set
$D_i$ and estimate its size in comparison to $D$. In the end we will
return $D_1\cup D_2\cup D_3$. We will call vertices have been selected
into a set $D_i$ \emph{green}, vertices that are dominated by a green
vertex but are not green themselves are called \emph{yellow} and all
vertices that still need to be dominated are called \emph{red}. In the
beginning, all vertices are marked red.

The first phase of our algorithm is similar to the first phase of the
algorithm of Lenzen et al.~\cite{lenzen2013distributed} for planar graphs.
It is a preprocessing step that leaves us with only vertices whose
neighborhoods can be dominated by a few other vertices. Lenzen et al.\
proved that if $G$ is planar, then there exist less than $3\gamma$
many vertices~$v$ such
that the open neighborhood $N(v)$ of $v$ cannot be dominated by $6$
vertices of $V(G)\setminus \{v\}$~\mbox{\cite[Lemma
6.3]{lenzen2013distributed}}. The lemma can be generalized to more
general graphs, see~\cite{amiri2019distributed}. We prove the
following lemma, which is stronger in the sense that the number of
vertices required to dominate the open neighborhoods is smaller than
in~\cite{lenzen2013distributed} and~\cite{amiri2019distributed}, at the cost of having slightly more vertices with that property.

\begin{lemma}\label{lenzen-improved}
  Let $\hat{D}$ be the set of vertices $v\in V(G)$ whose neighborhood
  cannot be dominated by $(2\nn-1)$ vertices of $D$ other than $v$,
  that is,

  \vspace{-4mm}
  \[
    \text{ $\hat{D}\coloneqq \{v\in V(G) :$ for all sets
      $A\subseteq D\setminus \{v\}$ with $N(v)\subseteq N[A]$ we have
      $|A|> (2\nn-1)\}$.}
  \]

  Then $|\hat{D}\setminus D| < \rho(G)\cdot\gamma$.
\end{lemma}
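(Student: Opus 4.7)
The proof is by contradiction: I would assume $|\hat D\setminus D|\geq \rho(G)\cdot \gamma$. By the definition of the Hall ratio, the induced subgraph $G[\hat D\setminus D]$ then contains an independent set $I$ of size at least $|\hat D\setminus D|/\rho(G)\geq \gamma$; after trimming, I may assume $|I|=\gamma$. Since $I\cap D=\emptyset$, the condition $v\in\hat D$ for $v\in I$ translates to: every $A\subseteq D$ with $N(v)\subseteq N[A]$ has $|A|\geq 2\nn$. Consequently, the set $A_v\coloneqq N[N(v)]\cap D$ satisfies $|A_v|\geq 2\nn$ for every $v\in I$, because every minimum dominator of $N(v)$ from $D$ is contained in $A_v$.

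Next I would realise, as a $1$-shallow bipartite minor $H$ of $G$, the abstract bipartite graph $J$ on vertex set $I\cup D$ with edge set $\{\{v,d\}: v\in I,\ d\in A_v\}$. Since $|E(J)|\geq 2\nn\gamma$ and $|V(J)|\leq 2\gamma$, this gives $|E(H)|/|V(H)|\geq \nn$, contradicting $\nabla_1^B(G)<\nn$. For the construction, I would set $B_v=\{v\}$ for every $v\in I$, and $B_d=\{d\}\cup T_d$ for every $d\in D$, where $T_d\subseteq N(d)\setminus(I\cup D)$ is a set of midpoints assigned to $d$. Each branch set then has radius at most~$1$. Pairs $(v,d)$ with $d\in A_v\cap N(v)$ contribute an edge automatically, since $\{v,d\}\in E(G)$. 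For $d\in A_v\setminus N(v)$ a witness $u\in N(v)\cap N(d)$ exists; independence of $I$ forces $u\notin I$, and placing $u$ in $T_d$ yields the minor edge $\{B_v,B_d\}$.

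The main hurdle will be keeping the $T_d$'s pairwise disjoint when a single midpoint $u$ is the witness of several pairs $(v,d)$ and $(v',d')$. I would handle this in two layers. First, for a fixed $v$ a minimum dominator $M_v\subseteq A_v$ of $N(v)$ provides an injective private-neighbour map $\phi_v\colon M_v\to N(v)$, so within one $v$ no two chosen midpoints coincide. Second, across different $v\in I$ the independence of $I$ prevents cascading overlaps, and a case distinction between midpoints lying in $V(G)\setminus(I\cup D)$ and midpoints lying in $D$ (the latter contributing instead as a direct edge $\{v,u\}$ in the minor with $u\in D$) ensures that every $v\in I$ retains at least $2\nn$ neighbours in $H$. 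Inserting this into the density estimate yields $|E(H)|/|V(H)|\geq \nn>\nabla_1^B(G)$, contradicting the definition of $\nabla_1^B(G)$ and thereby proving the lemma.
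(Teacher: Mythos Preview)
Your overall strategy matches the paper's exactly: pass to an independent set $I\subseteq\hat D\setminus D$ of size $\gamma$ via the Hall ratio, then build a bipartite $1$-shallow minor on parts $I$ and $D$ with at least $2\nn\gamma$ edges on $2\gamma$ vertices, contradicting $\nabla_1^B(G)<\nn$.

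The difference, and the gap, is in how you build the branch sets on the $D$ side. Your plan is to pick, for each pair $(v,d)$ with $d\in M_v$, a private witness $\phi_v(d)\in N(v)$ and place it in $T_d$. The problem is the cross-$v$ conflict you flag but do not actually resolve: a vertex $u\in N(v)\cap N(v')$ can simultaneously be $\phi_v(d)$ and $\phi_{v'}(d')$ with $d\neq d'$, and nothing in the independence of $I$ rules this out (independence only guarantees $u\notin I$, not that $u$ has a unique relevant dominator). Your ``case distinction'' and the assertion that ``independence of $I$ prevents cascading overlaps'' do not supply a mechanism that keeps $2\nn$ distinct minor-neighbours for every $v$ once such conflicts are broken one way or the other.

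The paper sidesteps this entirely by defining the branch sets globally rather than edge-by-edge: enumerate $D=\{b_1,\ldots,b_\gamma\}$ and assign every vertex $u\notin I$ to the branch set of the \emph{first} $b_j$ dominating it. Disjointness is then automatic. The degree bound follows directly: for each $v\in I$, every $u\in N(v)$ lies in some $B_j$ (here $u\notin I$ by independence of $I$), and if fewer than $2\nn$ indices $j$ were hit, the corresponding $b_j$'s would already dominate $N(v)$, contradicting $v\in\hat D$. No private-neighbour bookkeeping is needed; replacing your second and third paragraphs by this single global assignment closes the argument.
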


Remember that $\nn$ is an integer strictly larger than $\nabla_1^B(G)$, the
edge density of a densest bipartite $1$-shallow minor of $G$.
Additionally
$\rho(G)\le\chi(G)\leq 2\nabla_0(G)+1\leq 2\nabla_1+1$. The precise values
will be relevant for the planar case.

\begin{proof}
  Assume $D=\{b_1,\ldots,b_\gamma\}$.  Assume that there are
  $\rho(G)\cdot\gamma$ vertices
  $a_1,\ldots,a_{\rho(G)\cdot\gamma}\not\in D$ satisfying the above
  condition. Be definition of the Hall ratio we
  find an independent subset of the $a_is$ of size~$\gamma$. We can
  hence assume that $a_1,\ldots,a_{\gamma}$ are not connected by an
  edge. We proceed towards a contradiction.

  We construct a bipartite $1$-shallow minor $H$ of $G$ with the
  following \mbox{$2\gamma$} branch sets. For every
  \mbox{$i\le \gamma$} we have a branch set $A_i=\{a_i\}$ and a branch
  set
  $B_i=N[b_i]\setminus \big(\{a_1,\ldots, a_{\gamma}\}\cup
    \bigcup_{j<i}N[b_j]$ $\cup \{b_{i+1},\ldots, b_\gamma\}\big)$. Note that
  the $B_i$ are vertex disjoint and hence we define proper branch
  sets. Intuitively, for each vertex $v\in N(a_i)$ we mark the
  smallest $b_j$ that dominates $v$ as its dominator. We then contract
  the vertices that mark $b_j$ as a dominator together with $b_j$ into
  a single vertex. Note that because the $a_i$ are independent, the
  vertices $a_i$ themselves are not associated to a dominator as no
  $a_j$ lies in $N(a_i)$ for~$i\neq j$.  Denote by
  $a_1',\ldots, a_{\gamma}',b_1',\ldots, b_\gamma'$ the associated
  vertices of $H$. Denote by $A$ the set of the $a_i's$ and by $B$ the
  set of the~$b_j's$.  We delete all edges between vertices of
  $B$. The vertices of $A$ are independent by construction. Hence, $H$
  is a bipartite $1$-shallow minor of $G$.  By the assumption that
  $N(a_i)$ cannot be dominated by $2\nn-1$ elements of $D$, we
  associate at least~$2\nn$ different dominators with the vertices of
  $N(a_i)$. Note that this would not necessarily be true if~$A$ was
  not an independent set, as all $a_j\in N(a_i)$ would not be
  associated a dominator.

  Since $\{b_1,\ldots, b_\gamma\}$ is a dominating set of $G$ and by
  assumption on $N(a_i)$, we have that in $H$, every~$a'_i$ has at
  least $2\nn$ neighbors in $B$. Hence,
  $|E(H)| \ge 2\nn|V(A)| = 2\nn\gamma$. As $|V(H)|=2\gamma$ we
  conclude $|E(H)|\ge \nn|V(H)|$. This however is a contradiction,
  as by assumption~$\nn$ is strictly larger than $\nabla_1^B(G)$ the edge
  density of a densest
  bipartite $1$-shallow minor of $G$.
\end{proof}

Let us fix the set $\hat{D}$ for our graph $G$.
\begin{tcolorbox}
\vspace{-4mm}
    \begin{align*}
      \hat D\coloneqq \{v\in V(G) : \text{ for all } A\subseteq D\setminus \{v\}
      & \text{ with $N(v)\subseteq N[A]$} \text{ we have $|A|>(2\nabla-1)\}$.}
  \end{align*}
\end{tcolorbox}
\smallskip
Note that $\hat{D}$ cannot be computed by a local algorithm as we do
not know the set $D$. It will only serve as an auxiliary set in our
analysis.

\smallskip We define $D_1$ as the set of all vertices whose
neighborhood cannot be dominated by $2\nabla-1$ other vertices.
The first phase of the algorithm is to compute the set
$D_1$, which can be done in 2 rounds of communication.

\begin{tcolorbox}[colback=red!5!white,colframe=red!50!black]
  \vspace{-4mm}
  \begin{align*}
  D_1\coloneqq \{v\in V(G) : \text{ for all } A\subseteq V(G)\setminus \{v\}
  & \text{ with $N(v)\subseteq N[A]$ we have $|A|> (2\nabla-1)\}$.}
  \end{align*}
\end{tcolorbox}

\begin{lemma}\label{lem:size-D1}
  $D_1\subseteq \hat{D}$ and hence $|D_1\setminus D|\leq \rho(G)\cdot \gamma$.
\end{lemma}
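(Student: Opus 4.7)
The proof is immediate from comparing the two definitions, so my plan is essentially a one-paragraph observation followed by an invocation of the previous lemma.

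First, I would inspect the two defining conditions side by side. In $\hat{D}$ the universal quantifier ranges over $A \subseteq D\setminus\{v\}$, while in $D_1$ it ranges over the larger family $A \subseteq V(G)\setminus\{v\}$. Since $D\setminus\{v\} \subseteq V(G)\setminus\{v\}$, the condition defining $D_1$ is a universal statement over a superfamily of the sets used to define $\hat{D}$, hence strictly stronger. Concretely, if $v\in D_1$ and $A\subseteq D\setminus\{v\}$ satisfies $N(v)\subseteq N[A]$, then in particular $A\subseteq V(G)\setminus\{v\}$ satisfies $N(v)\subseteq N[A]$, so $|A|>2\nn-1$. Thus $v\in\hat{D}$, establishing $D_1\subseteq \hat{D}$.

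Next, I would derive the cardinality bound. From $D_1\subseteq\hat{D}$ we get $D_1\setminus D\subseteq \hat{D}\setminus D$, and \cref{lenzen-improved} yields $|\hat{D}\setminus D|<\rho(G)\cdot\gamma$. Therefore $|D_1\setminus D|\leq |\hat{D}\setminus D|<\rho(G)\cdot\gamma$, which is the stated bound (the weakening from strict to non-strict inequality is harmless).

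There is no real obstacle here: the whole content of the lemma is the observation that replacing ``$A\subseteq D\setminus\{v\}$'' by the larger class ``$A\subseteq V(G)\setminus\{v\}$'' only shrinks the resulting set, after which the inequality is simply inherited from \cref{lenzen-improved}. The only thing worth double-checking is that the inclusion goes in the direction we want — a common source of confusion when a universal quantifier is enlarged — but since we enlarge the quantified family, the set of $v$ satisfying the condition indeed gets smaller, confirming $D_1\subseteq\hat{D}$ rather than the reverse.
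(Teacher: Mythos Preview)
Your proposal is correct and follows exactly the same approach as the paper: observe that the defining condition for $D_1$ quantifies over all $A\subseteq V(G)\setminus\{v\}$ and hence is stronger than that for $\hat{D}$, giving $D_1\subseteq\hat{D}$, and then inherit the bound from \cref{lenzen-improved}. The paper's proof is just a terser version of yours.
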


\begin{proof}
  If the open neighborhood of a vertex $v$ cannot be dominated by $2\nabla-1$
  vertices from $V(G)\setminus\{v\}$, then in particular it cannot be
  dominated by $2\nabla-1$ vertices from $D\setminus\{v\}$.  Hence
  $D_1\subseteq \hat{D}$ and we can bound the size of $D_1$ by that of
  $\hat{D}$.
\end{proof}

\pagebreak
We mark the vertices of $D_1$ that we add to the dominating set in the
first phase of the algorithm as green, the neighbors of $D_1$ as
yellow and leave all other vertices red. Denote the set of red
vertices by~$R$, that is, $R=V(G)\setminus N[D_1]$.  For $v\in V(G)$
let $\Nr(v)\coloneqq N(v)\cap R$ and $\dr(v)\coloneqq |\Nr(v)|$
be the \emph{residual degree} of~$v$, that is, the number of neighbors
of $v$ that still need to be dominated.

\smallskip By definition of $D_1$, the neighborhood of every non-green
vertex can be dominated by at most~$2\nabla$ other vertices. This holds true
in particular for the subset $\Nr(v)$ of neighbors that still need to be
dominated.  Let us fix such a small dominating set for the red
neighborhood of every non-green vertex.

\begin{tcolorbox}
  For every $v\in V(G)\setminus D_1$, we fix
  $A_v\subseteq V(G)\setminus \{v\}$ such that:
    $$N_R(v)\subseteq N[A_v] ~~\text{and}~~ |A_v|\leq 2\nabla.$$

  Additionally, for vertices $v\in V(G)\setminus \hat{D}$, we enforce that
  $A_v\subseteq D\setminus \{v\}$.
\end{tcolorbox}

There are potentially many such sets $A_v$ -- we fix one such set
arbitrarily.
Let us stress that we cannot compute these sets
in a local algorithm as the sets $D$ and $\hat{D}$ are not known
to the algorithm. We only use these sets for our further argumentation.

\subsection{Limitations of the method}

\smallskip
We can apply the above approach to obtain a small set $D_1$ only if $\nabla_1$ is bounded by a constant. For example in graphs of bounded degeneracy in general the number of vertices that dominate the
neighborhood of a vertex can only be bounded by $\gamma(G)$.
Hence, the approach based on covers and pseudo-covers that is employed
in the following cannot be extended to degenerate  graph classes. Below
we show an example where this is the case.

\begin{example}
Let $G(\gamma,m)$ be the graph with vertices $v_i$ for $1\leq i\leq \gamma$,
$w^j$ for $1\leq j\leq m$ and $s_i^j$ for $1\leq i\leq \gamma, 1\leq j\leq m$.
We have the edges $\{v_1, w^j\}$ for $1\leq j\leq m$, hence $v_1$
dominates all~$w^j$. We have the edges $\{w^j, s_i^j\}$ for all $1\leq i\leq \gamma,
1\leq j\leq m$, hence, the $s_i^j$ are neighbors of $w^j$. Finally,
we have the edges $\{v_i, s_i^j\}$, that is, $v_i$ dominates the $i$th
neighbor of $w_j$ (see \cref{fig:example}). Hence, for $m>\gamma$,
$G(\gamma, m)$ has a dominating set of size
$\gamma$ and $m$ vertices whose neighborhood can be dominated
only by~$\gamma(G)$ vertices.
Note that $G(\gamma,m)$ is
\mbox{$2$-degenerate}. As we can choose~$m$ arbitrary
large, we cannot usefully apply the method based on
\cref{lem:size-D1} for degenerate classes in general.

\vspace{-3mm}
\begin{center}
  \begin{figure}[h]
    \center
    \includegraphics[scale=0.3]{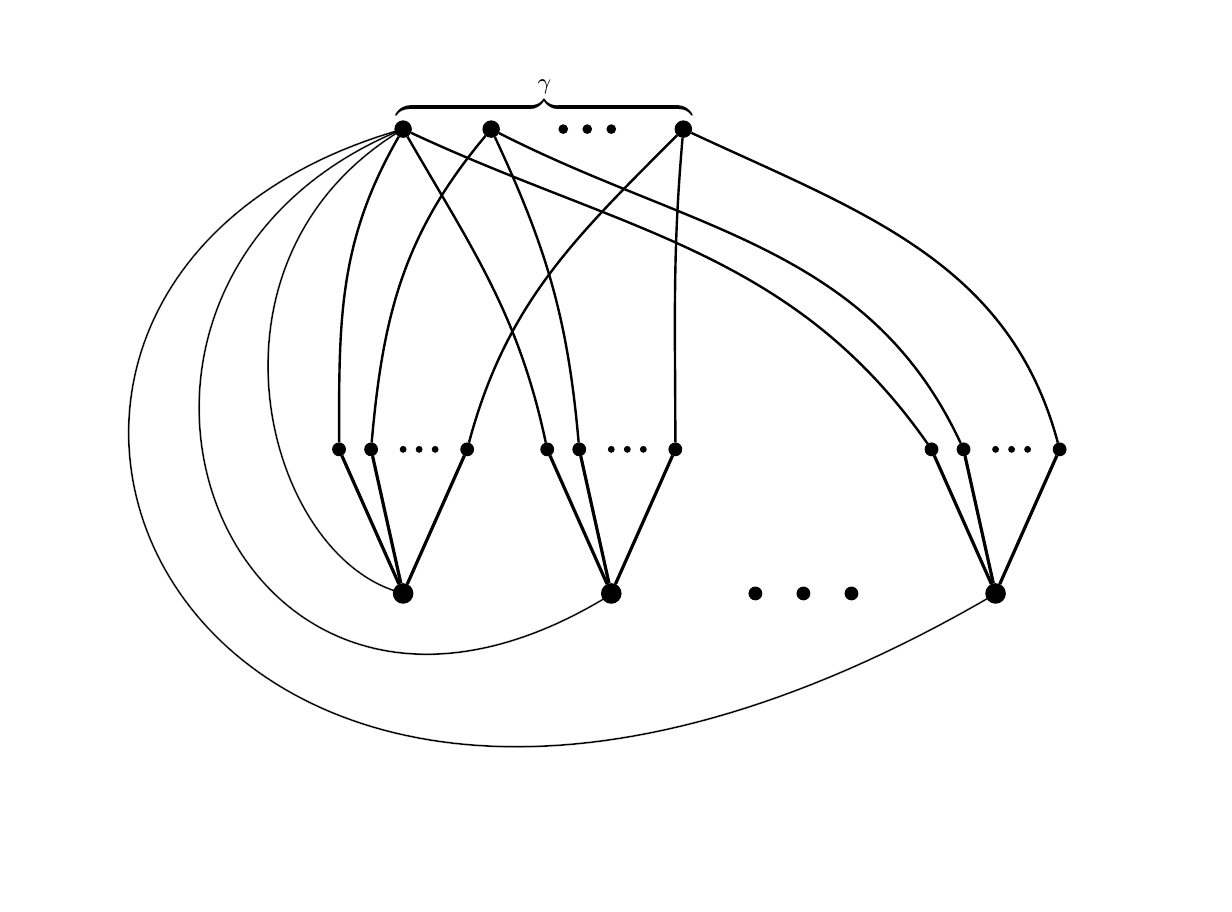}

    \vspace{-3mm}
    \caption{ A $2$-degenerate graph, where for many $v\in V(G)$ the set $N(v)$ can only be dominated by at least $\gamma$ vertices different from $v$. }
  \end{figure}\label{fig:example}
\end{center}
\end{example}


\section{Phase 2: Reducing residual degrees -- pseudo-covers and domination sequences}\label{sec:phase2}

After the first phase of the algorithm we have established the situation
that for every vertex \mbox{$v\in V(G)\setminus D_1$} the residual neighborhood
$\Nr(v)$ is dominated by set $A_v\subseteq V(G)\setminus\{v\}$
of size at most $2\nabla$.  For all vertices of
$V(G)\setminus \hat{D}$ we have chosen $A_v\subseteq D\setminus\{v\}$.
Observe that the set $\bigcup_{v\in V(G)}A_v$ has very good
domination properties. First, already the sets $A_v$ for $v\in D$
dominate almost all vertices that remain to be dominated, except possibly
the vertices of $D$ itself: We have $R\subseteq \bigcup_{v\in D} N_R[v]
= D\cup \bigcup_{v\in D} N_R(v)=D\cup \bigcup_{v\in D} A_v$, hence
$R\setminus D\subseteq \bigcup_{v\in D}A_v$.
Second, $\bigcup_{v\in V(G)}A_v$ is small, as
$|\bigcup_{v\in V(G)}A_v|\leq
\sum_{v\in V(G)} |A_v|
=\sum_{v\in \hat{D}}|A_v|+ \sum_{v\in V(G)\setminus\hat{D}}|A_v|$.
So~$|\bigcup_{v\in V(G)}A_v|
\leq (\rho(G)+1)\gamma\cdot 2\nabla+\gamma
\in \Oof(\gamma)$.

\pagebreak
In the second phase of
the algorithm we aim to find a good approximation of the sets $A_v$.
We follow the approach of Czygrinow et al.~\cite{czygrinow2018distributed}
and define \emph{pseudo-covers}, which describe candidate
vertices for the sets $A_v$. We will then consider a selection process that
can be carried out in parallel for all vertices, which is based on the
definition of \emph{domination sequences}, and allows to select
a bounded number of candidate vertices. The domination properties
of the selected vertices are worse than that of the sets $A_v$, however,
at the end of the second phase we will be in the situation that the
residual degree of each vertex is bounded by an absolute constant
depending only on the graph class under consideration.\vspace{-2mm}

\subsection{Pseudo-covers}

Following the presentation of~\cite{czygrinow2018distributed}, we name and fix
the following constants for the rest of this article. The reason to choose
the constants as given will become clear in the course of the proof.

\begin{tcolorbox}
\hfill
\begin{tabular}{l l}
$\kappa$       & $\coloneqq~ \max\{2\nabla_0,2\nn\}$,\\
$\lambda$  & $\coloneqq~ 1/\kappa$,
\vspace{-1mm}
\end{tabular}
\hfill
\begin{tabular}{l l}
$\mu$    & $\coloneqq~ 2\kappa/\lambda=2\kappa^2$,\\
$\nu$       & $\coloneqq~ k\mu = 2\kappa^3$.
\end{tabular}
\hfill~
\end{tcolorbox}

\begin{definition}
A vertex $z\in V(G)$ is \emph{$\lambda$-strong} for a vertex set $W\subseteq V(G)$ if $|N[z]\cap W|\geq \lambda|W|$.
\end{definition}

The following is the key definition by Czygrinow et al.~\cite{czygrinow2018distributed}.

\begin{definition}
  A \emph{pseudo-cover} (with parameters $\kappa, \lambda, \mu, \nu$)
  of a set $W\subseteq V(G)$ is a
  sequence $(v_1,\ldots, v_m)$ of vertices
  such that for every $i\le m$ we have:\\[-6mm]
  \begin{itemize}
    \item $m\leq \kappa$,\\[-5mm]
    \item $v_i$ is $\lambda$-strong for $W\setminus\bigcup_{j<i}N[v_j]$,\\[-5mm]
    \item $|N[v_i]\cap (W\setminus\bigcup_{j<i} N[v_j])|\geq \mu$, and\\[-5mm]
    \item $|W\setminus \bigcup_{j\le m}N[v_j]|\leq \nu$.
  \end{itemize}
\end{definition}
Intuitively, all but at most $\nu$ elements of the set $W$ are covered by the $(v_i)_{i\le m}$.
Additionally, each element $v_i$ of the pseudo-cover dominates both an
$\lambda$-fraction of the part of $W$ that is not yet dominated by the $v_j$
for $j<i$, and at least $\mu$ elements.
Note that with our choice of constants, if there are more than $\nu$ vertices not
covered yet, any vertex that covers a $\lambda$-fraction of what remains also
covers at least $\mu$ elements.

The next lemma shows how to derive the existence of pseudo-covers from
the existence of small dominating sets.

\begin{lemma}\label{lem:cover-to-pseudo-cover}
  Let $W\subseteq V(G)$ be of size at least
  $\nu$ and let~$Z$ be a dominating set of $W$ with~$\kappa$ elements.
  There exists an ordering of the vertices of $Z$ as $z_1,\ldots, z_\kappa$
  and $m\leq \kappa$ such that $(z_1,\ldots, z_m)$ is a pseudo-cover of $W$.
\end{lemma}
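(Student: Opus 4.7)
The plan is to construct the sequence greedily: at step $i$, among the vertices of $Z$ not yet used, pick one that covers as many of the still-uncovered elements of $W$ as possible. Formally, set $W_1 \coloneqq W$ and, having chosen $z_1,\ldots,z_{i-1}$, let $W_i \coloneqq W \setminus \bigcup_{j<i} N[z_j]$; if $|W_i| > \nu$ and some $z \in Z \setminus \{z_1,\ldots,z_{i-1}\}$ exists, pick $z_i$ among these remaining vertices so as to maximise $|N[z_i] \cap W_i|$. Let $m$ be the first index $i$ for which $|W_{i+1}| \leq \nu$ (or $m = \kappa$ if the process exhausts $Z$). After $m$ is fixed, one may order the remaining unused vertices of $Z$ arbitrarily as $z_{m+1},\ldots,z_\kappa$ to satisfy the statement.

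Next I would verify each clause of the pseudo-cover definition. The bound $m \leq \kappa$ is built in, since we only draw from $Z$ which has $\kappa$ elements. For the $\lambda$-strong property, observe that during the process $Z$ still dominates $W_i$ (because $W_i \subseteq W \subseteq N[Z]$), and the $\kappa - (i-1) \leq \kappa$ remaining vertices of $Z$ also dominate $W_i$: indeed, any element of $W_i$ is dominated by some $z \in Z$, and if $z \in \{z_1,\ldots,z_{i-1}\}$ then that element would have been removed from $W_i$. By averaging, the greedily chosen $z_i$ therefore covers at least $|W_i|/\kappa = \lambda |W_i|$ elements of $W_i$.

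For the lower bound of $\mu$ on the newly covered elements, note that the greedy step is only executed while $|W_i| > \nu$; combined with the previous paragraph this gives
\[
|N[z_i] \cap W_i| \;\geq\; \frac{|W_i|}{\kappa} \;>\; \frac{\nu}{\kappa} \;=\; \mu,
\]
using the choice $\nu = \kappa\mu$. Finally the tail condition $|W_{m+1}| \leq \nu$ holds by the definition of $m$ in the first case; in the second case, where the process runs for all $\kappa$ steps, we have $\bigcup_{j\leq\kappa} N[z_j] = N[Z] \supseteq W$, so $W_{\kappa+1} = \emptyset \leq \nu$.

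The argument is essentially a pigeonhole plus a careful calibration of the constants $\kappa, \lambda, \mu, \nu$; there is no real obstacle, only bookkeeping. The one subtle point I would highlight explicitly is that the greedy choice must be restricted to vertices of $Z$ that have not yet been used, so that $m \leq \kappa$ is preserved; the argument that the residual vertices of $Z$ still dominate $W_i$ (because any element of $W_i$ which was dominated only by an already-used $z_j$ would not be in $W_i$ anymore) is what makes the pigeonhole go through with denominator $\kappa$ rather than $|Z \setminus \{z_1,\ldots,z_{i-1}\}|$.
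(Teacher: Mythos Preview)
Your proof is correct and follows essentially the same greedy strategy as the paper's proof. The only cosmetic difference is that the paper first orders all of $Z$ greedily and then defines $m$ as the largest index for which the $\lambda$-strong and $\mu$ conditions still hold (deriving the tail bound $|W'|\le\nu$ from the failure at step $m+1$), whereas you halt the greedy process as soon as $|W_{m+1}|\le\nu$ and read off the per-step conditions directly from $|W_i|>\nu$; the underlying averaging argument over the remaining vertices of $Z$ is identical.
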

\begin{proof}
  We build the order greedily by induction. We order the elements by neighborhood size, while removing the neighborhoods of the previously ordered vertices. More precisely, assume that $(z_1,\ldots,z_i)$ have been defined for \mbox{some~$i\ge 0$}. We then define $z_{i+1}$ as the element that maximizes $|N[z] \cap (W \setminus \bigcup_{j\le i}N[z_j])|$.

  Once we have ordered all vertices of $Z$, we define $m$ as the maximal integer not larger than $\kappa$ such that for every $i \le m$ we have:
  \begin{itemize}
    \item $z_i$ is $\lambda$-strong for $W\setminus\bigcup_{j<i}N[z_j]$, and
    \item $|N[z_i] \cap (W \setminus \bigcup_{j\le i}N[z_j])| \ge \mu$.
  \end{itemize}

  This ensures that $(z_1,\ldots, z_m)$ satisfies the first 3 properties of a
  pseudo-cover of $W$. It only remains to check the last one.
  To do so, we define $W' \coloneq W \setminus\bigcup_{i\le m}N[z_i]$. We want to prove
  that $|W'| \le \nu$. Note that because $Z$ covers~$W$, if $m=\kappa$ we
  have $W'=\emptyset$ and we are done. We can therefore assume
  that $m<\kappa$ and $W'\neq \emptyset$. Since $Z$ is a dominating set of $W$,
  we also know that $(z_{m+1},\ldots z_\kappa)$ is a dominating set of $W'$,
  therefore there is an element in $(z_{m+1},\ldots z_\kappa)$ that
  dominates at least a $1 / \kappa$ fraction of $W'$. Thanks to the
  previously defined order, we know that~$z_{m+1}$ is such an element.
  Since $\lambda = 1/\kappa$, it follows that~$z_{m+1}$ is $\lambda$-strong
  for $W'$.
  This, together with the definition of $m$, we have that $|N[z_i] \cap (W \setminus \bigcup_{j\le i}N[z_j])| < \mu$ meaning that $|N[z_{m+1}] \cap W'| < \mu$. This implies that $|W'|/\kappa < \mu$. And since $\mu = \nu/\kappa$, we have $|W'|<\nu$.
  Hence, $(z_1,\ldots, z_m)$ is a pseudo-cover of $W$.
\end{proof}


While there can exist unboundedly many dominating sets for a set $W\subseteq V(G)$,
a nice observation of Czygrinow et al.\ was that the number of
pseudo-covers is bounded whenever the input graph excludes some
biclique $K_{s,t}$ as a subgraph. We do not state the result in this
generality, as it leads to enormous constants. Instead, we focus on the
case where small dominating sets $A_v$ exist, implying that $\nabla_0(G)$,
and therefore $\kappa$, are bounded.

\begin{lemma}\label{lem:num-high-degree}
  Let $W\subseteq V(G)$ of size at least $\mu$.
  Then there are less than $\kappa^2$ vertices that are
  $\lambda$-strong for $W$.
\end{lemma}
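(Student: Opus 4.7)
My plan is a proof by contradiction that parallels the construction in the proof of \cref{lenzen-improved}: suppose $Z=\{z_1,\ldots,z_{\kappa^2}\}$ are all $\lambda$-strong for $W$, so $|N[z_i]\cap W|\geq \lambda|W|$ for every $i$. I would extract from $Z\cup W$ a bipartite $1$-shallow minor $H$ of $G$ whose edge density strictly exceeds $\nn$, contradicting $\nabla_1^B(G)<\nn\leq \kappa/2$.

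The baseline construction places each $z_i$ as a singleton branch set $\{z_i\}$ on an $A$-side and each $w\in W\setminus Z$ as a singleton branch set on a $B$-side, retaining only the cross-edges. Each $z_i$ contributes
\[
|N(z_i)\cap(W\setminus Z)|\;\geq\;\lambda|W|\,-\,|Z\cap W|
\]
edges to the $B$-side (for $z_i\in Z\cap W$ one loses $z_i$ itself but also saves one $Z\cap W$-vertex, so the bound is uniform). With $|V(H)|\leq |Z|+|W|$, substituting $|Z|=\kappa^2$ and $|W|\geq \mu=2\kappa^2$ gives density
\[
\frac{|E(H)|}{|V(H)|}\;\geq\;\frac{\kappa^2(2\kappa-|Z\cap W|)}{3\kappa^2-|Z\cap W|}
\]
which exceeds $\kappa/2\geq \nn$ as long as $|Z\cap W|$ is below roughly $\kappa^2/(2\kappa-1)$.

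To cover the remaining regime where $|Z\cap W|$ is large one upgrades the bipartite $0$-shallow minor to an honest bipartite $1$-shallow minor: for each $z\in Z\cap W$ replace the singleton branch set $\{z\}$ by the contracted branch set $\{z,w_z\}$ of radius $1$, where $w_z\in N(z)\cap(W\setminus Z)$ is a private partner. A greedy selection of the $w_z$'s is enabled by the inequality $|N(z)\cap W|\geq \lambda|W|-1$ together with $|W|\geq \mu=2\kappa^2$, which is precisely what guarantees that enough pairwise distinct partners exist. Re-running the density calculation on the enlarged branch sets---whose extra neighbors from $w_z$ in $W\setminus Z$ partially compensate for the shrinking $B$-side---again yields density exceeding~$\nn$.

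The main obstacle is handling the interplay between these two regimes uniformly: one has to check that the greedy choice of partners succeeds whenever the plain subgraph argument fails, and that the induced density gain from the $1$-shallow contraction is not offset by the loss of $|Z\cap W|$ vertices on the $B$-side. This is where the specific constants $\lambda=1/\kappa$ and $\mu=2\kappa^2$ fixed in the constants box play their essential role, ensuring that in every case the minor constructed has density $>\nn$, contradicting the definition of $\nn$ and completing the proof.
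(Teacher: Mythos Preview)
Your two-regime bipartite-minor construction is considerably more complicated than what the paper does, and the second regime is not actually carried through. The paper's proof avoids bipartite minors and $\nn$ entirely: it simply takes the induced subgraph $H=G[W\cup Z]$ (where $Z$ is the set of $\lambda$-strong vertices, $|Z|=c$), observes that every $z\in Z$ has degree at least roughly $\lambda|W|$ in $H$, and so $|E(H)|\geq \lambda|W|c-\nabla_0 c$ after subtracting the at most $\nabla_0 c$ edges inside~$Z$. Combining this with $|E(H)|\leq \nabla_0|V(H)|\leq \nabla_0(|W|+c)$ and using $|W|\geq \mu=2\kappa^2$ (so $\lambda|W|\geq 2\kappa\geq 4\nabla_0$) gives $c\leq 2\nabla_0/\lambda\leq \kappa^2$ in two lines. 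No case distinction, no contraction, and only the subgraph bound $\nabla_0$ is used.

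Your first regime is essentially a weakened version of this same count: the bipartite subgraph you build is in particular a subgraph of $G$, so its density is already bounded by $\nabla_0\leq\kappa/2$---you do not need $\nn$ there. The real gap is in the second regime. The ``private partner'' selection may fail: take $|W|=2\kappa^2$ and $|Z\cap W|$ large (for example $|Z\cap W|\approx\kappa^2$). Then each $z\in Z\cap W$ is only guaranteed $|N(z)\cap(W\setminus Z)|\geq \lambda|W|-|Z\cap W|\approx 2\kappa-\kappa^2$, which is negative for $\kappa\geq 3$; nothing you have written rules out a $z$ with all its $W$-neighbours lying inside $Z$. Your justification ``$|N(z)\cap W|\geq \lambda|W|-1$ together with $|W|\geq\mu$'' controls neighbours in $W$, not in $W\setminus Z$, so it does not deliver the partners you need. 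Moreover, even if partners existed, the contraction gains nothing you can quantify: you have no lower bound on $|N(w_z)\cap W|$, so ``extra neighbours from $w_z$ \ldots\ partially compensate'' is not an argument. To fix this you would have to bring in the $\nabla_0$-bound on $G[Z]$ anyway---at which point you are redoing the paper's one-line subgraph count inside a more complicated wrapper.
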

\begin{proof}
  Assume that there is such a set $W$ with $|W|\ge \mu$ and
  $c$ many vertices that are $\lambda$-strong for~$W$.
  Let $H$ be the subgraph of $G$ induced by $W$ and the $\lambda$-strong vertices.
  We first have that $|V_H| \le |W|+c$. Second we have that
  $|E_H|\ge \lambda|W|c-\nabla_0c$, because there are $c$ vertices that have
  degree at least $\lambda|W|$ and there are at most $\nabla_0c$ many vertices
  between them.

\pagebreak
  We then have that $|E_H|\le \nabla_0|V_H|$ hence
  $\lambda|W|c-\nabla_0c\le \nabla_0(|W|+c)$ from which we
  derive~$c(\lambda |W|-2\nabla_0)\le \nabla_0|W|$.
  Now, using that $|W|>\mu\ge 2k^2$ we have $\lambda|W|>2k>4\nabla_0$.
  Hence $\lambda|W|-2\nabla_0 \ge \lambda|W|/2$.
  We can finally deduce that $c(\lambda|W|/2)\le \nabla_0|W|$ and therefore we
  have that $c\le 2\nabla_0/\lambda=\kappa^2$.
\end{proof}

This leads quickly to a bound on the number of pseudo-covers.

%
%
%
%

\begin{lemma}\label{lem:num-pseudo-covers}
  For every $W\subseteq V(G)$ of size at least $\mu$, the number of
  pseudo-covers is less than~$\kappa^{2\kappa}$.
\end{lemma}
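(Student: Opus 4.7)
The plan is a direct counting argument, building on \cref{lem:num-high-degree}. Let $(v_1,\ldots,v_m)$ be an arbitrary pseudo-cover of $W$, and for each $i\le m$ set $W_i \coloneqq W\setminus \bigcup_{j<i} N[v_j]$. Note that $W_i$ is determined entirely by the prefix $(v_1,\ldots,v_{i-1})$, so once that prefix is fixed, $v_i$ must be a $\lambda$-strong vertex for the specific set $W_i$.

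The first key step is to verify that \cref{lem:num-high-degree} is in fact applicable to each $W_i$, i.e.\ that $|W_i|\ge \mu$. This follows directly from the third defining property of a pseudo-cover: $|N[v_i]\cap W_i|\ge \mu$ forces $|W_i|\ge \mu$. Having checked this, \cref{lem:num-high-degree} yields that at each position $i$, conditional on the prefix, the number of possible choices for $v_i$ is strictly less than $\kappa^2$.

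The second step is to package these per-position bounds into a bound on the total number of pseudo-covers. Since $m\le \kappa$, any pseudo-cover can be encoded as a tuple of length exactly $\kappa$ over the alphabet consisting of the $\lambda$-strong candidates at that position together with a single additional symbol $\star$ meaning ``terminate here.'' Each coordinate then has at most $(\kappa^2-1)+1 = \kappa^2$ possible values, giving at most $(\kappa^2)^\kappa = \kappa^{2\kappa}$ encodings in total. Since the all-$\star$ tuple does not encode a legitimate pseudo-cover (or, alternatively, since the per-coordinate bound $<\kappa^2$ is strict), the count is strictly less than $\kappa^{2\kappa}$, as claimed.

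There is no real obstacle here; the lemma is essentially a corollary of \cref{lem:num-high-degree}. The only subtle point is justifying that the hypothesis $|W_i|\ge \mu$ of \cref{lem:num-high-degree} is satisfied at \emph{every} step, which is exactly what the third bullet of the pseudo-cover definition was designed to guarantee; this is why pseudo-covers carry the extra $\mu$-threshold condition in addition to $\lambda$-strength.
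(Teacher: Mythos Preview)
Your argument is correct and follows essentially the same route as the paper: build pseudo-covers position by position, use \cref{lem:num-high-degree} to bound the number of options at each step by fewer than~$\kappa^2$, and multiply over the at most~$\kappa$ positions. The paper phrases this via an inductive bound $|C_{i+1}|<|C_i|\cdot\kappa^2$ on the number of length-$i$ prefixes extendable to a pseudo-cover, while you use an equivalent encoding with a terminator symbol; the substance is identical. If anything, your write-up is more careful than the paper's in explicitly checking that $|W_i|\ge\mu$ (via the third bullet of the pseudo-cover definition), which is exactly the hypothesis needed to invoke \cref{lem:num-high-degree} at each step.
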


The proof of the lemma is exactly as the proof of Lemma~7 in the
presentation of Czygrinow et al.~\cite{czygrinow2018distributed},
we reprove it for the sake of completeness.

\begin{proof}
  Let $W$ a set of size at least $\mu$. For every $i\le \kappa$, we define $C_i$ as the
  set of partial pseudo-covers of~$W$ of size at most $i$, that is,
  all sets of at most $i$ vertices that can be extended to a pseudo-cover
  of~$W$.
  So $C_\kappa$ is the set of pseudo-covers of $W$ while $C_1$ only contains
  $\lambda$-strong vertices for $W$.

  \cref{lem:num-high-degree} implies that $|C_1|< \kappa^2$.
  \cref{lem:num-high-degree} also implies that for every $i<\kappa$, we have
  $|C_{i+1}| < |C_{i}|\cdot \kappa^2$. We therefore conclude that
  $|C_k| < (\kappa^2)^\kappa$.
\end{proof}

\begin{tcolorbox}
  We write $\Tt(v)$ for the set of all pseudo-covers
  of $N(v)$ and $\Pp(v)$ for the set of all vertices that appear in a
  pseudo-cover of $N(v)$.
\end{tcolorbox}

The proof of \cref{lem:num-pseudo-covers} also bounds the number $\Pp(v)$.
\begin{corollary}\label{cor:nb-dominators}
For every $v\in V(G)$ with $|N_R(v)|> \mu$, we have
  $|\Pp(v)|\le \kappa^{2\kappa}$.
\end{corollary}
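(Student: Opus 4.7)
The plan is to reuse, essentially unchanged, the inductive counting argument from the proof of \cref{lem:num-pseudo-covers}, but now to track individual vertices rather than whole sequences. Every $u\in\Pp(v)$ must appear as the $i$-th entry of some pseudo-cover $(v_1,\ldots,v_m)$ of $\Nr(v)$ for some $1\le i\le m\le \kappa$; I would assign to $u$ the smallest such index $i$, so that $(v_1,\ldots,v_{i-1},u)$ is itself a partial pseudo-cover and $u$ is $\lambda$-strong for the residual set $W_i \coloneqq \Nr(v)\setminus\bigcup_{j<i}N[v_j]$.

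The first step is to verify that \cref{lem:num-high-degree} is applicable at every position. The hypothesis $|\Nr(v)|>\mu$ handles the base case $i=1$, and for $i>1$ the third bullet of the pseudo-cover definition guarantees that the prefix $(v_1,\ldots,v_{i-1})$ leaves a residual set of size at least $\mu$, so that $|W_i|\ge\mu$. Hence, for each fixed prefix of length $i-1$, \cref{lem:num-high-degree} bounds the number of $\lambda$-strong candidates for position $i$ by $\kappa^2$.

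Combining this with the inductive bound $|C_{i-1}|<\kappa^{2(i-1)}$ from the proof of \cref{lem:num-pseudo-covers}, the number of distinct vertices that can appear as the $i$-th entry of a partial pseudo-cover is at most $\kappa^{2(i-1)}\cdot\kappa^2=\kappa^{2i}$. Summing over $i\in\{1,\ldots,\kappa\}$ yields a geometric series whose dominant term is $\kappa^{2\kappa}$, which absorbs the rest using the slack already present in the strict inequalities inherited from \cref{lem:num-pseudo-covers}. This gives the claimed bound $|\Pp(v)|\le\kappa^{2\kappa}$.

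I do not expect a real obstacle here; the only point that requires care is the size-$\mu$ guard at every extension step, which is ensured by the minimality of the chosen index $i$ together with the defining property of a pseudo-cover. In particular, no new combinatorial input beyond \cref{lem:num-high-degree} and the counting in \cref{lem:num-pseudo-covers} is needed.
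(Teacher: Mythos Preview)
Your approach is the same as the paper's in spirit: both use the counting from \cref{lem:num-pseudo-covers} together with \cref{lem:num-high-degree}. The paper, however, is more direct. It simply notes that every $u\in\Pp(v)$ is the last entry of some partial pseudo-cover (truncate any pseudo-cover containing $u$ right after $u$), and distinct $u$'s give distinct partial pseudo-covers; hence $|\Pp(v)|\le |C_\kappa|<\kappa^{2\kappa}$, where $C_\kappa$ is the set of \emph{all} partial pseudo-covers of length at most $\kappa$, already bounded in \cref{lem:num-pseudo-covers}. No partition by index and no summation are needed.

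Your detour through the smallest index forces you to bound $\sum_{i=1}^{\kappa}\kappa^{2i}$, which is \emph{strictly larger} than $\kappa^{2\kappa}$ (for $\kappa=2$ one gets $4+16=20>16$). Your claim that ``the slack in the strict inequalities absorbs the rest'' amounts to showing $\sum_{i=1}^{\kappa}(\kappa^2-1)^i\le\kappa^{2\kappa}$, which is true but is a genuine computation you do not carry out. The clean fix is to avoid the sum altogether: the sets $P_i$ of partial pseudo-covers of length exactly $i$ are pairwise disjoint subsets of $C_\kappa$, so $\sum_i|P_i|\le|C_\kappa|<\kappa^{2\kappa}$ directly. (Also, the ``minimality of the chosen index'' plays no role in the $\mu$-guard: the third bullet of the pseudo-cover definition already guarantees that each residual set before position $i$ has size at least $\mu$, regardless of which occurrence of $u$ you pick.)
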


%




\subsection{Domination sequences}

We now turn to the use of pseudo-covers.
%
We aim to carry out an iterative process in parallel
for all vertices \mbox{$v\in V(G)$} with a sufficiently large
residual neighborhood $\Nr(v)$.

\begin{definition}\label{def:dom-sequence}
  For any vertex $v\in V(G)$, a {\em $\kappa$-dominating-sequence} of $v$ is a
  sequence $(v_1,\ldots,v_m)$ (without repetition) for which we can define
  sets $B_1,\ldots,B_m$ such that:
  \begin{itemize}
    \item $v_1=v$, $B_1 \subseteq \Nr(v_1)$,
    \item for every $i\le m$ we have $B_{i} \subseteq (\Nr(v_{i})\cap B_{i-1})$,
    \item $|B_{i}|\geq \kappa^{s -i}(t+s -i+(s -i)\nu)$
    \item and for every $i\le m$ we have $v_i\in \Pp(v_{i-1})$.
  \end{itemize}
  A $\kappa$-dominating-sequence $(v_1,\ldots,v_m)$ is {\em maximal} if there is no
  vertex $u$ such that $(v_1,\ldots,v_m,u)$ is a $\kappa$-dominating-sequence.
\end{definition}

Note that this definition requires $|\Nr(v)|\ge \kappa^{s -1}(t+s -1+(s -1)\nu)$.
For a vertex~$v$ with a too small residual neighborhood, there are no
$\kappa$-dominating-sequences.
We show two main properties of these dominating-sequences.
First, \cref{lem:max-dom-sequence} shows that a maximal dominating sequence must
encounter $D\cup\hat{D}$ at some point. Second, with \cref{lem:shape-sequences,lem:small-D-hat,lem:inclusion-D-hat},
we show that collecting all ``end points'' $v_m$ of all maximal dominating
sequences results in a set $D_2$ of size linear in the size of $D$. While $D$
cannot be computed, we can compute $D_2$.

\begin{lemma}\label{lem:max-dom-sequence}
  Let $v$ be a vertex and let
  $(v_1,\ldots, v_m)$ be a maximal $\kappa$-dominating-sequence of $v$. Then $m<s$ and
  $(D\cup\hat{D})\cap \{v_1,\ldots, v_m\}\neq \emptyset$.
\end{lemma}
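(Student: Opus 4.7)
The statement splits naturally into two assertions, which I would tackle separately, both by contradiction.

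\textbf{Bounding $m<s$.} I would assume $m\geq s$ and derive a $K_{s,t}$-subgraph of $G$. By the nesting $B_s\subseteq B_{s-1}\subseteq\cdots\subseteq B_1$ together with $B_i\subseteq N_R(v_i)$, every element of $B_s$ is a common neighbour of all of $v_1,\ldots,v_s$; disjointness of $\{v_1,\ldots,v_s\}$ and $B_s$ is automatic since $v_i\notin N_R(v_i)\supseteq B_s$. Substituting $i=s$ in the size condition yields $|B_s|\geq\kappa^{0}(t+0+0\cdot\nu)=t$, so $\{v_1,\ldots,v_s\}$ and $B_s$ realise a $K_{s,t}$ subgraph of $G$, contradicting $K_{s,t}\not\subseteq G$.

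\textbf{Intersecting $D\cup\hat D$.} Suppose towards a contradiction that none of $v_1,\ldots,v_m$ lies in $D\cup\hat D$; I will extend the sequence by one vertex, contradicting maximality. Since $v_m\notin\hat D$, the definition of $\hat D$ furnishes a set $A_{v_m}\subseteq D\setminus\{v_m\}$ with $|A_{v_m}|\leq 2\nn\leq\kappa$ that dominates $N_R(v_m)\supseteq B_m$. Pad $A_{v_m}$ arbitrarily to exactly $\kappa$ vertices. Because $|B_m|\geq\kappa^{s-m}(t+s-m+(s-m)\nu)\geq\nu$, \cref{lem:cover-to-pseudo-cover} reorders this padded set so that some initial segment $(z_1,\ldots,z_{m'})$ is a pseudo-cover of $B_m$, so $\bigcup_j N[z_j]$ misses at most $\nu$ elements of $B_m$.

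A pigeonhole argument then produces some $z_j$ with $|N[z_j]\cap B_m|\geq(|B_m|-\nu)/\kappa$. Setting $v_{m+1}:=z_j$ and $B_{m+1}:=N_R(v_{m+1})\cap B_m$, a short arithmetic check (using $\kappa^{s-m-1}\geq 1$) confirms the size condition $|B_{m+1}|\geq\kappa^{s-m-1}(t+s-m-1+(s-m-1)\nu)$ at level $m+1$. Finally, $v_{m+1}\in A_{v_m}\subseteq D$, whereas by hypothesis $v_1,\ldots,v_m\notin D$, so $v_{m+1}\neq v_i$ for every $i\leq m$; hence $(v_1,\ldots,v_m,v_{m+1})$ is a legitimate $\kappa$-dominating-sequence, contradicting maximality.

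\textbf{Main obstacle.} The delicate part is the arithmetic in the extension step: the curious size bound $\kappa^{s-i}(t+s-i+(s-i)\nu)$ is engineered precisely so that two things hold simultaneously --- $|B_m|$ is large enough for \cref{lem:cover-to-pseudo-cover} to apply to $B_m$, and the pigeonhole loss of a factor $\kappa$ together with the additive loss of $\nu$ still leaves $|B_{m+1}|$ above the next-level threshold. The biclique contradiction in the first part is comparatively immediate, once one reads off $|B_s|\geq t$ at the boundary $i=s$; the real work is bookkeeping the extension.
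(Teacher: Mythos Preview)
Your overall strategy and the threshold arithmetic match the paper's, but the extension step has a genuine gap. A $\kappa$-dominating-sequence requires $v_{m+1}\in\Pp(v_m)$, that is, $v_{m+1}$ must occur in some pseudo-cover of $N(v_m)$. You apply \cref{lem:cover-to-pseudo-cover} to $B_m$, so your $z_j$ lies in a pseudo-cover of $B_m$; but $B_m$ may be a tiny subset of $N(v_m)$, and a vertex that is $\lambda$-strong for $B_m$ need not be $\lambda$-strong for $N(v_m)$, so $z_j\in\Pp(v_m)$ is not established. The paper instead applies \cref{lem:cover-to-pseudo-cover} to $\Nr(v_m)$ (one could equally use $N(v_m)$, since for $v_m\notin\hat D$ the set $A_{v_m}\subseteq D$ dominates all of $N(v_m)$), obtaining a pseudo-cover $S$ of that larger set. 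Because $B_m\subseteq \Nr(v_m)$, at most $\nu$ vertices of $B_m$ are missed by $S$, and pigeonhole over the $\leq\kappa$ members of $S$ yields $u\in S\subseteq\Pp(v_m)$ with $|N[u]\cap B_m|\geq(|B_m|-\nu)/\kappa$; the rest of your arithmetic then goes through unchanged.

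A second, smaller slip: having padded $A_{v_m}$ \emph{arbitrarily} to $\kappa$ vertices, you conclude $v_{m+1}\in A_{v_m}\subseteq D$. But nothing stops an arbitrary padding vertex from being $\lambda$-strong for $B_m$ and hence appearing in the pseudo-cover, in which case $z_j\notin D$ and your no-repetition argument breaks. The paper simply does not pad: it uses the $\leq\kappa$ dominators in $D$ directly (the proof of \cref{lem:cover-to-pseudo-cover} works verbatim for dominating sets of size at most $\kappa$), so every element of the resulting pseudo-cover lies in $D$. Both issues vanish once you run the pseudo-cover construction on $\Nr(v_m)$ (or $N(v_m)$) with the unpadded $A_{v_m}\subseteq D$.
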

\begin{proof}
  First, assume that $v_1,v_2,\ldots,v_m$ is a maximal
  $\kappa$-dominating-sequence with $m\ge s$.
  By definition, every~$v_i$ with $i\le s$ is connected to every vertex of $B_s$.
  For every $1\leq i\leq s$ we have $|B_i|\geq t$
  and in particular
  $|B_s|\ge t$. This shows that the two sets
  $\{v_1,\ldots,v_s\}$ and $B_s$ form a $K_{s,t}$ as a subgraph in $N^2[v]$.
  Since $K_{s,t}$ is excluded as a subgraph in~$G$, the process must stop
  having performed at most $s-1$ rounds.

  We now have $m<s$ and to prove the second statement we assume, in order to reach
  a contradiction, that
  $(D\cup\hat{D})\cap\{ v_1,\ldots, v_m\}=\emptyset$.
  We have that $B_m \subseteq N(v_m)$, and remember that as~$v_m$ is not
  in~$\hat{D}$, we have that $\Nr(v_m)$ can be
  dominated by at most $\kappa$ elements of $D$.

  By \cref{lem:cover-to-pseudo-cover},
  we can derive a pseudo-cover $S=(u_1,\ldots,u_j)$ of
  $\Nr(v_m)$, where $j\le \kappa$ and every $u_i$ is an element of~$D$.
  Let $X$ denote the set (of size at most $\nu$) of vertices of $B_m$ not covered by $S$.
  As $S$ contains at most $\kappa$ vertices there must exist a
  vertex $u$ in~$S$ that covers at least a $1/\kappa$ fraction of
  $B_m\setminus X$.
  By construction, we have that $|B_m| \ge \kappa^{s-m}\cdot(t+s-m+(s-m)\nu)\ge \kappa(t+\nu)$
  because $m<s$. Therefore $|B_m\setminus X| \ge \kappa$ and we have
  $$|\Nr[u]\cap B_{m}|\geq \frac{|B_m|-\nu}{\kappa}
  \geq\frac{\kappa^{s-m}(t+s-m+(s-m)\nu) -\nu}{\kappa},$$
  hence
  $$|\Nr[u]\cap B_{m}| \geq\frac{\kappa^{s-m}(t+s-m+(t-m-1)\nu)}{\kappa} \geq
  \kappa^{s-m-1}(t+s-m+(t-m-1)\nu),$$
  and therefore
  $$ |\Nr(u)\cap B_{m}| \geq |\Nr[u]\cap B_{m}|-1 \geq \kappa^{s-m-1}(t+s-m-1+(t-m-1)\nu).$$
  So we can continue the sequence $(v_1,\ldots,v_m)$ by defining
  $v_{m+1}\coloneqq u$; there is no repetition since by hypothesis
  $D\cap\{ v_1,\ldots, v_m\}=\emptyset$, and by construction
  $u\in D$.

  In conclusion if $(v_1,\ldots,v_m)$ is a maximal
  sequence, it contains an element of $D$ or $\hat{D}$.
\end{proof}

Our next goal is to show that there are not many elements $v_m$
(which are the elements that we pick into the set $D_2$).

%
%
%
%
%

\begin{lemma}\label{lem:shape-sequences}
  For any maximal $\kappa$-dominating-sequence $(v_1,\ldots,v_m)$,
  and for any $i\le m-1$, we have that
  \begin{itemize}
    \item $v_{i+1}\in \Pp(v_i)$, and
    \item $|\Nr(v_i)|\ge \mu$.
  \end{itemize}
\end{lemma}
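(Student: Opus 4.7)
The plan is to observe that both bullets come essentially for free from the definition of a $\kappa$-dominating-sequence combined with the bound $m \le s-1$ that was already established in the previous lemma. This is really a bookkeeping verification rather than a proof requiring new ideas, so I do not expect a significant obstacle.

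For the first bullet, I would simply invoke the fourth clause in \cref{def:dom-sequence}, which imposes $v_j \in \Pp(v_{j-1})$ for every $j \le m$. Specializing to $j = i+1$ with $i \le m-1$ gives exactly $v_{i+1} \in \Pp(v_i)$.

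For the second bullet, the key inclusion is $B_i \subseteq \Nr(v_i)$, which follows from the second clause of the definition (for $i \ge 2$) and directly from the first clause for $i=1$. Hence $|\Nr(v_i)| \ge |B_i|$. I would then lower-bound $|B_i|$ using the third clause, $|B_i| \ge \kappa^{s-i}(t+s-i+(s-i)\nu)$. The point is that for $i \le m-1$, \cref{lem:max-dom-sequence} gives $m \le s-1$, so $s-i \ge s-m+1 \ge 2$. Therefore $\kappa^{s-i} \ge \kappa^2$ and $(s-i)\nu \ge 2\nu$, so
\[
|B_i| \;\ge\; \kappa^{s-i}(s-i)\nu \;\ge\; 2\kappa^2\nu \;\ge\; 2\kappa^2 \;=\; \mu,
\]
where the last inequality uses $\nu = 2\kappa^3 \ge 1$. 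Combining with $|\Nr(v_i)| \ge |B_i|$ yields the desired inequality.

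The only thing that could realistically go wrong is the indexing at the boundary: one must check that the estimates still apply when $i = m-1$, which is the tightest case and is exactly where the bound $m \le s-1$ from \cref{lem:max-dom-sequence} is used. Once that is noted, the lemma follows by direct substitution into the quantities defined in the statement, with no further combinatorial work.
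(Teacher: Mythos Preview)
Your proof is correct and follows essentially the same approach as the paper: both bullets are read off directly from \cref{def:dom-sequence}, and the second uses $B_i\subseteq \Nr(v_i)$ together with the size bound on $|B_i|$. The paper's own proof is a one-liner that asserts $|B_i|\ge \nu>\mu$ without showing the arithmetic; your version is slightly more explicit in invoking $m\le s-1$ from \cref{lem:max-dom-sequence} to guarantee $s-i\ge 2$, which is exactly the step the paper leaves implicit.
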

\begin{proof}
  By construction we have $v_{i+1}\in \Pp(v_i)$, furthermore
  $|B_{i}|\geq \kappa^{t -i}(2t -i+(t -i)\nu) \ge \nu >\mu$,
  and~$B_i\subseteq \Nr(v_i)$.
\end{proof}

Now, for every $v\in V(G)$ we compute all maximal $\kappa$-dominating-sequences
starting with $v$.
Obviously, as every $v_i$ in any $\kappa$-dominating-sequences of $v$ dominates some
neighbors of $G$, we can locally compute these steps after having
learned the $2$-neighborhood $N^2[v]$ of every vertex in two rounds
in the LOCAL model of computation.

\begin{tcolorbox}[colback=red!5!white,colframe=red!50!black]
  We define $D_2$ as the set of all $u\in V(G)$ such that there is some vertex
  $v\in V(G)$, and some maximal $\kappa$-dominating-sequence $(v_1,\ldots,v_m)$ of
  $v$ with $u=v_m$.
\end{tcolorbox}

We now take a look at the size of $D_2$.
For a set $W\subseteq V(G)$ we write $\Pp(W) = \bigcup_{v\in W}\Pp(v)$.
Remember that the definition of $\Pp(v)$ requires that $|\Nr(v)|>\mu$. We simply
extend the notation with $\Pp(v)=\emptyset$ if $|\Nr(v)|\le \mu$.
We then define:
\[\Pp^{(1)}(W)\coloneqq \Pp(W)\]
for $1<i <s$
\[\Pp^{(i)}(W)\coloneqq \Pp(\Pp^{(i-1)}(W))\]
and, for $1\le i \le s$
\[\Pp^{(\leq i)}(W)\coloneqq \bigcup_{1\leq j\leq i}\Pp^{(j)}(W).\]


We are now ready to prove that $D_2$ is small.

\begin{lemma}\label{lem:inclusion-D-hat}
  $D_2 \subseteq \Pp^{(\le s)}(D\cup\hat{D})$.
\end{lemma}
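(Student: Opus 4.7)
My plan is to take any $u \in D_2$, unpack the defining witness to a maximal $\kappa$-dominating-sequence $(v_1,\ldots,v_m)$ with $v_m = u$, and then trace $v_m$ back to some anchor that lies in $D\cup\hat D$. \cref{lem:max-dom-sequence} does exactly this bookkeeping for me: it guarantees both that $m < s$ and that there is some smallest index $i \in \{1,\ldots,m\}$ with $v_i \in D\cup\hat D$. From this anchor I only have to chase the pseudo-cover relations $v_{j+1} \in \Pp(v_j)$ along the tail $v_i, v_{i+1}, \ldots, v_m$, which is exactly the structure that the operators $\Pp^{(k)}$ encode.

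The inductive step is straightforward. The pseudo-cover relation $v_{j+1} \in \Pp(v_j)$ for $j < m$ is built into the definition of a $\kappa$-dominating-sequence (and restated in \cref{lem:shape-sequences}), and the operator $\Pp$ is monotone under set inclusion simply because $\Pp(W) = \bigcup_{v \in W} \Pp(v)$. I would prove by induction on $k$ that $v_{i+k} \in \Pp^{(k)}(D\cup\hat D)$ for every $k \in \{1,\ldots,m-i\}$: the base case is
\[
v_{i+1} \in \Pp(\{v_i\}) \subseteq \Pp(D\cup\hat D) = \Pp^{(1)}(D\cup\hat D),
\]
and the inductive step applies $\Pp$ to both sides of $v_{i+k} \in \Pp^{(k)}(D\cup\hat D)$ and combines it with $v_{i+k+1} \in \Pp(v_{i+k})$. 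Choosing $k = m-i$ lands $u = v_m$ in $\Pp^{(m-i)}(D\cup\hat D)$, and the bound $m-i \le m-1 < s-1 \le s$ delivered by \cref{lem:max-dom-sequence} gives the desired $u \in \Pp^{(\le s)}(D\cup\hat D)$.

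The only genuine subtlety is the degenerate case $i = m$, in which the first element of the sequence that hits $D\cup\hat D$ is the very endpoint $u$. Under the literal definition $\Pp^{(\le s)} = \bigcup_{1 \le j \le s}\Pp^{(j)}$ such a $u$ need not be captured, so I would either invoke the obvious convention $\Pp^{(0)} := \mathrm{id}$ (which only adds $D\cup\hat D$ itself, a set of size $\Oof(\gamma)$ by \cref{lem:bounds} and \cref{lenzen-improved}, and hence does not affect the downstream size estimate for $D_2$), or simply absorb $D\cup\hat D$ into the right-hand side at the cost of an additive $\Oof(\gamma)$. Apart from this trivial bookkeeping, the argument is nothing more than an unrolling of definitions, so I do not anticipate any real obstacle.
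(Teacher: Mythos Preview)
Your proof is correct and follows essentially the same route as the paper's: anchor at some $v_i \in D\cup\hat D$ via \cref{lem:max-dom-sequence}, then iterate the pseudo-cover relation $v_{j+1}\in\Pp(v_j)$ (from \cref{lem:shape-sequences}) to reach $v_m$. Your careful treatment of the degenerate case $i=m$ is in fact a subtlety the paper's own proof glosses over, and your proposed fix (absorbing $D\cup\hat D$ itself into the right-hand side at an additive $\Oof(\gamma)$ cost) is harmless for the downstream size estimate in \cref{lem:small-D-hat}.
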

\begin{proof}
  Using \cref{lem:shape-sequences} repetitively, for every
  $\kappa$-dominating-sequence $(v_1,\ldots,v_m)$ we have that\linebreak
  $v_m \in \Pp^{(\le s)}(v_1)$, and, more generally, for every $i\le m$, we have that
  $v_m\in \Pp^{(\le s)}(v_i)$. Now the statement follows from \cref{lem:max-dom-sequence}.
\end{proof}

\begin{lemma}\label{lem:small-D-hat}
  $|D_2| \le (\kappa^{2s\kappa}(\rho(G)+1)\gamma$.
\end{lemma}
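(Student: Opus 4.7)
The strategy is to combine the two preceding facts we have already proved about $D_2$: the structural containment $D_2\subseteq \Pp^{(\le s)}(D\cup\hat D)$ from \cref{lem:inclusion-D-hat}, and the bound $|\Pp(v)|\le \kappa^{2\kappa}$ from \cref{cor:nb-dominators}. So the problem reduces entirely to bounding the size of $\Pp^{(\le s)}(W)$ in terms of $|W|$ and then to bounding $|D\cup\hat D|$. Neither step requires a new idea; they are straightforward iteration and addition, respectively.

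First I would handle the ``base'' set. By \cref{lenzen-improved}, $|\hat D\setminus D|<\rho(G)\cdot\gamma$, so together with $|D|=\gamma$ we get
\[
|D\cup\hat D|\;\le\;(\rho(G)+1)\,\gamma.
\]
Second, I would unfold the iterated-pseudo-cover operator. By \cref{cor:nb-dominators}, for every single vertex $v$ we have $|\Pp(v)|\le \kappa^{2\kappa}$ (noting that $\Pp(v)=\emptyset$ by convention when $|\Nr(v)|\le\mu$, so the bound is vacuous in that case). Hence for every set $W\subseteq V(G)$,
\[
|\Pp(W)|\;\le\;\sum_{v\in W}|\Pp(v)|\;\le\;\kappa^{2\kappa}\,|W|.
\]
Applying this $i$ times gives $|\Pp^{(i)}(W)|\le \kappa^{2\kappa i}|W|$, and summing over $1\le i\le s$,
\[
|\Pp^{(\le s)}(W)|\;\le\;\sum_{i=1}^{s}\kappa^{2\kappa i}\,|W|\;\le\;\kappa^{2s\kappa}\,|W|,
\]
absorbing the geometric-sum factor into the exponent (since $\kappa\ge 2$ we may write the sum crudely as $\le \kappa^{2s\kappa}$, or keep an explicit factor of $2$ in front if one prefers a fully tight calculation).

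Finally, I would plug in: by \cref{lem:inclusion-D-hat},
\[
|D_2|\;\le\;|\Pp^{(\le s)}(D\cup\hat D)|\;\le\;\kappa^{2s\kappa}\cdot|D\cup\hat D|\;\le\;\kappa^{2s\kappa}(\rho(G)+1)\gamma,
\]
which is exactly the claimed bound. There is no genuine obstacle here; the only minor bookkeeping point is making sure the convention $\Pp(v)=\emptyset$ when $|\Nr(v)|\le\mu$ is respected, so that \cref{cor:nb-dominators} applies uniformly and the per-vertex bound $\kappa^{2\kappa}$ holds without case analysis. All the real work was done in setting up $D\cup\hat D$ and bounding $|\Pp(v)|$ earlier; this lemma is essentially the collection of those estimates.
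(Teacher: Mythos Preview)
Your proposal is correct and follows essentially the same route as the paper: bound $|\Pp(v)|$ via \cref{cor:nb-dominators}, iterate to control $|\Pp^{(\le s)}(W)|$, invoke \cref{lem:inclusion-D-hat}, and finish with \cref{lenzen-improved} to bound $|D\cup\hat D|$. The paper phrases the iteration as a direct induction on $|\Pp^{(\le i)}(W)|\le c^i|W|$ rather than summing the geometric series, but the content is the same; your explicit remark about the convention $\Pp(v)=\emptyset$ and the harmless constant from the geometric sum is, if anything, slightly more careful than the paper's one-line induction.
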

\begin{proof}
  \cref{cor:nb-dominators} gives us that $|\Pp(v)|\le \kappa^{2\kappa}$ for every
  $v\in V(G)$ with $|N(v)|> \mu$.
  As $\Pp(W) \le \sum\limits_{v\in W} |\Pp(v)|$,
  we have $P(W)\le |W|\cdot \kappa^{2\kappa}$.
  A simple induction yields that for $i\le t$,
  \[ |\Pp^{(\le i)}(W)|\leq c^i|W|, \] where $c=\kappa^{2\kappa}$.
  With \cref{lem:inclusion-D-hat} we conclude
  \[|D_2| \le \kappa^{2s\kappa} \cdot|D\cup\hat{D}|.\]
  We conclude with \cref{lenzen-improved}, stating that
  $|\hat{D}\setminus D|\leq \rho(G)\cdot \gamma$.
\end{proof}

We update the set $R$ of vertices that still need to be dominated
as $V(G)\setminus N[D_1\cup D_2]$
and the residual neighborhoods $\Nr(v)= N(v)\cap R$
and residual degrees $\dr(v) =|\Nr(v)|$. We
prove next that $\dr(v)$ is bounded by a constant.

\begin{lemma}\label{lem:smalldegree}
  For every vertex $v\in V(G)$ we have $\dr(v)< \kappa^{s-1}(t+s-1+(s-1)\nu)$.
\end{lemma}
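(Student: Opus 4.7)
My plan is to argue by contradiction, showing that a vertex with large residual degree can be used as the starting point $v_1$ of a $\kappa$-dominating-sequence, and that the endpoint of any maximal extension of this sequence would dominate a neighbor that is supposed to still be red.

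\smallskip
Concretely, suppose toward a contradiction that some vertex $v \in V(G)$ satisfies $\dr(v) \ge \kappa^{s-1}(t+s-1+(s-1)\nu)$, where $R$ now refers to the \emph{updated} residual set $V(G)\setminus N[D_1\cup D_2]$. Since $D_1 \subseteq D_1 \cup D_2$, this updated set $R$ is contained in the set of red vertices used throughout phase~2, so $\Nr(v)$ in the new sense is also a subset of the residual neighborhood used by the construction of $\kappa$-dominating-sequences. I would pick an arbitrary subset $B_1 \subseteq \Nr(v)$ of size exactly $\kappa^{s-1}(t+s-1+(s-1)\nu)$. The single-element sequence $(v)$ together with $B_1$ satisfies the requirements of \cref{def:dom-sequence} (the base case $i=1$ is exactly $|B_1| \geq \kappa^{s-1}(t+s-1+(s-1)\nu)$), so $v$ admits at least one $\kappa$-dominating-sequence.

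\smallskip
Now I would extend $(v)$ to a maximal $\kappa$-dominating-sequence $(v_1,\ldots,v_m)$ with $v_1 = v$. By the very definition of the set $D_2$ selected in phase 2, the endpoint $v_m$ belongs to $D_2$. By \cref{lem:max-dom-sequence} we have $m < s$, so $s-m \geq 1$ and consequently
\[
|B_m| \;\geq\; \kappa^{s-m}(t+s-m+(s-m)\nu) \;\geq\; \kappa\cdot t \;\geq\; 1.
\]
Thus $B_m$ is non-empty, and by the chain of inclusions $B_m \subseteq B_{m-1} \subseteq \cdots \subseteq B_1 \subseteq \Nr(v) \subseteq R$, every vertex of $B_m$ lies in the updated set $R = V(G)\setminus N[D_1\cup D_2]$. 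On the other hand, by the inductive clause $B_i \subseteq \Nr(v_i)$ applied at $i = m$, every vertex of $B_m$ is a neighbor of $v_m$, and since $v_m \in D_2$ this gives $B_m \subseteq N(v_m) \subseteq N[D_1\cup D_2]$, contradicting $B_m \subseteq R$.

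\smallskip
The main subtlety to keep straight is that there are two meanings of ``red'' in play: the one used during phase~2 (with respect to $R_{\text{old}} = V(G)\setminus N[D_1]$, which drives the construction of dominating sequences) and the one in the statement of the lemma (with respect to the updated $R = V(G)\setminus N[D_1\cup D_2]$). The argument works because the updated $R$ is contained in $R_{\text{old}}$, so a counterexample for the updated residual degree is also a counterexample for the phase-2 residual degree, and we inherit all the machinery developed in this section. Once this containment is observed, the contradiction is immediate: the maximal sequence puts a vertex into $D_2$ that dominates vertices which, by hypothesis, were still supposed to be waiting for a dominator.
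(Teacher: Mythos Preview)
Your proof is correct and follows essentially the same approach as the paper: assume a vertex $v$ has too large residual degree in the updated set $R$, use $B_1\subseteq \Nr(v)$ to start a $\kappa$-dominating-sequence, extend to a maximal sequence whose endpoint $v_m$ lies in $D_2$ by definition, and obtain a contradiction since $v_m$ dominates the nonempty set $B_m\subseteq B_1\subseteq R$. The paper carries out the extension more explicitly (reproducing the pseudo-cover step from \cref{lem:max-dom-sequence} to obtain $v_2,B_2$ and then ``continuing''), whereas you invoke the extension abstractly and are a bit more careful in spelling out the containment between the phase-2 red set $R_{\text{old}}$ and the updated $R$; the logical content is the same.
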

\begin{proof}
  Assume, for the sake of reaching a contradiction, that there is a vertex $v$
  satisfying \linebreak$\dr(v)   \ge  \kappa^{s -1}(t+s -1+(s -1)\nu)$ and let $B_1\coloneqq \Nr(v)$. Note that $v\not\in D_1\cup D_2$, as the residual degree of vertices from this
  set is $0$.
  Exactly as in the proof of \cref{lem:max-dom-sequence}, since $v\not\in D_1$,
  we have that $B_1$ can be dominated by at most $\kappa$ elements. Hence by
  \cref{lem:cover-to-pseudo-cover}, we can derive a
  pseudo-cover $S=(u_1,\ldots,u_j)$ of
  $B_1$, where $j\le \kappa$. This leads to the existence of some vertex~$u$ in~$S$
  that covers at least a $1/\kappa$ fraction of $B_1\setminus X$ for some $X$ of size
  at most $\nu$. This yields a
  vertex~$v_2$, and a set~$B_2$.

  We can then continue and build a maximal $k$-dominating-sequence
  $(v_1,\ldots v_m)$ of $v$. By construction, this sequence has the property
  that every $v_i$ dominates some elements of $B_1$. This is true in particular
  for $v_m$, but also we have that $v_m\in D_2$, hence a contradiction.
\end{proof}

\begin{tcolorbox}
  Let $\Dr\coloneqq~\kappa^{s-1}(t+s-1+(s-1)\kappa^3)$.
\end{tcolorbox}

As it remains to dominate the set $R$, let us fix a minimum dominating
set $D_R$ of size $\gamma_R$ for $R$.

\smallskip
\begin{tcolorbox}
\begin{itemize}
\item Let $D_R\subseteq V(G)$ be a minimum dominating
  set of $R$ and let $\gamma_R\coloneqq |D_R|$.
\item Let $\eta\in [0,1]$ be such that $|(D_1 \cup D_2)\cap D| =\eta\gamma$.
\end{itemize}
\end{tcolorbox}

\begin{lemma}\label{lem:size-DR}
$\gamma_R\leq (1-\eta)\gamma$.
\end{lemma}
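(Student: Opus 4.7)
The plan is to exhibit an explicit dominating set of $R$ of size $(1-\eta)\gamma$, namely $D \setminus (D_1 \cup D_2)$, and conclude by the minimality of $\gamma_R$. The size calculation is immediate: by definition of $\eta$ we have $|(D_1 \cup D_2) \cap D| = \eta\gamma$, so
\[
  |D \setminus (D_1 \cup D_2)| = |D| - |(D_1\cup D_2)\cap D| = \gamma - \eta\gamma = (1-\eta)\gamma.
\]

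The only content to verify is that $D \setminus (D_1 \cup D_2)$ really dominates $R$. I would argue this by a direct contrapositive/containment argument: fix any $r \in R$. Since $D$ is a dominating set of the whole graph $G$, there exists $d \in D$ with $r \in N[d]$. If $d$ were an element of $D_1 \cup D_2$, then $r \in N[d] \subseteq N[D_1 \cup D_2]$, which contradicts $r \in R = V(G)\setminus N[D_1 \cup D_2]$. Hence $d \in D \setminus (D_1 \cup D_2)$, as desired. So $D \setminus (D_1 \cup D_2)$ covers $R$.

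There is essentially no obstacle here; the bound follows purely from set-theoretic bookkeeping together with the definition $R = V(G)\setminus N[D_1\cup D_2]$. The lemma is really just recording, for later use in bounding $|D_3|$, the obvious fact that every element of $D$ already absorbed into $D_1\cup D_2$ is \emph{saved} from the budget needed for the remaining residual vertices, so that the worst case for phase~3 occurs when phases~1 and~2 contribute no vertex of $D$ (i.e., when $\eta = 0$ and $\gamma_R \leq \gamma$).
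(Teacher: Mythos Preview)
Your proof is correct and follows exactly the same approach as the paper: exhibit $D\setminus(D_1\cup D_2)$ as a dominating set for $R$ and compute its size using the definition of $\eta$. The paper's proof is a one-liner that asserts this set dominates $R$ without spelling out the verification you give, so your version is simply a more detailed rendering of the same argument.
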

\begin{proof}
$D\setminus(D_1\cup D_2)$ is a dominating set for $R$, hence
$|D_R|\leq |D\setminus(D_1\cup D_2)|$.
\end{proof}

As every vertex of $D_R$ can dominate at most $\Dr+1$ vertices (its $\Dr$ residual neighbors and itself), we have the following corollary.

\begin{corollary}\label{cor:size-R}
  $|R|\leq (\Dr+1)\gamma_R$.
\end{corollary}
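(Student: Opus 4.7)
The plan is to unwind the definitions and apply the bound on residual degrees from \cref{lem:smalldegree}. Since $D_R$ is a dominating set of $R$, every element of $R$ lies in $N[v]$ for some $v\in D_R$, so $R\subseteq \bigcup_{v\in D_R}(N[v]\cap R)$, and hence
\[
|R|\leq \sum_{v\in D_R} |N[v]\cap R|.
\]

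First I would observe that for any vertex $v\in V(G)$, the set $N[v]\cap R$ equals $\Nr(v)\cup(\{v\}\cap R)$, so $|N[v]\cap R|\leq \dr(v)+1$. By \cref{lem:smalldegree} we have $\dr(v)<\Dr$, so $|N[v]\cap R|\leq \Dr+1$ for every $v$ (note that the bound holds regardless of whether $v$ itself lies in $R$, since in the worst case we count both $v$ and all its residual neighbors).

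Plugging this into the sum above yields
\[
|R|\leq \sum_{v\in D_R}(\Dr+1) = (\Dr+1)\,\gamma_R,
\]
which is the claimed inequality. There is no genuine obstacle here; the corollary is a direct counting consequence of \cref{lem:smalldegree} together with the fact that $|D_R|=\gamma_R$, and the only minor point to be careful about is the $+1$ term accounting for $v$ possibly belonging to $R$ itself.
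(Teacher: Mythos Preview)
Your argument is correct and matches the paper's own justification, which is simply the sentence preceding the corollary: every vertex of $D_R$ dominates at most $\Dr+1$ vertices of $R$ (its at most $\Dr$ residual neighbors plus possibly itself), so summing over the $\gamma_R$ vertices of $D_R$ gives the bound. Your write-up just makes this counting explicit.
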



\section{Phase 3: LP-based approximation in graphs of bounded maximum degree}\label{sec:LP}

\subsection{LP-based approximation}
In the light of \cref{cor:size-R}, we could now simply choose
$R$ as the set $D_3$ to get a
constant factor approximation. We can improve the bounds however, by
proceeding with an LP-based approximation. The dominating
set problem can be formulated as an integer linear program~(ILP).
Note that it remains to dominate the set $R$, which leads to the
following ILP.

\[
  \begin{array}{l l l}
    \text{Minimize }    & \sum_{v\in V}x_v &\\
    \text{Subject to }\quad & \sum_{u\in N[v]}x_u \ge 1 \quad &\forall v\in R \\
                            & x_v\in \{0,1\}                   &\forall v \in V\\
  \end{array}
\]

By relaxing the condition that $x_v\in \{0,1\}$ to $x_v\in [0,1]\subseteq \mathbb{R}$,
we obtain the corresponding linear program (LP). By a result of
Bansal and Umboh~\cite{bansal2017tight} one can obtain a constant
factor approximation of a dominating set from a solution to the LP.
The proof can easily be adapted to the problem of approximating a
dominating set of the set $R$.

\begin{lemma}\label{lem:ds-factor}
  Assume $G$ has an orientation with maximum out-degree $d$.
  Let $\big(x_v\big)_{v\in V(G)}$ be a solution to the
  $R$-dominating set~LP. Let $H:=\{v\in V(G) : x_v\ge 1/(2d+1)\}$ and
  let $U:=\{v\in R : v\not\in N[H]\}$. Then $H\cup U$ dominates~$R$
  and has size at most $(2d+1)\cdot \gamma_R$.
\end{lemma}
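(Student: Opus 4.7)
The plan is to verify domination directly and then bound the size by combining the LP constraints for $v\in U$ with a sum-exchange trick that uses the orientation hypothesis. Domination is immediate: every $v\in R$ either lies in $N[H]$ and is dominated by a vertex of $H$, or by definition of $U$ it lies in $U$ and dominates itself. Since $U\cap N[H]=\emptyset$ we have $H\cap U=\emptyset$, so $|H\cup U|=|H|+|U|$.

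For the size, set $S:=\sum_v x_v$. Because the LP is a relaxation of the $R$-dominating integer program, which has value $\gamma_R$, we have $S\le \gamma_R$. Markov's inequality yields $|H|\le (2d+1)\sum_{v\in H}x_v$, so it suffices to prove
\[
|U| \;\le\; (2d+1)\!\left(S - \sum_{v\in H} x_v\right),
\]
which, added to the bound on $|H|$, gives $|H|+|U|\le (2d+1)\,S\le (2d+1)\gamma_R$. I would fix an orientation of $G$ of maximum out-degree $d$ and write $N^+(u)$, $N^-(u)$ for out- and in-neighbors. For every $v\in U$, the LP constraint reads $1\le x_v+\sum_{u\in N^+(v)}x_u+\sum_{u\in N^-(v)}x_u$; moreover $v\in U$ forces $v$ and all its neighbors to lie outside $H$, so every summand is less than $1/(2d+1)$. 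Summing over $v\in U$, the diagonal term contributes at most $|U|/(2d+1)$; the out-neighbor term, after swapping sums, equals $\sum_u x_u\,|N^-(u)\cap U|$, and since any such $u$ is a neighbor of a $U$-vertex and hence outside $H$, it is bounded by $\tfrac{1}{2d+1}\sum_u |N^-(u)\cap U|=\tfrac{1}{2d+1}\sum_{v\in U}|N^+(v)|\le d|U|/(2d+1)$; the in-neighbor term, also by swap, equals $\sum_u x_u\,|N^+(u)\cap U|\le d\sum_{u\notin H}x_u = d\bigl(S-\sum_{v\in H}x_v\bigr)$ using $|N^+(u)|\le d$. Collecting these three estimates and isolating $|U|$ leaves a factor $d/(2d+1)$ on the left and produces exactly the displayed inequality.

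The only real obstacle is the in-neighbor contribution $\sum_{u\in N^-(v)}x_u$, since a naive per-vertex bound is useless: the in-degrees in the orientation are unrestricted. The crux is that both sum-swaps transform every quantity of the form $|N^-(\cdot)\cap U|$ into one of the form $|N^+(\cdot)|$ or into a total summed along out-edges, where the out-degree hypothesis applies. This is the only place in which the orientation assumption is used and is precisely the ingredient that lifts the rounding scheme of Bansal and Umboh from ordinary dominating sets to the $R$-covering variant needed here.
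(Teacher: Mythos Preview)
Your proof is correct and follows the same line as the paper's: bound $|H|$ directly from the threshold, sum the LP constraints over $U$, and exploit the out-degree hypothesis together with $N[U]\cap H=\emptyset$ via a sum exchange. The only difference is organizational: the paper first records the per-vertex consequence $\sum_{u\in N^{-}(v)}x_u\ge d/(2d+1)$ for each $v\in U$ (subtracting the self- and out-neighbor contributions before summing) and then performs a single swap to obtain $|U|\le (2d+1)\sum_{u\in N^{-}(U)}x_u$, whereas you sum the full constraint first and bound the three resulting terms separately.
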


Observe that when given the solution $\big(x_v\big)_{v\in V(G)}$ to the
$R$-dominating set~LP the lemma gives rise to a simple LOCAL algorithm.
First select all vertices $v$ with $x_v\geq 1/(2d+1)$ into a dominating set
and mark all their neighbors as dominated. Then select all non-dominated
vertices of $R$ into the dominating set. Clearly, $H\cup U$ is a dominating
set of $R$. The rest of this section is devoted to the proof of the claimed
approximation factor. The proof follows the presentation of Bansal and
Umboh~\cite{bansal2017tight} with the improved bounds of Dvo\v{r}\'ak~\cite{dvovrak2019distance} (presented in \cref{lem:ds-factor}).
As every solution to the ILP is also a solution
to the LP we have $\sum_{v\in V(G)}x_v\leq \gamma_R$.

Consider an orientation of $G$ such that the neighborhood of each vertex
$v$ is decomposed into $N^{in}(v)$ and $N^{out}(v)$, where $|N^{out}(v)|\le d$.

\begin{claim}
  For every vertex $v\in U$, we have
  $\big(\sum_{u\in N^{in}(v)}x_u \big)\ge d/(2d+1)$.
\end{claim}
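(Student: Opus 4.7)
The plan is to use the LP constraint at $v$ together with the fact that, because $v\in U$, neither $v$ itself nor any of its neighbors lies in $H$, so each of their LP values is strictly bounded by $1/(2d+1)$. Bounding the contribution of the out-neighbors by the out-degree $d$ then leaves the claimed lower bound on the contribution of the in-neighbors.

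In detail, I would first record that because $v\in U\subseteq R$, the LP feasibility constraint gives
\[
  x_v + \sum_{u\in N^{in}(v)}x_u + \sum_{u\in N^{out}(v)}x_u \;=\; \sum_{u\in N[v]}x_u \;\ge\; 1,
\]
where I used that the orientation partitions $N(v)$ into $N^{in}(v)$ and $N^{out}(v)$. Next, since $v\notin N[H]$, I have both $v\notin H$ and $N(v)\cap H=\emptyset$, so by the definition of $H$ every vertex $u\in N[v]$ satisfies $x_u < 1/(2d+1)$. In particular $x_v < 1/(2d+1)$, and using $|N^{out}(v)|\le d$ I obtain
\[
  \sum_{u\in N^{out}(v)}x_u \;<\; \frac{d}{2d+1}.
\]

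Combining these three ingredients, I rearrange the LP constraint to isolate the in-neighbor sum:
\[
  \sum_{u\in N^{in}(v)}x_u \;\ge\; 1 - x_v - \sum_{u\in N^{out}(v)}x_u \;>\; 1 - \frac{1}{2d+1} - \frac{d}{2d+1} \;=\; \frac{d}{2d+1},
\]
which is exactly the desired inequality. I do not anticipate a real obstacle here: the claim is essentially a one-line calculation, and the only thing to be careful about is that I use the out-degree bound on the correct side of $v$ (namely $N^{out}(v)$, which is the side controlled by the orientation guaranteed by \cref{lem:orientations}), so that the residual mass $1-(d{+}1)/(2d{+}1) = d/(2d{+}1)$ must be carried by the in-neighbors.
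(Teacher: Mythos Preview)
Your proof is correct and follows essentially the same approach as the paper: use $v\in U\subseteq R$ to invoke the LP constraint, use $v\notin N[H]$ to bound $x_v$ and each $x_u$ for $u\in N^{out}(v)$ by $1/(2d+1)$, apply the out-degree bound $|N^{out}(v)|\le d$, and rearrange. The paper's argument is the same one-line calculation.
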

\begin{proof}
  As $v$ is not in $H$, $x_v<1/(2d+1)$. As $v$ is not in $N(H)$, for every
  vertex $u\in N^{out}(v)$ we have $x_u<1/(2d+1)$. As $|N^{out}(v)|\le d$,
  and by the first LP condition
  $\big(\sum_{u\in N^{in}(v)}x_u \big)\ge 1- \frac{1}{2d+1} - \frac{d}{2d+1}
  \ge \frac{d}{2d+1}$.
\end{proof}

We can now bound the size of $U$ and $H$
\begin{claim}
  $|H\cup U| \le (2d+1)\sum_{v\in V}x_v$.
\end{claim}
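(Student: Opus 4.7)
The plan is to bound $|H|$ and $|U|$ separately and then combine, using the crucial disjointness of $H$ from the in-neighborhoods of vertices in $U$ to avoid double-counting. Note first that $H$ and $U$ are disjoint: any $v \in U$ satisfies $v \in R \setminus N[H]$, so in particular $v \notin H$.

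For $H$, the definition gives $x_v \ge 1/(2d+1)$ for every $v \in H$, so immediately
\[
|H| \;\le\; (2d+1)\sum_{v\in H} x_v.
\]

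For $U$, I would start from the previous claim, which states that $\sum_{u\in N^{in}(v)} x_u \ge d/(2d+1)$ for each $v \in U$. The key observation (and the only subtle point in the argument) is that since $v \in U$ means $v \notin N[H]$, every in-neighbor $u$ of $v$ lies outside $H$; thus the inner sum is supported on $V \setminus H$. Summing over $v \in U$ and swapping the order of summation yields
\[
\frac{d}{2d+1}\,|U| \;\le\; \sum_{v\in U}\sum_{u\in N^{in}(v)} x_u \;=\; \sum_{u\in V\setminus H} x_u \cdot \bigl|\{v\in U : u\in N^{in}(v)\}\bigr|.
\]
By definition of the orientation, $u \in N^{in}(v)$ iff $v \in N^{out}(u)$, so the cardinality on the right is at most $|N^{out}(u)| \le d$. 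Dividing by $d/(2d+1)$ gives
\[
|U| \;\le\; (2d+1)\sum_{u\in V\setminus H} x_u.
\]

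Finally, adding the two bounds, and using that the supports $H$ and $V \setminus H$ partition $V$, we obtain
\[
|H\cup U| \;=\; |H|+|U| \;\le\; (2d+1)\sum_{v\in H} x_v + (2d+1)\sum_{v\in V\setminus H} x_v \;=\; (2d+1)\sum_{v\in V} x_v,
\]
which is the claimed inequality. The only place where care is needed is the restriction of the in-neighborhood sum to $V \setminus H$; without it the two bounds would combine into $2(2d+1)\sum_v x_v$, losing the factor of two that matters for the final approximation ratio.
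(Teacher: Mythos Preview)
Your proof is correct and follows essentially the same approach as the paper: bound $|H|$ directly from the threshold, bound $|U|$ via the previous claim by swapping the order of summation and using the out-degree bound $d$, and then combine using that $N^{in}(U)\cap H=\emptyset$ (equivalently, that the in-neighbors of $U$ lie in $V\setminus H$) to avoid double-counting. Your presentation is in fact slightly more explicit than the paper's about why the two sums combine without overlap.
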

\begin{proof}
  First, observe that
  $|H|\le (2d+1)\sum_{v\in H}\frac{1}{2d+1}
  \le (2d+1)\sum_{v\in H}(x_v)$.
  Then observe that\linebreak
  $|U|\le \frac{2d+1}{d}\cdot\sum_{v\in U} \frac{d}{2d+1}
  \le \frac{2d+1}{d}\sum_{v\in U}\sum_{u\in N^{in}(v)}x_u
  \le \frac{2d+1}{d}\sum_{u\in N^{in}(U)} (d\cdot x_u)$.
  Form which we conclude
  $|U|
  \le (2d+1) \sum_{u\in N^{in}(U)} x_u$.

  By definition of $U$, we have that $N(U)$ and $H$ are disjoint, this also
  holds for $H$ and $N^{in}(U)$, hence $|H\cup U| \le (2d+1)\sum_{v\in V}x_v\leq (2d+1)\gamma_R$.
\end{proof}

\subsection{Solving LPs locally}

As shown by Kuhn et al.~\cite{kuhn2006price} we can locally approximate
general covering LPs, in particular the above $R$-dominating set LP,
when the maximum
degree of the graph is bounded. More precisely, they show how to compute
a $\Delta^{1/r}$-approximation in $\Oof(r^2)$ rounds. Assuming for a
moment that $\Delta$ is bounded by an absolute constant we can
choose $r$ such that $\Delta^{1/r}=1+\e$,
hence $r=(\log \Delta)/(\log (1+\e))$, which is a constant depending
only on $\Delta$ and $\e$ in order to compute a
$(1+\e)$-approximation for the $R$-dominating set LP.

\begin{corollary}\label{cor:LP-approx-general}
  Assume $G$ has an orientation with maximum out-degree
  $d$. For every $\e>0$ we can
  compute a set $D'$ of size at most~$(2d+1)(1+\e)\gamma_R$ that dominates
  $R$ in $\Oof(\log \Delta/(\log (1+\e))$ rounds in the LOCAL
  model.
\end{corollary}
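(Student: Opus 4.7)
The plan is to chain together two ingredients: the distributed LP-approximation algorithm of Kuhn, Moscibroda, and Wattenhofer, and the deterministic rounding scheme of \cref{lem:ds-factor}. First I would feed the $R$-dominating set LP into the Kuhn et al.\ algorithm to produce, in $\Oof(\log \Delta / \log(1+\e))$ rounds, a feasible fractional solution $(x_v)_{v\in V(G)}$ whose objective value is within a factor $(1+\e)$ of the LP optimum. Concretely, choosing the parameter $r$ of their algorithm so that $\Delta^{1/r}\le 1+\e$, i.e., $r=\lceil \log\Delta/\log(1+\e)\rceil$, yields such a solution. Since every feasible integral solution is also feasible for the LP, the LP optimum is bounded above by the integer optimum $\gamma_R$, so $\sum_{v} x_v \le (1+\e)\gamma_R$.

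Next I would apply the rounding of \cref{lem:ds-factor} to this fractional solution: set $H=\{v:x_v\ge 1/(2d+1)\}$ and $U=\{v\in R: v\notin N[H]\}$, and output $D'\coloneqq H\cup U$. Each vertex decides membership in $H$ from $x_v$ alone and membership in $U$ from the values of its neighbors, so the rounding contributes only $\Oof(1)$ additional rounds and the overall round count remains $\Oof(\log \Delta/\log(1+\e))$. The lemma guarantees that $H\cup U$ dominates $R$ and that $|H\cup U|\le (2d+1)\sum_{v} x_v$. Combined with the bound from the LP-approximation step, this gives $|D'|\le (2d+1)(1+\e)\gamma_R$, which is the claimed approximation guarantee.

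The only subtlety — the main point to verify — is that \cref{lem:ds-factor} remains valid when we feed it a \emph{feasible but suboptimal} fractional solution rather than an optimum of the LP, because the Kuhn et al.\ algorithm returns only a $(1+\e)$-approximation. Inspecting the two internal claims in the proof of \cref{lem:ds-factor}: the lower bound $\sum_{u\in N^{in}(v)} x_u \ge d/(2d+1)$ for $v\in U$ uses only the feasibility constraint $\sum_{u\in N[v]}x_u\ge 1$ together with the definitions of $H$ and $U$, and the inequality $|H\cup U|\le (2d+1)\sum_v x_v$ uses only nonnegativity of the $x_v$ and the disjointness of $H$ and $N^{in}(U)$. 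No optimality of $(x_v)$ is ever invoked. Hence the rounding step goes through verbatim for the Kuhn--Moscibroda--Wattenhofer output, completing the proof.
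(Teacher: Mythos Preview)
Your proposal is correct and follows essentially the same approach as the paper: apply the Kuhn--Moscibroda--Wattenhofer distributed LP solver with $r=\lceil\log\Delta/\log(1+\e)\rceil$ to obtain a feasible fractional solution of cost at most $(1+\e)\gamma_R$, then round via \cref{lem:ds-factor}. Your explicit verification that the rounding bound $|H\cup U|\le(2d+1)\sum_v x_v$ relies only on feasibility (not optimality) of $(x_v)$ is a useful clarification, since the statement of \cref{lem:ds-factor} is phrased for an optimal solution but its proof indeed only uses the LP constraints.
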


\subsection{From bounded residual degree to bounded degree}

It remains to establish the situation that the maximum degree $\Delta$
of our graph is bounded. As argued, we have $|R|\leq (\Dr+1)\gamma_R$.
As only
the vertices of $R$ need to be
dominated it suffices to keep only the vertices that have a
neighbor in $R$; other vertices are not useful as dominators.
Also, when two vertices $u,v\in V(G)\setminus R$ have exactly
the same neighbors in $R$, that is, $\Nr(u)=\Nr(v)$,
it suffices to keep one of $u$ and $v$.
Note that we can locally decide whether $\Nr(u)=\Nr(v)$.
For every set $N\subseteq R$ such that there is a vertex $v$
with $\Nr(v)=N$ we choose the one with the lowest identifier
as a representative. We construct the graph $G'$ consisting
of $R$ and all edges between vertices in $R$ as well as the
set of all representatives and a minimal set of edges such that
$\Nr(v)$ is equal in $G$ and $G'$ for all representatives $v$.
Hence in $G'$ we have $\Nr(u)\neq \Nr(v)$ for all $u\neq v\in V(G')\setminus R$.
As argued above, every $R$-dominating set in $G$ can
be transformed into an $R$-dominating set of the same size
in $G'$ (by choosing appropriate representatives) and
every $R$-dominating set in $G'$ is an $R$-dominating set in~$G$.
We can hence continue to work with the graph $G'$. In order to
avoid complicated notation we simply assume that $G=G'$.

Note that in general we could have $|V(G)|\in \Omega(2^{|R|})$.
When $\nabla_1(G)$ bounded, however,  it follows from Lemma 4.3 of \cite{gajarsky2017kernelization} that $|V(G)|\leq (4^{\nabla_1}+2\nabla_1)|R|$,
which is is linear
in~$|R|$. This is crucial for our further argumentation.

%

\begin{corollary}\label{cor:size-g-general}
  $|V(G)| \le (4^{\nabla_1}+2\nabla_1)(\Dr+1)\gamma_R$.
\end{corollary}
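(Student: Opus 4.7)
The plan is to combine two ingredients already at our disposal: the linear bound on $|R|$ from \cref{cor:size-R}, and the kernelization-style bound from Gajarský et al.\ that is explicitly invoked in the sentence preceding the corollary.

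First, I would verify that the modified graph (which we renamed $G$ in the preceding paragraph) satisfies the hypothesis of Lemma~4.3 of~\cite{gajarsky2017kernelization}. By construction, every vertex of $V(G)\setminus R$ has at least one neighbor in $R$ (otherwise it is useless for dominating $R$ and would have been discarded), and the representative construction guarantees $\Nr(u)\neq \Nr(v)$ for all distinct $u,v\in V(G)\setminus R$. Furthermore, $\nabla_1(G)\leq \nabla_1$ still holds for the modified graph, since reducing to representatives only deletes vertices and edges, and taking $1$-shallow minors commutes with taking subgraphs in the sense that $\nabla_1$ cannot increase.

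Second, with these hypotheses met, the cited lemma immediately yields
\[
|V(G)|\ \leq\ (4^{\nabla_1}+2\nabla_1)\,|R|.
\]
Combined with \cref{cor:size-R}, which gives $|R|\leq (\Dr+1)\gamma_R$, this chains directly into
\[
|V(G)|\ \leq\ (4^{\nabla_1}+2\nabla_1)(\Dr+1)\gamma_R,
\]
which is the desired inequality.

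There is no real obstacle here beyond careful bookkeeping: the combinatorial content has been pushed into \cref{cor:size-R} (residual-degree bound, proved earlier) and into the external result from~\cite{gajarsky2017kernelization}, whose hypotheses are exactly what the representative-collapsing construction of the preceding subsection was designed to enforce. The only minor point to spell out is the inheritance $\nabla_1(G')\leq \nabla_1(G)$ under representative collapsing; everything else is a one-line composition of inequalities.
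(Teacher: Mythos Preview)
Your proposal is correct and follows exactly the paper's approach: the corollary has no separate proof in the paper and is stated immediately after the sentence invoking Lemma~4.3 of~\cite{gajarsky2017kernelization} to get $|V(G)|\le (4^{\nabla_1}+2\nabla_1)|R|$, which is then chained with \cref{cor:size-R}. Your verification of the hypotheses (distinct $R$-neighborhoods, every non-$R$ vertex touching $R$, and $\nabla_1$ not increasing under the representative-collapsing subgraph) is a welcome addition of rigor that the paper leaves implicit.
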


\subsection{Conclusion of the algorithm}

Given any $\epsilon>0$ we now select all vertices with high degree
$\Gamma=\Gamma(\epsilon)$ into our
dominating set, where $\Gamma$ is chosen such that there exist at
most $\epsilon\gamma$ vertices of degree at least $\Gamma$.

\begin{tcolorbox}[colback=red!5!white,colframe=red!50!black]
  Let $\Gamma\coloneqq 4\nabla_1(4^{\nabla_1}+2\nabla_1)(\Dr+1)/\epsilon$ \quad and \quad
  $D_3^1\coloneqq \{v\in V(G) ~:~ d(v)>\Gamma\}$.
\end{tcolorbox}

\begin{lemma}\label{lem:size-D31}
  $|D_3^1|\le (\epsilon/2)\gamma_R$.
\end{lemma}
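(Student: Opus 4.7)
The plan is a direct density/counting argument: since $\nabla_1$ bounds $\nabla_0(G)$, the graph is globally sparse, so it can contain only few vertices of large degree, and the total number of vertices is itself linear in $\gamma_R$ by \cref{cor:size-g-general}.

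More precisely, I would start from the handshake lemma combined with the density bound $|E(H)|/|V(H)| \le \nabla_0(G) \le \nabla_1$ for every subgraph $H$ of $G$. In particular,
\[
\sum_{v \in V(G)} d(v) \;=\; 2|E(G)| \;\le\; 2\nabla_1 \cdot |V(G)|.
\]
Every vertex $v \in D_3^1$ contributes more than $\Gamma$ to this sum, so
\[
\Gamma \cdot |D_3^1| \;<\; \sum_{v \in V(G)} d(v) \;\le\; 2\nabla_1 \cdot |V(G)|.
\]

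Next I would plug in the bound from \cref{cor:size-g-general}, namely $|V(G)| \le (4^{\nabla_1}+2\nabla_1)(\Dr+1)\,\gamma_R$, to obtain
\[
|D_3^1| \;<\; \frac{2\nabla_1\,(4^{\nabla_1}+2\nabla_1)(\Dr+1)}{\Gamma}\,\gamma_R.
\]
Substituting the definition $\Gamma = 4\nabla_1(4^{\nabla_1}+2\nabla_1)(\Dr+1)/\epsilon$ makes the fraction collapse to $\epsilon/2$, yielding $|D_3^1| < (\epsilon/2)\gamma_R$ as required. The calculation is routine and the only ``obstacle'' is bookkeeping; in particular, one should note that $\Gamma$ was defined precisely so that this arithmetic works out, and that no additional locality concern arises here since computing $D_3^1$ only requires each vertex to know its own degree (one round). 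No structural argument beyond the uniform sparsity guarantee $\nabla_0 \le \nabla_1$ and the already established linear-in-$\gamma_R$ size bound is needed.
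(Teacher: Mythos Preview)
Your argument is correct and is essentially the same as the paper's: both use the handshake lemma together with the density bound $|E(G)|\le \nabla_1|V(G)|$ and \cref{cor:size-g-general}, the only cosmetic difference being that the paper phrases it as a proof by contradiction while you give the direct computation.
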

\begin{proof}
  We assume the opposite and count the number of edges of $G$.
  When we sum the degree of the vertices, we get twice the number of
  edges. Hence $2\cdot |E(G)| > 2\nabla_1(4^{\nabla_1}+2\nabla_1)(\Dr+1)\gamma_R$.
  Therefore, with \cref{cor:size-g-general},
  $|E(G)|> \nabla_1|V(G)|$, a contradiction.
\end{proof}

After picking $D_3^1$ into the dominating set, marking the neighbors of
$D_3^1$ as dominated and updating the set $R$, we can delete the
vertices of $D_3^1$. We are left with a graph of maximum degree~$\Gamma$.

\begin{tcolorbox}[colback=red!5!white,colframe=red!50!black]
  Given $\epsilon>0$, let $D_3^2$ be the set computed by the LOCAL algorithm of \cref{cor:LP-approx-general} with parameter $\epsilon/2$.
\end{tcolorbox}

Let $D_3=D_3^1 \cup D_3^2$. We already noted that the definition of $D_3$ implies that
$D_1\cup D_2\cup D_3$ is a dominating set of $G$. We now conclude the
analysis of the size of this computed set.

\begin{lemma}\label{lem:D3-LP}
  We have that $|D_3| \le (2\nabla_0+1)(1+\e)\gamma_R$.
\end{lemma}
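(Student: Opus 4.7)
The plan is to bound $|D_3^1|$ and $|D_3^2|$ separately and then combine them, using the fact that the LP-based step is applied after $D_3^1$ has already been removed and the residual set $R$ has been updated.

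First, I would invoke \cref{lem:size-D31} directly to conclude that $|D_3^1|\le (\epsilon/2)\gamma_R$. For $D_3^2$, I want to apply \cref{cor:LP-approx-general} with parameter $\epsilon/2$ to the graph obtained after deleting $D_3^1$ and restricting attention to the new residual set $R':=R\setminus N[D_3^1]$. Two small facts make this work cleanly: (i) by \cref{lem:orientations} every (sub)graph of $G$ admits an orientation of maximum out-degree at most $\nabla_0$, so we may take $d=\nabla_0$ when applying the corollary; and (ii) since $D_R$ still dominates $R'\subseteq R$, we have $\gamma_{R'}\le \gamma_R$. Therefore
\[
|D_3^2|\le (2\nabla_0+1)(1+\epsilon/2)\,\gamma_{R'}\le (2\nabla_0+1)(1+\epsilon/2)\,\gamma_R.
\]

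Adding the two bounds gives
\[
|D_3|\le \Bigl(\tfrac{\epsilon}{2}+(2\nabla_0+1)\bigl(1+\tfrac{\epsilon}{2}\bigr)\Bigr)\gamma_R
= \Bigl((2\nabla_0+1)+\epsilon(\nabla_0+1)\Bigr)\gamma_R,
\]
and since $\nabla_0+1\le 2\nabla_0+1$ (as $\nabla_0\ge 0$), this is at most $(2\nabla_0+1)(1+\epsilon)\gamma_R$, as required.

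There is no real obstacle here: everything reduces to a bookkeeping computation that matches the two $\epsilon/2$ terms against the target $(1+\epsilon)$ factor. The only point that deserves a line of justification is the legitimacy of applying \cref{cor:LP-approx-general} after the deletion of $D_3^1$, namely that the out-degree bound $\nabla_0$ is inherited by subgraphs (immediate from \cref{lem:orientations}) and that the maximum degree of the remaining graph is bounded by $\Gamma$, so the round complexity stays constant for fixed $\epsilon$.
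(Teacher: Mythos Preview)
Your proof is correct and follows essentially the same approach as the paper: bound $|D_3^1|$ via \cref{lem:size-D31}, bound $|D_3^2|$ via \cref{cor:LP-approx-general} with $d=\nabla_0$ (using \cref{lem:orientations}), and add. You are simply more explicit than the paper about the arithmetic and about why $\gamma_{R'}\le\gamma_R$ and the out-degree bound persist after deleting $D_3^1$.
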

\begin{proof}
By \cref{lem:orientations} $G$ has an orientation with
out-degree $d\leq \nabla_0$. By
\cref{cor:LP-approx-general} and \cref{lem:size-D31} we have
  $|D_3^2|\leq (2\nabla_0+1)(1+\e/2)\gamma_R$, and $|D_3^1|\le (\epsilon/2)\gamma_R$.
\end{proof}

Now our main theorem, \cref{thm:main-general}, follows by summing the sizes of
$D_1$, $D_2$ and $D_3$.

\begin{lemma}
$|D_1\cup D_2\cup D_3|\le 2(\nabla_0+1)(\kappa^{2s\kappa}+2)\gamma$.
Hence, putting  $c=2(\nabla_1+1)$, we have
\[
|D_1\cup D_2\cup D_3|\le \bigl(c^{2c^2}+c\bigr)\gamma.
\]
\end{lemma}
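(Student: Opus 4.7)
The plan is a routine summation of the three bounds already proved for $|D_1|$, $|D_2|$, $|D_3|$ in the previous lemmas, followed by algebraic coarsening via the parameter hierarchy $\nabla_0\le\nabla_1$, $\nn\le\nabla_1+1$, $s\le 2\nabla_0+1$ fixed in the setup box.

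First, I would collect the ingredients. \cref{lem:size-D1} gives $|D_1\setminus D|\le\rho(G)\gamma$, and since $|D_1\cap D|\le\gamma$ trivially, $|D_1|\le(\rho(G)+1)\gamma$. \cref{lem:small-D-hat} gives $|D_2|\le\kappa^{2s\kappa}(\rho(G)+1)\gamma$ directly. \cref{lem:D3-LP} together with $\gamma_R\le\gamma$ (which is \cref{lem:size-DR}) gives $|D_3|\le(2\nabla_0+1)(1+\epsilon)\gamma$, and I would let $\epsilon$ be chosen arbitrarily small so that it is absorbed into the rounded constants.

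Second, I would add these three bounds and apply \cref{lem:bounds}, which yields $\rho(G)+1\le 2(\nabla_0+1)$ and $2\nabla_0+1\le 2(\nabla_0+1)$. Collecting the common factor $2(\nabla_0+1)$ produces
\[
|D_1\cup D_2\cup D_3|\le 2(\nabla_0+1)\bigl(\kappa^{2s\kappa}+1+1+\epsilon\bigr)\gamma\le 2(\nabla_0+1)(\kappa^{2s\kappa}+2)\gamma,
\]
which is the first displayed inequality.

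Third, for the closed form with $c=2(\nabla_1+1)$, I would use $2(\nabla_0+1)\le c$, $\kappa=\max\{2\nabla_0,2\nn\}\le c$, and $s\le 2\nabla_0+1\le c-1$ (the last from $\nabla_0\le\nabla_1$ and $c=2\nabla_1+2$). Together these give $\kappa^{2s\kappa}\le c^{2(c-1)c}=c^{2c^2-2c}$. The only delicate point, and the main bookkeeping obstacle, is that a naive bound $2(\nabla_0+1)(\kappa^{2s\kappa}+2)\le c(c^{2c^2}+2)=c^{2c^2+1}+2c$ produces a strictly larger leading exponent than the claimed $c^{2c^2}+c$. Using the sharper $s\le c-1$ instead, one has $c\cdot\kappa^{2s\kappa}\le c^{2c^2-2c+1}\le c^{2c^2-1}$, and the elementary inequality $c^{2c^2-1}+2c\le c^{2c^2}+c$ (valid for $c\ge 2$, and hence always, since $c\ge 4$) finishes the substitution and yields the claimed $(c^{2c^2}+c)\gamma$.
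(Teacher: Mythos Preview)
Your argument is correct and follows the same route as the paper—sum the bounds from \cref{lem:size-D1}, \cref{lem:small-D-hat}, \cref{lem:D3-LP} and coarsen via $\rho(G)+1\le 2(\nabla_0+1)$, $\kappa\le c$, $s\le c-1$—the only difference being that the paper fixes $\epsilon=1$ in \cref{lem:D3-LP} and leaves the final arithmetic implicit, while you take $\epsilon$ small and spell out the $c^{2c^2}$ estimate. One cosmetic slip: the displayed step $\kappa^{2s\kappa}+2+\epsilon\le\kappa^{2s\kappa}+2$ is literally false; what you actually need (and say in words) is that for $\epsilon\le 1/(2\nabla_0+1)$ one has $(2\nabla_0+1)(1+\epsilon)\le 2(\nabla_0+1)$, so the $\epsilon$ is absorbed before factoring and never appears in the bracket.
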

\begin{proof}
We have
\begin{align*}
	\rho(G)&\leq 2\nabla_0(G)+1&\text{(by \cref{lem:bounds})}\\
	|D_1|&\leq \rho(G)\gamma\leq (2\nabla_0+1)\gamma&\text{(by \cref{lem:size-D1})}\\
	|D_2| &\le \kappa^{2s\kappa}(\rho(G)+1)\gamma\leq \kappa^{2s\kappa}(2\nabla_0+2)\gamma&\text{(by \cref{lem:small-D-hat})}\\
	\intertext{Last, by setting $\epsilon=1$ in \cref{lem:D3-LP} we have}
	|D_3|&\leq (2\nabla_0+1)(1+\e)\gamma_R\leq 2(2\nabla_0+1)\gamma.
\end{align*}
We conclude, as  $\kappa\le 2\nabla_1+2$ and $s\leq 2\nabla_1+1$.

\end{proof}


\section{Alternative Phase 3: Greedy domination}\label{sec:greedy-be}

We now consider an alternative approach for the third phase, which does
not improve the approximation factor, however, is conceptually much
simpler and interesting in its own. Recall that we bounded
$\Dr$ as $\kappa^{s-1}(t+s-1+(s-1)\nu)$, which is a bound on the residual
degree $\dr(v)$ of all vertices.

\newcommand{\dR}{\Dr}
\newcommand{\ddR}{\dr}
\newcommand{\pick}{P}

We simulate the classical greedy algorithm, which in each
round selects a vertex of maximum residual degree. Here, we let all
non-dominated vertices that have a neighbor of maximum residual degree
choose such a neighbor as its dominator (or if they have maximum
residual degree themselves, they may choose themselves). In general
this is not possible for a LOCAL algorithm, however, as we established
a bound on the maximum degree we can proceed as follows. We
let~$i={\dR}$. Every red vertex that has at least one neighbor of residual
degree ${\dR}$ arbitrarily picks one of them and elects it to the
dominating set. Then every vertex recomputes its residual degree and
$i$ is set to ${\dR}-1$. We continue until $i$ reaches $0$ when all
vertices are dominated. More formally, we define several sets as
follows.

\smallskip
\begin{tcolorbox}[colback=red!5!white,colframe=red!50!black]
  For ${\dR}\geq i\geq 0$,  for every $v\in R$ in parallel:\\[2mm]
  If there is some $u$ with ${\ddR}(u)=i$ and ($\{u,v\}\in E(G)$ or $u=v$), then\\
  \mbox{ } $\dom_i(v)\leftarrow \{u\}$ (pick one such $u$ arbitrarily),\\
  \mbox{ } $\dom_i(v)\leftarrow \emptyset$ otherwise.
  \begin{itemize}
    \item $R_i \leftarrow R$ \hfill \textit{\small What currently remains to be dominated}
    \item $\pick_i \leftarrow \bigcup\limits_{v\in R} \dom_i(v)$ \hfill \textit{\small What we pick in this step}
    \item $R \leftarrow R \setminus N[\pick_{i}]$ \hfill \textit{\small Update red vertices}
  \end{itemize}
  Last, $D_3\leftarrow  \bigcup\limits_{0\le i\le d} \pick_i$.
\end{tcolorbox}

\smallskip
Let us first prove that the algorithm in fact computes a dominating set.
\begin{lemma}\label{lem:correctness-general}
  When the algorithm has finished the iteration with parameter
  $i\geq 1$, then all vertices have residual degree at most $i-1$.
\end{lemma}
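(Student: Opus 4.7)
The plan is to prove the lemma by (decreasing) induction on $i$, from $\dR$ down to $1$. The single underlying observation is that the residual degree function $\dr(\cdot)$ is monotonically non-increasing throughout Phase 3, since each iteration only shrinks $R$ via the assignment $R \leftarrow R \setminus N[\pick_i]$. The base of the induction is supplied by \cref{lem:smalldegree}: just before the iteration with parameter $\dR$, every vertex already has residual degree at most $\dR$.

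For the inductive step I would assume that just before iteration $i$ every vertex has residual degree at most $i$ (either by the base case or as the conclusion of the iteration with parameter $i+1$) and aim to show that just after iteration $i$ every vertex has residual degree at most $i - 1$. Suppose for contradiction that some vertex $u$ still satisfies $\dr(u) \geq i$ after iteration $i$. Combined with monotonicity and the inductive hypothesis, this forces $\dr(u) = i$ both immediately before and immediately after iteration $i$, so in particular no neighbor of $u$ is removed from $R$ during this iteration. Since $i \geq 1$, the vertex $u$ has at least one red neighbor $v$ just before iteration $i$. But then $v$ sees $u \in N[v]$ with $\dr(u) = i$, so by definition $\dom_i(v) \neq \emptyset$: some vertex $u' \in N[v]$ with $\dr(u') = i$ is placed in $\pick_i$, and therefore $v \in N[u'] \subseteq N[\pick_i]$ is removed from $R$ during iteration $i$. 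This contradicts the conclusion that no neighbor of $u$ leaves $R$.

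The only delicate point the plan must address is the case in which $u$ itself is red. Here the same argument applies using the ``$u' = v$'' alternative in the definition of $\dom_i$: taking $v = u$, the vertex $u$ satisfies $\dr(u) = i$ and $u \in N[u]$, hence $u$ picks itself, lands in $\pick_i$, and all of $N[u]$ is removed from $R$, so $\dr(u)$ drops to $0$ after iteration $i$. Apart from being careful to invoke this self-loop alternative, the proof is a routine one-line induction, and I do not anticipate any real obstacle.
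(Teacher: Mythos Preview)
Your proof is correct and follows essentially the same inductive argument as the paper: both use the invariant that before iteration $i$ every residual degree is at most $i$, and both observe that every red neighbour of a vertex with residual degree exactly $i$ necessarily selects a dominator in that round and is therefore removed from $R$. The paper phrases this directly (``the residual degree of $v$ after the iteration is $0$'') rather than by contradiction, and your extra case analysis for ``$u$ itself red'' is harmless but unnecessary, since $\dr(u)=i\ge 1$ already guarantees a red vertex in the open neighbourhood $N(u)$.
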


In particular, after finishing the iteration with parameter $1$, there
is no vertex with residual degree $1$ left and in the final round all
non-dominated vertices choose themselves into the dominating
set. Hence, the algorithm computes a dominating set of $G$.

\begin{proof}
  By induction, before the iteration with parameter $i$, all vertices
  have residual at most $i$. Assume $v$ has residual degree $i$ before
  the iteration with parameter $i$.  In that iteration, all
  non-dominated neighbors of $v$ choose a dominator (possibly $v$, then
  the statement is trivial),
  hence, are removed from $R$. It follows that the residual degree of $v$ after
  the iteration is $0$. Hence, after this iteration and before the
  iteration with parameter $i-1$, we are left with vertices of
  residual degree at most $i-1$.
\end{proof}

For the rest of this section analyze the size of $D_3$ and
we prove the following lemma.

\begin{lemma}\label{lem:greedy-approx}
 We have
 \[
 |D_3|\leq \left(\nabla_0\ln\Bigl(\frac{2{\dR}-4\nabla_0+1}{2\nabla_0+1}\Bigr)+3\nabla_0+1\right)\gamma_R.
 \]
\end{lemma}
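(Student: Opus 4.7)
My plan is to combine an amortized greedy-set-cover argument with the orientation-based sparsity bound from \cref{lem:orientations}. I first fix an orientation $\vec G$ of $G$ with maximum out-degree at most $\nabla_0$, and derive two per-round bounds on $|\pick_i|$. When $i>\nabla_0$, each $u\in\pick_i$ has exactly $i$ neighbours in $R_i$ (all of which end up in $R_i\setminus R_{i-1}$, being dominated by~$u$); at most $\nabla_0$ of the corresponding edges leave~$u$, so at least $i-\nabla_0$ enter it. Double counting against the out-degrees of the vertices of $R_i\setminus R_{i-1}$ (each at most $\nabla_0$) yields
\[
(i-\nabla_0)\,|\pick_i|\;\le\;\nabla_0\,|R_i\setminus R_{i-1}|,
\]
and for $i\le\nabla_0$ the trivial bound $|\pick_i|\le|R_i\setminus R_{i-1}|$ follows from the observation that distinct picked vertices have distinct red choosers.

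Next, \cref{lem:correctness-general} implies that every vertex has residual degree at most~$i$ at the start of round~$i$, so that each $w\in D_R$ covers at most $i+1$ elements of $R_i$ and hence $|R_i|\le(i+1)\gamma_R$. Setting $\phi(i):=\min\{1,\,\nabla_0/(i-\nabla_0)\}$, the two per-round bounds give
\[
|D_3|\;\le\;\sum_{v\in R}\phi(i_0(v)),
\]
where $i_0(v)$ denotes the round in which $v$ becomes dominated. I would then redistribute the sum over $D_R$ by assigning every $v\in R$ to some $w_v\in D_R$ with $v\in N[w_v]$.

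The heart of the argument is a per-$w$ estimate in the spirit of the classical greedy analysis. For fixed $w\in D_R$, let $v_1,\dots,v_k$ with $k\le\Dr+1$ be the vertices assigned to $w$, ordered by decreasing~$i_0$. At the start of round $i_0(v_j)$, all of $v_j,\dots,v_k$ are still in $R$ and lie in $N[w]$, so the residual degree of $w$ is at least $k-j$; by \cref{lem:correctness-general}, this forces $i_0(v_j)\ge k-j$. Since $\phi$ is non-increasing,
\[
\sum_{j=1}^{k}\phi(i_0(v_j))\;\le\;\sum_{m=0}^{k-1}\phi(m)\;=\;(2\nabla_0+1)+\nabla_0\bigl(H_{k-1-\nabla_0}-H_{\nabla_0}\bigr),
\]
after splitting the sum at $m=2\nabla_0$. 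Summing this estimate over the $\gamma_R$ vertices of $D_R$ and applying the sharp integral comparison $H_a-H_b\le\ln((2a+1)/(2b+1))$ will then yield the desired bound.

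The main obstacle is matching the exact constants in the statement. A naive use of $k\le\Dr+1$ only produces a logarithmic argument of the form $(2\Dr-2\nabla_0+1)/(2\nabla_0+1)$, whereas the statement asks for the strictly smaller $(2\Dr-4\nabla_0+1)/(2\nabla_0+1)$, compensated by the slightly larger additive constant $3\nabla_0+1$ instead of $2\nabla_0+1$. Closing this gap should be possible by a case analysis on whether $w$ is itself red: whenever $w\notin R$ the lower bound on $i_0(v_j)$ sharpens to $k-j+1$ because the at most one index with $v_j=w$ no longer arises, which shifts one unit of harmonic cost into the additive constant and absorbs an extra $\nabla_0$ there. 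This is pure bookkeeping — the combinatorial heart of the argument is the sparsity bound on $|\pick_i|$ together with the greedy charging, and I do not expect to need any further idea beyond a careful accounting.
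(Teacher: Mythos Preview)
Your argument is correct and in fact delivers a bound slightly \emph{stronger} than the one stated; the worry in your last paragraph is unfounded.  The paper proceeds quite differently.  It first proves three global inequalities --- $\sum_{j\le i}|\pick_j|\le|R_i|$, $|R_i|\le(i+1)\gamma_R$, and, via an edge-density count on the subgraph induced by $\pick_i\cup(N[\pick_i]\cap R_i)$, the per-round bound $|\pick_i|\le\frac{\nabla_0}{i-2\nabla_0}(|R_i|-|R_{i-1}|)$ for $i>2\nabla_0$ --- and then \emph{solves the resulting linear program exactly} through a chain of exchange lemmas, locating the extremal configuration at $d_{3\nabla_0}=3\nabla_0+1$ and $d_i=\nabla_0/(i-2\nabla_0)$ for $i>3\nabla_0$.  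Your orientation argument gives the sharper per-round inequality $|\pick_i|\le\frac{\nabla_0}{i-\nabla_0}(|R_i|-|R_{i-1}|)$ (denominator $i-\nabla_0$ instead of $i-2\nabla_0$), and the greedy charging of $\phi(i_0(v))$ against $D_R$ replaces the optimisation entirely.

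Concerning the constants: the expression your analysis produces, $(2\nabla_0+1)+\nabla_0(H_{\Dr-\nabla_0}-H_{\nabla_0})$, is \emph{at most} the paper's $(3\nabla_0+1)+\nabla_0(H_{\Dr-2\nabla_0}-H_{\nabla_0})$ whenever $\Dr\ge 3\nabla_0$, since the difference equals $\nabla_0\bigl(1-\sum_{\ell=\Dr-2\nabla_0+1}^{\Dr-\nabla_0}1/\ell\bigr)$ and that sum has $\nabla_0$ terms each at most $1/(\nabla_0+1)$.  For $\Dr<3\nabla_0$ both proofs simply invoke $|D_3|\le|R|\le(\Dr+1)\gamma_R$.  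Hence no additional case distinction on whether $w\in R$ is needed; your bound already implies the lemma as stated (and is in fact tighter).  What your approach buys is brevity and a better constant; what the paper's approach buys is an explicit description of the worst-case profile for its (weaker) per-round inequality.
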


Towards establishing the lemma we analyze the sizes of the sets
$\pick_i$ and $R_i$. The next
lemma follows from the fact that every vertex chooses at most one
dominator.

\begin{lemma}\label{lem:total-h}
  For every $0\leq i\le {\dR}$, $\sum\limits_{j\le i}|\pick_j| \le |R_i|$.
\end{lemma}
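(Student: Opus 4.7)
The plan is to argue by a simple charging scheme: we charge each picked vertex in $\pick_j$ (for $j\le i$) to a distinct vertex of $R_i$ that gets dominated at iteration $j$. Since by construction $\dom_j(v)$ is either empty or a singleton, we immediately have
\[
|\pick_j|\;=\;\Bigl|\bigcup_{v\in R_j}\dom_j(v)\Bigr|\;\le\;\bigl|\{v\in R_j : \dom_j(v)\ne\emptyset\}\bigr|,
\]
so it suffices to bound $\sum_{j\le i}|\{v\in R_j:\dom_j(v)\ne\emptyset\}|$ by $|R_i|$.

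The key observation is that whenever $v\in R_j$ has $\dom_j(v)=\{u\}$, then $u\in\pick_j$ and $v\in N[u]\subseteq N[\pick_j]$, so $v$ is removed from $R$ at the end of iteration $j$. Because the algorithm runs iterations from $i=\dR$ down to $i=0$, the set $R$ only shrinks as the loop progresses, and hence $R_{j'}\subseteq R_j$ whenever $j'<j$. In particular $R_j\subseteq R_i$ for every $j\le i$. More importantly, any vertex $v$ can appear in $\{v\in R_j:\dom_j(v)\ne\emptyset\}$ for at most one value of $j$: after the (unique) iteration at which it successfully picks a dominator, it is no longer red in any later iteration (i.e.\ any iteration with smaller index).

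Putting these two points together, the sets $\{v\in R_j:\dom_j(v)\ne\emptyset\}$ for $j\le i$ are pairwise disjoint subsets of $R_i$, so their total size is at most $|R_i|$. Combined with the first inequality, this gives exactly $\sum_{j\le i}|\pick_j|\le |R_i|$.

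I expect no real obstacle here; the only thing that requires care is the indexing convention. Because the loop counts $i$ down from $\dR$ to $0$, the condition ``$j\le i$'' in the statement refers to iterations that take place at or \emph{after} the start of iteration $i$, which is precisely why those picks can be charged to vertices present in $R_i$.
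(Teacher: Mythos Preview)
Your argument is correct and is essentially the same charging argument as in the paper: each vertex of $R_i$ selects at most one dominator across all iterations $j\le i$, so $\sum_{j\le i}|\pick_j|\le |R_i|$. You have simply made the disjointness of the sets $\{v\in R_j:\dom_j(v)\ne\emptyset\}$ and their containment in $R_i$ explicit, whereas the paper states this in one sentence.
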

\begin{proof}
  The vertices of $R_i$ are those that remain to be dominated in the
  last $i$ rounds of the algorithm. As every vertex that remains to be
  dominated chooses at most one dominator in one of the rounds
  $j\leq i$, the statement follows.
\end{proof}

As the vertices of $D_R$ that still dominate non-dominated vertices also
have bounded residual degree, we can conclude that not too many
vertices remain to be dominated.
\begin{lemma}\label{lem:h1}
  For every $0\leq i\le {\dR}$, $|R_i| \le (i+1)\gamma_R$.
\end{lemma}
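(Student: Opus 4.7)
The plan is to show the bound on $|R_i|$ by a direct double-counting argument using $D_R$ as an external witness, combined with the residual-degree bound guaranteed by \cref{lem:correctness-general}.

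First I would pin down when the bound "all residual degrees are at most $i$" actually holds. Applying \cref{lem:correctness-general} with parameter $i+1$ (or, when $i=\Dr$, the bound on residual degrees inherited from Phase~2 via \cref{lem:smalldegree}), we know that at the moment iteration $i$ begins, every vertex $u\in V(G)$ satisfies $\ddR(u)=|N(u)\cap R_i|\le i$. Note this holds whether or not $u$ itself lies in $R_i$, which is the key fact used below.

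Next I would exploit the fixed minimum $R$\hspace{1pt}\raisebox{0.3pt}{-}\hspace{0.9pt}dominating set $D_R$ of size $\gamma_R$. Since $R_i\subseteq R$ and $D_R$ dominates $R$, every vertex $v\in R_i$ belongs to $N[u]$ for some $u\in D_R$, hence
\[
R_i\subseteq \bigcup_{u\in D_R} \bigl(N[u]\cap R_i\bigr),
\qquad\text{so}\qquad
|R_i|\le \sum_{u\in D_R}\bigl|N[u]\cap R_i\bigr|.
\]
For each $u\in D_R$, either $u\in R_i$, in which case $|N[u]\cap R_i|\le 1+|N(u)\cap R_i|=1+\ddR(u)\le i+1$, or $u\notin R_i$, in which case $|N[u]\cap R_i|=|N(u)\cap R_i|=\ddR(u)\le i$. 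In both cases $|N[u]\cap R_i|\le i+1$, and summing over the $\gamma_R$ vertices of $D_R$ yields $|R_i|\le (i+1)\gamma_R$.

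There is no real obstacle here; the only point that requires care is ensuring that the residual-degree bound from \cref{lem:correctness-general} applies to \emph{all} vertices of the graph (including those outside $R_i$) at the instant $R_i$ is recorded, which is exactly what that lemma delivers. Everything else is a one-line counting step.
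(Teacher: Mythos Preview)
Your proof is correct and follows essentially the same approach as the paper: both use that $D_R$ (the paper uses the subset $D_R\setminus\bigcup_{j>i}\pick_j$, but this makes no difference) dominates $R_i$, and that every vertex has residual degree at most $i$ with respect to $R_i$, whence each dominator covers at most $i+1$ vertices of $R_i$. Your write-up is in fact slightly more explicit than the paper's in invoking \cref{lem:correctness-general} and handling the boundary case $i=\Dr$.
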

\begin{proof}
  First note that for every $i$,
  $D_R\setminus \bigcup_{j>i}\pick_j$ is a dominating
  set for $R_i$; additionally each vertex in this set has residual
  degree at most $i$. As every vertex dominates its residual neighbors and
  itself, we conclude $|R_i|\le (i+1)\gamma_R$.
\end{proof}

Finally, we show that we cannot pick too many vertices of high
residual degree. This follows from a simply density argument.

\begin{lemma}\label{lem:h2}
  For every $2\nabla_0< i\leq {\dR}$, $|\pick_i|\leq \frac{\nabla_0}{i-2\nabla_0}(|R_i|-|R_{i-1}|)$.
\end{lemma}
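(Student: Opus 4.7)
The plan is to set $A\coloneqq \pick_i$ and $B\coloneqq R_i\setminus R_{i-1}$ and prove the bound by a double-counting argument between $A$ and $B$ that uses a low-out-degree orientation of $G$. The first key observation is that every $u\in A$ has residual degree $\ddR(u)=i$ at the start of round $i$, i.e.\ $|N(u)\cap R_i|=i$. Moreover, in round $i$ every vertex of $R_i$ that has a neighbor (or is itself) of residual degree $i$ chooses a dominator in $\pick_i$; so each of the $i$ vertices of $N(u)\cap R_i$ gets dominated in round $i$ and hence lies in $R_i\setminus R_{i-1}=B$. Consequently $|N(u)\cap B|=i$ for every $u\in A$.

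Next, I would invoke \cref{lem:orientations} to fix an orientation of $G$ in which every vertex has out-degree at most $\nabla_0$. Fix $u\in A$: among the $i$ edges joining $u$ to $B$, at most $\nabla_0$ can be oriented out of $u$ (as $u$ has at most $\nabla_0$ out-neighbors altogether), so at least $i-\nabla_0$ of them are oriented from $B$ towards $u$. Summing over all $u\in A$ gives at least $(i-\nabla_0)|A|$ arcs from $B$ to $A$. On the other hand, this quantity is bounded above by the total number of arcs leaving $B$, which is at most $\nabla_0|B|$. Rearranging yields $(i-\nabla_0)|A|\leq \nabla_0|B|$, and since the hypothesis $i>2\nabla_0$ gives $i-\nabla_0\geq i-2\nabla_0>0$, I would conclude
\[
|A|\;\leq\;\frac{\nabla_0}{i-\nabla_0}\,|B|\;\leq\;\frac{\nabla_0}{i-2\nabla_0}\,\bigl(|R_i|-|R_{i-1}|\bigr),
\]
which is the claimed inequality (in fact slightly sharper).

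The main obstacle is choosing the right density tool. A naive sum-of-degrees count inside $G[A\cup B]$ only gives $|A|\leq \tfrac{2\nabla_0}{i-2\nabla_0}|B|$, because edges internal to $G[A\cup B]$ are counted twice from the $A$-side. Replacing the crude average-degree bound by the out-degree-$\nabla_0$ orientation furnished by \cref{lem:orientations} removes this spurious factor $2$: the lower bound obtained from the $A$-side (in-arcs from $B$) and the upper bound obtained from the $B$-side (out-arcs) then match cleanly, and the whole computation reduces to one line of arithmetic.
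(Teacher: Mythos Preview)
Your proof is correct and in fact yields the sharper bound $|P_i|\leq \frac{\nabla_0}{i-\nabla_0}(|R_i|-|R_{i-1}|)$. The key observation that every $u\in P_i$ has $|N(u)\cap R_i|=i$ and that all of $N(u)\cap R_i$ lands in $R_i\setminus R_{i-1}$ is exactly right, and your orientation-based double count is clean and handles any overlap between $A$ and $B$ without trouble.

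The paper takes a different but closely related route: instead of invoking \cref{lem:orientations}, it works directly with the edge-density bound $|E(H)|\leq \nabla_0|V(H)|$ applied to the two induced subgraphs $J=G[P_i]$ and $K=G[P_i\cup(N[P_i]\cap R_i)]$. From $|E(J)|\leq \nabla_0|P_i|$ and the fact that every vertex of $P_i$ contributes $i$ incident edges inside $K$, one gets $|E(K)|\geq (i-\nabla_0)|P_i|$; combining this with $|E(K)|\leq \nabla_0(|P_i|+|R_i\setminus R_{i-1}|)$ gives $(i-2\nabla_0)|P_i|\leq \nabla_0(|R_i|-|R_{i-1}|)$. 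So the paper's argument \emph{is} essentially the ``sum-of-degrees'' count you describe as naive, but it separates out the edges internal to $P_i$ via $J$ and thereby avoids the extra factor~$2$ you were worried about---it already reaches $\frac{\nabla_0}{i-2\nabla_0}$, not $\frac{2\nabla_0}{i-2\nabla_0}$. Your orientation trick goes one step further and replaces the additive $\nabla_0|P_i|$ on the right-hand side by nothing at all, improving the denominator from $i-2\nabla_0$ to $i-\nabla_0$. This is a genuine (if modest) sharpening, and the argument is arguably more transparent.
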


\begin{proof}
Let $2\nabla_0< i\le {\dR}$ be an integer. We bound the size of $\pick_i$
  by a counting argument, using that $G$ (as well as each of its
  subgraphs) have edge density at most $\nabla_0$.

  Let $J := G[\pick_i]$ be the subgraph of $G$ induced by the
  vertices of $\pick_i$, which all have residual degree~$i$. Let
  $K := G[\pick_i \cup (N[\pick_i]\cap R_i)]$ be the subgraph of $G$
  induced by the vertices of $\pick_i$ together with the red
  neighbors that these vertices dominate.

  We have $|E(J)| \leq  \nabla_0|V(J)| = \nabla_0|\pick_i|$. As every
  vertex of $J$ has residual degree exactly $i$, we get
  $|E(K)| \geq i\pick_i - |E(J)| \geq (i-\nabla_0)|\pick_i|$ (we have to
  subtract $|E(J)|$ to not count twice the edges of $K$ that are
  between two vertices of $J$).  We also have
  $|V(K)| \le |V(J)| + |N[\pick_i]\cap R_i|$ and $|E(K)| \leq \nabla_0|V_K|$, hence
  $(i-2\nabla_0)|\pick_i| \leq \nabla_0|N[\pick_i]\cap R_i)|$.
  Now, as $R_{i-1} = R_i \setminus N[\pick_{i}]$, that is,
  $N[\pick_i]\cap R_i=R_i\setminus R_{i-1}$, we get
  $|\pick_i|\leq \frac{\nabla_0}{i-2\nabla_0}(|R_i|-|R_{i-1}|)$.
\end{proof}

Let $r_i\coloneqq |R_i|/\gamma_R$ and $d_i\coloneqq |\pick_i|/\gamma_R$.
Our goal is to maximize $S\coloneqq\sum_{0\leq i\leq {\dR}}d_i$ (which we have
to multiply by $\gamma_R$ in the end) under the constraints $d_i\geq 0$
and
\begin{align}
	r_i &\ge \sum_{j\le i} d_j&\text{($0\le i\le {\dR}$)}\label{eq:1}\\
	r_i &\le	i+1&\text{($0\le i\le {\dR}$)}\label{eq:2}\\
	d_{i}&\leq \frac{\nabla_0 }{i-2\nabla_0 }\,(r_{i}-r_{i-1})&\text{($2\nabla_0 < i\le {\dR}$)}\label{eq:4}
\end{align}

We may assume that ${\dR}\geq 3\nabla_0$, as otherwise,
\cref{lem:greedy-approx} follows immediately from \cref{lem:total-h}.

\smallskip
Let $a$ be the minimum integer such that $d_a>0$.

\begin{lemma}
	We can assume $r_i=0$ for all $i<a$.
\end{lemma}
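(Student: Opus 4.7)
The plan is to take any maximizer $(r_i, d_i)_{0 \le i \le \Dr}$ and show that setting $r_i := 0$ for each $i < a$, while leaving all $d_j$ and all $r_j$ with $j \ge a$ untouched, produces another feasible solution with the same objective value. Since $S = \sum_i d_i$ depends only on the $d_i$'s, the objective is trivially preserved, so the whole statement reduces to verifying that the modified tuple $(r_i, d_i)$ still satisfies the constraints (\ref{eq:1}), (\ref{eq:2}), (\ref{eq:4}).

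First I would dispose of (\ref{eq:1}) and (\ref{eq:2}): by definition of $a$ we have $d_j = 0$ for every $j < a$, hence $\sum_{j \le i} d_j = 0$ when $i < a$, and (\ref{eq:1}) reduces to $0 \ge 0$; for $i \ge a$ nothing was touched. Constraint (\ref{eq:2}) is immediate since $0 \le i+1$.

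The only step where a short check is needed is (\ref{eq:4}). At indices $2\nabla_0 < i < a$ both $r_i$ and $r_{i-1}$ are now zero, so the inequality reads $0 \le 0$. The single boundary case is $i = a$ when $a > 2\nabla_0$: here the bound becomes $d_a \le \frac{\nabla_0}{a - 2\nabla_0}\, r_a$, which is weaker than the original $d_a \le \frac{\nabla_0}{a - 2\nabla_0}(r_a - r_{a-1})$ since $r_{a-1} \ge 0$, and is therefore still satisfied; for $i > a$ nothing was touched.

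There is no real obstacle. The underlying observation is simply that $r_{i-1}$ enters (\ref{eq:4}) with a minus sign, so decreasing $r_{i-1}$ only loosens the constraint, while (\ref{eq:1}) at indices below $a$ has vanishing right-hand side by the very definition of $a$. The only index one must be slightly careful about is the boundary case $i = a$, which is handled by the nonnegativity of the old value of $r_{a-1}$.
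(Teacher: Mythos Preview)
Your proof is correct and follows the same approach as the paper's: set $r_i=0$ for $i<a$, note that (\ref{eq:1}) and (\ref{eq:2}) are trivially preserved, and observe that the only nontrivial instance of (\ref{eq:4}) is the boundary index where $r_{a-1}$ drops to $0$, which only relaxes the constraint since $r_{a-1}\ge 0$. Your identification of the boundary case as $i=a$ is in fact the right one (the paper writes $i=a-1$, which appears to be a slip, since at $i=a-1$ both sides vanish anyway); otherwise the arguments are identical.
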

\begin{proof}
	Putting $r_i=0$ for all $i<a$ obviously preserves \cref{eq:1,eq:2}. It also preserves \cref{eq:4} as the only case
	to check is $i=a-1$ (if $a\ge 2\nabla_0 $), for which the right hand side was possibly increased.
\end{proof}

\begin{lemma}
	If $a\le 3\nabla_0 -1$, then decreasing $d_a$ to $0$ and $r_a$ to $r_a-d_a$ and increasing $d_{a+1}$ to $d_a+d_{a+1}$ preserves  all the constraints and the value of $S$.
\end{lemma}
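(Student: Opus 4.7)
The plan is to write down the modified values explicitly and then check that each of the three labelled constraints (together with $d_i\ge 0$ and $r_i\ge 0$) still holds after the swap. Let $d_a'=0$, $d_{a+1}'=d_a+d_{a+1}$, $r_a'=r_a-d_a$, and let every other coordinate be unchanged. The equality $S'=S$ is immediate, since we have simply moved $d_a$ units of mass from index $a$ to index $a+1$. Nonnegativity of $r_a'$ follows from \cref{eq:1} applied at $i=a$ in the old configuration together with the previous lemma, which gives $\sum_{j<a}d_j=0$, hence $r_a\ge d_a$.

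Next I would handle the two easy constraints. For \cref{eq:1}, at $i=a$ both sides drop by exactly $d_a$, and for $i\ge a+1$ neither side changes, since the partial sum $\sum_{j\le i}d_j$ is preserved by $d_a+d_{a+1}=d_a'+d_{a+1}'$. For \cref{eq:2} only $r_a$ is affected, and $r_a'=r_a-d_a\le r_a\le a+1$. Nonnegativity $d_i'\ge 0$ needs no comment.

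The real work, and the only place where the hypothesis $a\le 3\nabla_0-1$ enters, is \cref{eq:4}. If $a>2\nabla_0$ the inequality at $i=a$ becomes $0\le \frac{\nabla_0}{a-2\nabla_0}(r_a-d_a)$, which is trivial. If $a+1>2\nabla_0$ we must verify
\[
d_a+d_{a+1}\;\le\;\frac{\nabla_0}{(a+1)-2\nabla_0}\bigl((r_{a+1}-r_a)+d_a\bigr).
\]
The old form of \cref{eq:4} at $i=a+1$ already delivers $d_{a+1}\le \frac{\nabla_0}{(a+1)-2\nabla_0}(r_{a+1}-r_a)$, so it suffices to show $d_a\le \frac{\nabla_0}{(a+1)-2\nabla_0}\,d_a$. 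Since $d_a>0$ by choice of $a$, this reduces to $\nabla_0\ge (a+1)-2\nabla_0$, i.e.\ $a\le 3\nabla_0-1$, which is exactly the hypothesis. For any $i\notin\{a,a+1\}$ neither $d_i$ nor the pair $(r_i,r_{i-1})$ is altered, so the corresponding instance of \cref{eq:4} is unaffected.

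The step at $i=a+1$ is thus both the only non-routine case and the one pinning down the threshold $3\nabla_0-1$; everything else is a matter of reading off the definitions.
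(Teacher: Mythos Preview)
Your proof is correct and follows essentially the same line as the paper's: check \cref{eq:1} and \cref{eq:2} by inspection, then isolate \cref{eq:4} at indices $a$ and $a+1$ and observe that the increment on the right-hand side at $i=a+1$ is $\frac{\nabla_0}{(a+1)-2\nabla_0}\,d_a\ge d_a$ precisely when $a\le 3\nabla_0-1$. (One tiny quibble: $\sum_{j<a}d_j=0$ follows from the definition of $a$ as the least index with $d_a>0$, not from the previous lemma, which concerns the $r_i$.)
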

\begin{proof}
	The sum in \cref{eq:1} does not change if $i>a$ and \cref{eq:1} is obviously satisfied after modifications for $i\le a$.
	\cref{eq:2} is trivially satisfied after modification, as no $r_i$ increases.
	The only changes for \cref{eq:4} correspond to the case $i=a-1$ (for which the left hand side decreases,  while the right hand side increases) or to the case $i=a$ (for which the left hand side increases by $d_a$,  while the right hand side increases by $\nabla_0 /(a+1-2\nabla_0 )\,d_a\ge d_a$).
\end{proof}

From the above lemmas, as $r_a\geq d_a$, it follows that we may assume $a\ge 3\nabla_0 $ and $r_i=0$ for all $i<a$.


Note that \cref{eq:4} implies
\begin{equation}
	r_{2\nabla_0 }\le r_{2\nabla_0 +1}\le \dots\le r_{\dR}.
\end{equation}

Remark that increasing $r_{\dR}$ obviously preserves \cref{eq:1,eq:4}. Hence, we can assume that $r_{\dR}={\dR}+1$.
Let $b$ be minimum with $r_{i}=i+1$ for all $i\geq b$. Note that $b\ge a$.

\begin{lemma}
	Let $\alpha=\min(b-r_{b-1},\sum_{j<b-1}d_j)$.
If $b\ge 3\nabla_0 +1$, then increasing  $d_{b-1}$ and $r_{b-1}$ by $\alpha$ and decreasing
$\sum_{j<b-1}$ by $\alpha$ preserves the constraints and the value of $S$.
\end{lemma}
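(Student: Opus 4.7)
The plan is to verify in order that the modification (i) preserves the value $S = \sum_i d_i$, and (ii) maintains each of the feasibility constraints \eqref{eq:1}, \eqref{eq:2}, \eqref{eq:4}.

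Value preservation and the easy constraints come first. For $S$: the mass $\alpha$ subtracted from $\sum_{j<b-1} d_j$ is added back entirely to $d_{b-1}$, so $S$ is unchanged. For \eqref{eq:2}: the only $r$-value modified is $r_{b-1}$, and the bound $\alpha \le b - r_{b-1}$ gives $r_{b-1}^{\text{new}} \le b = (b-1) + 1$. For \eqref{eq:1}: when $i \ge b - 1$ the cumulative sum $\sum_{j \le i} d_j$ is invariant, since the mass is only reshuffled among indices $\le b-1$, while $r_i$ either stays fixed (for $i > b-1$) or grows; when $i < b - 1$ the cumulative sum can only decrease while $r_i$ stays fixed.

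The delicate step is \eqref{eq:4}. For $i \notin \{b-1, b\}$ the inequality is preserved automatically: either $d_i$ shrinks with $r_i - r_{i-1}$ fixed, or nothing changes. The critical indices are $i = b - 1$ and $i = b$. At $i = b-1$ both $d_{b-1}$ and $r_{b-1}$ grow by $\alpha$; subtracting the old constraint from the desired new one reduces the verification to comparing $\alpha(b-1-2\nabla_0)$ with $\nabla_0 \alpha$, and this is where the hypothesis on $b$ interacts with the coefficient $\nabla_0/(b-1-2\nabla_0)$ and with the earlier structural reductions (the definition of $b$ pinning $r_i = i+1$ for $i \ge b$, the normalization $r_i = 0$ for $i < a$, and $a \ge 3\nabla_0$). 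At $i = b$, $r_{b-1}$ has increased by $\alpha$ while $d_b$ is unchanged, so I would use the extremal/normalized form obtained by the earlier reductions to absorb the loss $\nabla_0 \alpha / (b - 2\nabla_0)$ into the pre-existing slack of $d_b$ under \eqref{eq:4}, exploiting the fact that $r_b = b + 1$ by definition of $b$.

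The main obstacle is the precise bookkeeping at indices $b-1$ and $b$ in \eqref{eq:4}: the interplay between the step size $\alpha$, the marginal coefficients $\nabla_0/(i - 2\nabla_0)$, and the hypothesis on $b$ has to line up exactly. The cascade of normalizations accumulated in the preceding lemmas (pushing mass upward and saturating $r_i$ at $i+1$ from $b$ onward) is what allows this bookkeeping to close, and the choice of $\alpha$ as the minimum of the two natural upper bounds is what makes \eqref{eq:2} and \eqref{eq:1} cohabit with \eqref{eq:4} after the shift.
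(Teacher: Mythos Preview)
Your identification of the critical indices for constraint~\eqref{eq:4} as $i=b-1$ and $i=b$ is correct (and in fact sharper than the paper, which names $i=b-1$ and $i=b-2$ and never addresses $i=b$). But you do not close either case. At $i=b-1$ the comparison you set up is exactly right: after subtracting the old inequality from the new one you need $(b-1-2\nabla_0)\,\alpha\le\nabla_0\,\alpha$, i.e.\ $b\le 3\nabla_0+1$. The hypothesis, however, is $b\ge 3\nabla_0+1$, so the inequality points the \emph{wrong} way except in the borderline case $b=3\nabla_0+1$. Saying that ``the hypothesis on $b$ interacts with the coefficient'' does not resolve this, and none of the earlier normalisations ($r_i=0$ for $i<a$, $a\ge 3\nabla_0$) supply any slack at index $b-1$. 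At $i=b$ the right-hand side of~\eqref{eq:4} genuinely drops by $\frac{\nabla_0}{b-2\nabla_0}\,\alpha$ while $d_b$ is untouched; your appeal to ``pre-existing slack'' is unjustified. A concrete feasible point with $\nabla_0=1$, $a=3$, $b=5$, $r_3=r_4=2$, $d_3=1$, $d_4=0$ (and $r_i=0$ for $i<3$, $r_i=i+1$ for $i\ge 5$) shows the stated transformation produces $d_4=1>\tfrac12=\tfrac{\nabla_0}{4-2\nabla_0}(r_4-r_3)$ after the shift, so the gap is genuine and not merely an omitted detail.

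For comparison, the paper's own one-line proof has the same defect: it asserts that at $i=b-1$ ``the left hand side decreases by $\alpha$'' when in fact $d_{b-1}$ is being \emph{increased} by $\alpha$, and it never treats $i=b$. So you are not missing an argument that the paper actually supplies; both versions leave the two decisive constraint checks unresolved.
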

\begin{proof}
	\cref{eq:1,eq:2} are obviously preserved. For \cref{eq:4} we have to check the case where $i=b-1$ (for which the right hand side decreases by $\nabla_0 /(b-1-2\nabla_0 )\,\alpha\le\alpha$ and the left hand side decreases by $\alpha$) and the case $i=b-2$  (for which the right hand side increases and the left hand side decreases).
\end{proof}
Applying this lemma, either we can reduce $b$ to $3\nabla_0 $ (hence $b=a$), or we force $d_i=0$ for all $i<b-1$. Thus, $a=b-1$ or $a=b$.

\begin{lemma}
	We can assume that for every $b< i\le {\dR}$ we have  $d_i=\nabla_0 /(i-2\nabla_0 )$.
\end{lemma}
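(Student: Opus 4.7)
The plan is to show that any feasible solution with some $d_{i_0} < \nabla_0/(i_0 - 2\nabla_0)$ for $b < i_0 \le \dR$ is sub-optimal, by exhibiting a small perturbation that strictly increases $S$ while preserving all constraints. Since increasing a single $d_{i_0}$ by a small positive amount leaves $r_i$ unchanged, both \cref{eq:2} and \cref{eq:4} are automatically preserved (in the latter, only the bound on $d_{i_0}$ itself is relevant, and by assumption it is not yet saturated). The only constraint that could fail is \cref{eq:1}, so the heart of the proof is to show that \cref{eq:1} in fact has strict slack at every $i' \ge i_0$.

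The key computation proceeds as follows. For $b < i \le \dR$, both $i$ and $i-1$ are at least $b$, so by the structure forced by previous lemmas $r_i = i+1$ and $r_{i-1} = i$, and \cref{eq:4} collapses to $d_i \le \nabla_0/(i - 2\nabla_0)$. Moreover, since $b \ge 3\nabla_0$, every $j$ with $b < j \le \dR$ satisfies $j - 2\nabla_0 > \nabla_0$, so this cap is strictly less than $1$. For any $i' \ge i_0$, split the partial sum as
\[
\sum_{j \le i'} d_j \;=\; \sum_{j \le b} d_j \;+\; \sum_{j = b+1}^{i'} d_j.
\]
By \cref{eq:1} applied at index $b$, the first summand is at most $r_b = b+1$. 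For the second, each term satisfies $d_j \le \nabla_0/(j - 2\nabla_0) < 1$, with strict inequality at $j = i_0$, so summing $i' - b$ such terms yields $\sum_{j=b+1}^{i'} d_j < i' - b$. Combining, $\sum_{j \le i'} d_j < (b+1) + (i' - b) = i' + 1 = r_{i'}$.

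Consequently there is room to raise $d_{i_0}$ by some $\epsilon > 0$ without violating \cref{eq:1}; this strictly increases $S$, contradicting optimality. Therefore in an $S$-maximal solution we must have $d_i = \nabla_0/(i - 2\nabla_0)$ for every $b < i \le \dR$, as claimed.

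The main obstacle is ensuring that the chain of inequalities is strict everywhere it needs to be: the argument depends critically on $\nabla_0/(j - 2\nabla_0) < 1$ holding strictly throughout the range $(b, i']$, which in turn depends on the normalization $b \ge 3\nabla_0$ established by the preceding reductions. Without this, the second summand could match $i' - b$ exactly and no slack in \cref{eq:1} would remain.
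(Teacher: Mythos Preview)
Your proof is correct and follows the same approach as the paper, just with more detail. The paper's proof compresses the entire argument into the single assertion that for $b<i\le\dR$ the constraints \cref{eq:1,eq:2,eq:4} ``reduce to $d_i\le \nabla_0/(i-2\nabla_0)$'', whereas you explicitly verify the part the paper leaves implicit: that \cref{eq:1} is never binding in this range because $b\ge 3\nabla_0$ forces each cap $\nabla_0/(j-2\nabla_0)$ to be strictly below~$1$, so $\sum_{j\le i'}d_j\le r_b+\sum_{j=b+1}^{i'}d_j<(b+1)+(i'-b)=r_{i'}$.
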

\begin{proof}
Indeed, as $b\ge a\ge 3\nabla_0 $, for $b< i\le {\dR}$,  \cref{eq:1,eq:2,eq:4} reduce to
$d_{i}\le \frac{\nabla_0 }{i-2\nabla_0 }$. Hence, we can assume $d_i=\nabla_0 /(i-2\nabla_0 )$ if $i> b$.
\end{proof}

\begin{lemma}
	We can assume $b=a$.
\end{lemma}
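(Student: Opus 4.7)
If $a = b$ there is nothing to prove, so suppose $a = b - 1$. The earlier reductions have already fixed the following structure in this case: $d_j = 0$ for all $j < a$ (so \cref{eq:1} at $i = a$ simplifies to $d_a \le r_a$), $r_{a-1} = 0$, $r_{a+1} = a + 2$, and $a \ge 3\nabla_0$ (since the preceding lemma produces the case $a = b-1$ only when its hypothesis $b \ge 3\nabla_0 + 1$ holds, together with the earlier reduction pushing $a$ up to $3\nabla_0$). My plan is to modify the given feasible solution so that $r_a = a+1$ without decreasing $S$; this forces $b \le a$, and since $r_{a-1} = 0 \ne a$ we then conclude $b = a$.

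The first step would be to observe that, given $r_a$, we may assume that both $d_a$ and $d_{a+1}$ are pushed to their individual maxima allowed by \cref{eq:1,eq:4}, namely
\[
d_a \;=\; \min\Bigl(r_a,\ \tfrac{\nabla_0\, r_a}{a - 2\nabla_0}\Bigr)
\quad\text{and}\quad
d_{a+1} \;=\; \tfrac{\nabla_0\,(a + 2 - r_a)}{a + 1 - 2\nabla_0},
\]
since raising either can only increase $S$ without violating another constraint. A short computation then shows that $d_a + d_{a+1}$, viewed as an affine function of $r_a \in [0, a+1]$, has strictly positive slope in both subcases: the slope equals $1/(\nabla_0 + 1)$ when $a = 3\nabla_0$, and $\nabla_0/(a - 2\nabla_0) - \nabla_0/(a + 1 - 2\nabla_0) > 0$ when $a > 3\nabla_0$. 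Consequently the maximum of $S$ is attained at $r_a = a + 1$.

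The second step would be to verify that setting $r_a = a+1$ (together with the induced values of $d_a$ and $d_{a+1}$) preserves all of \cref{eq:1,eq:2,eq:4}. Constraint \cref{eq:2} is immediate; \cref{eq:4} at $i \in \{a, a+1\}$ holds by our choice of $d_a, d_{a+1}$ at their upper bounds, and at $i \ge a+2$ it is unchanged. For \cref{eq:1} at each $i \ge a+1$ the required inequality reduces to
\[
\sum_{j = a+1}^{i} \frac{\nabla_0}{j - 2\nabla_0} \;\le\; i - a,
\]
which holds because every summand is at most $\nabla_0/(a+1-2\nabla_0) \le \nabla_0/(\nabla_0 + 1) < 1$, using $a \ge 3\nabla_0$. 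Once feasibility is established, $r_a = a+1$ together with $r_i = i+1$ for $i > a$ yields $b \le a$, and $r_{a-1} = 0 \ne a$ yields $b \ge a$, so $b = a$.

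I expect the only real subtlety to be in this feasibility check, specifically at the ``joint'' index $i = a+1$: one has to rule out that \cref{eq:1} there imposes an extra constraint on the pair $(d_a, d_{a+1})$ beyond the individual bounds given by \cref{eq:4} at $i = a$ and $i = a+1$. This is handled by the same inequality $a \ge 3\nabla_0$, which gives $d_a \le a+1$ and $d_{a+1} \le 1$, hence $d_a + d_{a+1} \le a+2 = r_{a+1}$.
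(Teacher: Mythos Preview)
Your argument is correct and follows essentially the same route as the paper: in the case $a=b-1$ one raises $r_a$ to $a+1$, observes that the induced increase in $d_a$ outweighs the forced decrease in $d_{a+1}$ (since $\nabla_0/(a-2\nabla_0)>\nabla_0/(a+1-2\nabla_0)$), and concludes that $S$ does not decrease while $b$ drops to $a$. The paper's proof records exactly this trade-off, writing the gain as $\frac{\nabla_0\alpha}{a-2\nabla_0}-\frac{\nabla_0\alpha}{b-2\nabla_0}$ with $\alpha=b-r_a$, but simply asserts that the constraints are preserved; your version is more careful in that you explicitly verify \cref{eq:1} at all indices $i\ge a+1$ via the bound $\nabla_0/(j-2\nabla_0)<1$ for $j\ge a+1$.
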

\begin{proof}
	Assume $a=b-1$ and let $\alpha=b-r_a$.
	We have $d_a\le\frac{\nabla_0 }{a-2\nabla_0 } (b-\alpha)$ and $d_b\le \frac{\nabla_0 }{b-2\nabla_0 } (1+\alpha)$.
	If we increase $r_a$ to $b$, we can increase $d_a$ by $\frac{\nabla_0 \alpha}{a-2\nabla_0 }$ and decrease
	$d_b$ by  $\frac{\nabla_0 \alpha}{b-2\nabla_0 }$, while preserving the constraints and increasing $S$.
\end{proof}

\begin{lemma}
	We can assume $a=3\nabla_0 $.
\end{lemma}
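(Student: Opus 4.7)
The plan is to show that when $a>3\nabla_0$, a single local rearrangement of the variables $(r_i,d_i)$ preserves every constraint and strictly increases $S$, so by iterating we may assume $a=3\nabla_0$. After all previous lemmas, the configuration is essentially fixed: $r_i=d_i=0$ for $i<a$, $r_i=i+1$ for $i\geq a$, $d_i=\nabla_0/(i-2\nabla_0)$ for $i>a$, and $d_a$ is set to its maximum allowed value. Since $a>3\nabla_0$ forces $\nabla_0(a+1)/(a-2\nabla_0)<a+1$, the binding constraint for $d_a$ is \cref{eq:4}, yielding $d_a=\nabla_0(a+1)/(a-2\nabla_0)$.

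The rearrangement I propose is: redefine $r_{a-1}:=a$, $d_{a-1}:=\nabla_0 a/(a-1-2\nabla_0)$, and $d_a:=\nabla_0/(a-2\nabla_0)$, keeping all other variables unchanged. The outcome is the analogous configuration with $a$ replaced by $a-1$, so once the modification is shown to be valid and $S$-increasing, we can apply it repeatedly until $a=3\nabla_0$.

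For the constraint check, \cref{eq:2} is trivially preserved since only $r_{a-1}$ changes and $a\leq(a-1)+1$. The new \cref{eq:4} at index $a-1$ holds by construction (using $a-1-2\nabla_0\geq\nabla_0$, which is where $a>3\nabla_0$ is used), and at index $a$ it becomes $d_a\leq (\nabla_0/(a-2\nabla_0))(r_a-r_{a-1})=\nabla_0/(a-2\nabla_0)$, matching the new $d_a$. For \cref{eq:1}, at index $a-1$ the inequality $d_{a-1}\leq r_{a-1}$ reduces to $\nabla_0 a/(a-1-2\nabla_0)\leq a$, i.e.\ $\nabla_0\leq a-1-2\nabla_0$, which again is just $a\geq 3\nabla_0+1$; at index $a$ the inequality $d_{a-1}+d_a\leq r_a$ then follows from the previous bound together with $\nabla_0/(a-2\nabla_0)\leq 1$.

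Finally, the change in $S$ is
\[
d_{a-1}^{\text{new}}+d_a^{\text{new}}-d_a^{\text{old}}=\frac{\nabla_0 a}{a-1-2\nabla_0}-\frac{\nabla_0 a}{a-2\nabla_0}=\frac{\nabla_0 a}{(a-1-2\nabla_0)(a-2\nabla_0)}>0,
\]
so the move strictly increases $S$, and iterating it brings $a$ down to $3\nabla_0$. I expect the main (and essentially only) obstacle to be the bookkeeping in the third paragraph, namely verifying \cref{eq:1} at index $a$ after the rearrangement; everything else reduces directly to the hypothesis $a>3\nabla_0$ via $a-1-2\nabla_0\geq\nabla_0$.
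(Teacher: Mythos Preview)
Your proposal is correct and takes essentially the same approach as the paper: both set $r_{a-1}=a$, increase $d_{a-1}$ to $\nabla_0 a/(a-1-2\nabla_0)$, decrease $d_a$ to $\nabla_0/(a-2\nabla_0)$, and use $a\geq 3\nabla_0+1$ (equivalently $\nabla_0/(a-1-2\nabla_0)\leq 1$) to preserve \cref{eq:1}. Your write-up is in fact more detailed than the paper's, which only sketches the move and notes the one crucial inequality; the computation of the change in $S$ and the explicit constraint checks you give are exactly what is implicit there.
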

\begin{proof}
Assume $a\ge 3\nabla_0 +1$. By putting $r_{a-1}=a$ we can increase $d_{a-1}$ by $\frac{\nabla_0 }{a-1-2\nabla_0 }a$ and decrease~$d_a$ by  $\frac{\nabla_0 }{a-2\nabla_0 }a$. Note that the condition $a\ge 3\nabla_0 +1$ implies that
$\frac{\nabla_0 }{a-1-2\nabla_0 }\le 1$, which is needed to preserve \cref{eq:1}.
\end{proof}

Now we have $a=b=3\nabla_0 $ and we can put $d_a=a+1$.
Hence,  the optimum is

$d_i=0$ if $i<3\nabla_0 $, $d_{3\nabla_0 }=3\nabla_0 +1$ and $d_{3\nabla_0 +i}=\nabla_0 /(\nabla_0 +i)$.
Altogether, we get
\[
S=3\nabla_0 +1+\nabla_0 \sum_{i=\nabla_0 +1}^{{\dR}-2\nabla_0 }\frac{1}{i}=\nabla_0 (H_{{\dR}-2\nabla_0 }-H_{\nabla_0} )+3\nabla_0 +1,
\]
where $H_i=1+1/2+\dots+1/i$ is the $i$th harmonic number.

It is known \cite{DeTemple1991} that
\[
\frac{1}{24(n+1)^2}<H_n-\ln \Bigl(n+\frac12\Bigl)-\gamma<\frac{1}{24n^2},
\]
where $\gamma$ is the Euler--Mascheroni constant.
We deduce that for $n>m$ we have
\[
-\frac{1}{24m^2}<\frac{1}{24}\Bigl(\frac{1}{(n+1)^2}-\frac{1}{m^2}\Bigr)<(H_n-H_m)-\ln \Bigl(\frac{2n+1}{2m+1}\Bigr)<\frac{1}{24}\Bigl(\frac{1}{n^2}-\frac{1}{(m+1)^2}\Bigr)\le 0
\]

 we deduce
 \[
 -\frac{1}{24\nabla_0 ^2}<S-\biggl(
 \nabla_0 \ln\Bigl(\frac{2{\dR}-4\nabla_0 +1}{2\nabla_0 +1}\Bigr)+3\nabla_0 +1\biggr)<0.
 \]

 Hence, with a (negative) error less than $0.042$ we have
 \[
 S\approx \nabla_0 \ln\Bigl(\frac{2{\dR}-4\nabla_0 +1}{2\nabla_0 +1}\Bigr)+3\nabla_0 +1.
 \]

 This finishes the proof of \cref{lem:greedy-approx}. \hfill$\qed$

%
%
%
%



\section{$K_{3,t}$-free graphs}

We now turn our attention to graphs 
that exclude $K_{3,t}$ for some $t$ (and with bounded $\nabla_1$). 
The most prominent graphs
with these properties are graphs that embed into a surface of bounded
genus and in particular planar graphs.

\subsection{Phase 1: Preprocessing}

The general preprocessing phase described in \cref{sec:step1} remains 
unchanged. Recall that we defined $D_1$ as $\{v\in V(G) : \text{ for all } A\subseteq V(G)\setminus \{v\} \text{ with $N(v)\subseteq N[A]$ we have $|A|> (2\nabla-1)\}$}.$ Recall as well that for every $v\in V(G)\setminus D_1$, we fixed
  $A_v\subseteq V(G)\setminus \{v\}$ such that
  $\Nr(v)\subseteq N[A_v]$ and $|A_v|\leq 2\nabla-1$. Furthermore, 
  for $V(G)\setminus \hat{D}$ we
assumed $A_v\subseteq D\setminus\{v\}$.
 
\subsection{Phase 2: local dominators in the $K_{3,t}$-free case}\label{sec:D2}

In this second phase, things get simpler than in \cref{sec:phase2}. Since
we now assume that we exclude $K_{3,t}$ the domination sequences of \cref{def:dom-sequence} only have
length two. We can therefore simplify the analysis of the domination sequences. We simply select every pair
of vertices with sufficiently many neighbors in common.

\begin{tcolorbox}[colback=red!5!white,colframe=red!50!black]
\begin{itemize}
\item For $v\in V(G)$ let
  $B_v\coloneqq \{z\in V(G)\setminus \{v\}: |\Nr(v)\cap \Nr(z)|\geq
  (2\nabla-1)t+1\}$.\smallskip
\item Let $W$ be the set of vertices $v\in V(G)$ such that
  $B_v \neq \emptyset$.\smallskip
\item Let $D_2\coloneqq \bigcup\limits_{v\in W} (\{v\}\cup B_v)$.
\end{itemize}
\end{tcolorbox}

Once $D_1$ has been computed in the previous phase, 2 more rounds of
communication are enough to compute the sets $B_v$ and $D_2$.
Before we update the residual degrees, let us analyze the sets $B_v$
and~$D_2$.  First note that the definition is symmetric: since
$\Nr(v)\cap \Nr(z)=\Nr(z)\cap \Nr(v)$ we have for all $v,z\in V(G)$ if
$z\in B_v$, then $v\in B_z$. In particular, if $v\in D_1$ or
$z\in D_1$, then $\Nr(v)\cap \Nr(z)=\emptyset$, which immediately
implies the following lemma.

\begin{lemma}\label{lem:WcapD1}
  We have $W\cap D_1=\emptyset$ and for every $v\in V(G)$ we have
  $B_v\cap D_1=\emptyset$.
\end{lemma}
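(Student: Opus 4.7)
The plan is to exploit the fact that every vertex in $D_1$ is already green, so its entire open neighborhood lies in $N[D_1]$ and hence outside $R$. Concretely, I would first recall from \cref{sec:step1} that after Phase~1 we set $R = V(G)\setminus N[D_1]$, so for any $v\in D_1$ we have $N(v)\subseteq N[D_1]$ and therefore $\Nr(v)=N(v)\cap R=\emptyset$. Consequently, for every other vertex $z$ we have $\Nr(v)\cap \Nr(z)=\emptyset$, so $|\Nr(v)\cap\Nr(z)|=0<(2\nabla-1)t+1$.

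From this the two claims follow immediately. If $v\in D_1$, then the previous inequality shows $z\notin B_v$ for every $z$, whence $B_v=\emptyset$ and so $v\notin W$; this gives $W\cap D_1=\emptyset$. For the second claim, fix any $v\in V(G)$ and suppose $z\in D_1$; applying the same observation with the roles of $v$ and $z$ swapped (or using the symmetry $\Nr(v)\cap\Nr(z)=\Nr(z)\cap\Nr(v)$ noted just before the lemma) yields $\Nr(z)=\emptyset$ and hence $z\notin B_v$, so $B_v\cap D_1=\emptyset$. There is no real obstacle here; the argument is a one-line consequence of the definitions of $R$ and $\Nr$, and the lemma is essentially recording a bookkeeping fact needed for the upcoming analysis of $D_2$ in the $K_{3,t}$-free setting.
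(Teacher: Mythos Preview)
Your argument is correct and is essentially identical to the paper's: the paper observes just before the lemma that if $v\in D_1$ or $z\in D_1$ then $\Nr(v)\cap\Nr(z)=\emptyset$ (since $R=V(G)\setminus N[D_1]$ forces the residual neighborhood of any $D_1$-vertex to be empty), and states that the lemma follows immediately. You have simply spelled out this one-line observation in slightly more detail.
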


Now we prove that for every $v\in W$, the set $B_v$ cannot be too big,
and has nice properties. 

\begin{lemma}\label{lem:dominating-dominators}
  For all vertices $v\in W$ we have

  \vspace{-5pt}
  \begin{itemize}
  \item $B_v \subseteq A_v$, (hence $|B_v|\leq (2\nabla-1)$) and \smallskip
  \item if $v\not\in \hat{D}$, then $B_v\subseteq D$.
  \end{itemize}
\end{lemma}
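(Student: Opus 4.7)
The plan is to argue by contradiction. Suppose $z \in B_v$ but $z \notin A_v$; I will produce a $K_{3,t}$ subgraph of $G$, contradicting the hypothesis that $G$ is $K_{3,t}$-free. Set $X \coloneqq N_R(v) \cap N_R(z)$, so $|X| \ge (2\nabla-1)t+1$ by the definition of $B_v$. Since $v \notin D_1$ (otherwise $N_R(v) = \emptyset$ and $B_v$ would be empty, i.e.\ $v \notin W$), the set $A_v$ exists, satisfies $|A_v| \le 2\nabla-1$, and dominates $N_R(v) \supseteq X$.

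By pigeonhole, some $y \in A_v$ covers at least
\[
\Bigl\lceil \frac{(2\nabla-1)t+1}{2\nabla-1}\Bigr\rceil \;=\; t+1
\]
elements of $X$, i.e.\ there is a set $X' \subseteq X$ with $|X'| \ge t+1$ and $X' \subseteq N[y]$. Note $y \ne v$ since $y \in A_v \subseteq V(G)\setminus\{v\}$, and $y \ne z$ by our assumption that $z \notin A_v$. Both $v$ and $z$ are adjacent to every vertex of $X'$ (since $X' \subseteq X \subseteq N_R(v) \cap N_R(z)$). It remains to exhibit $t$ common neighbors of $\{v, z, y\}$; this is where the only real subtlety lies, since $y$ may or may not itself lie in $X'$.

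If $y \notin X'$, then $y$ is adjacent to all $t+1 \ge t$ vertices of $X'$, and any $t$ of them witness a $K_{3,t}$ on $\{v, z, y\}$. If $y \in X'$, then $y \in N_R(v) \cap N_R(z)$, so $y$ is still adjacent to $v$ and $z$, and $y$ is adjacent to at least $|X'| - 1 \ge t$ vertices of $X' \setminus \{y\}$; any $t$ of those again form a $K_{3,t}$ with $\{v, z, y\}$. In both cases we contradict $K_{3,t} \not\subseteq G$, so $z \in A_v$, proving $B_v \subseteq A_v$ and hence $|B_v| \le 2\nabla-1$.

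For the second clause, if $v \notin \hat D$, then by the preprocessing phase (Section~\ref{sec:step1}) we chose $A_v \subseteq D \setminus \{v\}$, so $B_v \subseteq A_v \subseteq D$. The main (and essentially only) obstacle is the case $y \in X$, handled above by observing that $y$ itself contributes one adjacency to $v$ and $z$ but we simply discard it and use the remaining $\ge t$ elements of $X'$.
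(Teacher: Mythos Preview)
Your proof is correct and follows essentially the same pigeonhole-then-$K_{3,t}$ argument as the paper. Two minor remarks: your case split on whether $y\in X'$ is equivalent to the paper's one-line observation that $|N[y]\cap X|\ge t+1$ implies $|N(y)\cap X|\ge t$; and your handling of the second clause is actually a bit cleaner than the paper's, since you invoke directly that $A_v\subseteq D$ was already enforced for $v\notin\hat D$, whereas the paper re-runs the whole pigeonhole argument with a fresh set of dominators from $D$.
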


\begin{proof}
  Assume $A_v=\{v_1,\ldots, v_\ell\}$ (a set of possibly not distinct
  vertices) and assume there exists
  $z\in V(G)\setminus \{v,v_1,\ldots v_\ell\}$ with
  $|\Nr(v) \cap \Nr(z)| \geq (2\nabla-1)t+1$.  As $v_1, \ldots, v_\ell$ dominate~$\Nr(v)$,
  and hence also \mbox{$\Nr(v)\cap \Nr(z)$}, and $\ell\leq (2\nabla-1)$, there must be some~$v_i$,
  $1\leq i\leq \ell$, with
  \mbox{$|\Nr(v) \cap \Nr(z) \cap N[v_i]| \geq \lceil ((2\nabla-1)t+1)/(2\nabla-1)\rceil \geq
    t+1$}.  Therefore, $|\Nr(v) \cap \Nr(z) \cap N(v_i)| \geq t$,
  which shows that $K_{3,t}$ is a subgraph of~$G$, contradicting the
  assumption.

  If furthermore $v\not\in \hat{D}$, by definition of $\hat{D}$, we
  can find $w_1,\ldots, w_\ell$ from $D$ that dominate $N(v)$, and in
  particular $\Nr(v)$.  If $z\in V(G)\setminus \{v,w_1,\ldots, w_\ell\}$
  with $|\Nr(v) \cap \Nr(z)| \geq (2\nabla-1)t+1$ we can argue as above to obtain
  a contradiction.
\end{proof}

\pagebreak

%

%


Let us now analyze the size of $D_2$. For this we refine the set $D_2$
and define
\begin{tcolorbox}
  \begin{enumerate}
    \item $D_2^1\coloneqq \bigcup_{v\in W\cap D}
    (\{v\}\cup B_v)$, \smallskip
    \item $D_2^2\coloneqq \bigcup_{v\in W\cap (\hat{D}\setminus D)}
    (\{v\}\cup B_v)$, and \smallskip
    \item $D_2^3\coloneqq \bigcup_{v\in W\setminus (D\cup \hat{D})}
    (\{v\}\cup B_v)$.
  \end{enumerate}
\end{tcolorbox}

\smallskip
Obviously $D_2=D_2^1\cup D_2^2\cup D_2^3$. We now bound the size of the
refined sets $D_2^1,D_2^2$ and $D_2^3$.

\begin{lemma}\label{lem:size-D21}
  $|D_2^1\setminus D|\leq (2\nabla-1)\gamma$.
\end{lemma}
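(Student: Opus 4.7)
The plan is to unfold the definition of $D_2^1$ and exploit the bound on $|B_v|$ proved in Lemma~\ref{lem:dominating-dominators}. Since $D_2^1=\bigcup_{v\in W\cap D}(\{v\}\cup B_v)$, every $v$ indexing the union lies in $D$ and therefore cannot contribute to $D_2^1\setminus D$. Consequently the only elements that may lie in $D_2^1\setminus D$ come from the sets $B_v$ with $v\in W\cap D$, giving the inclusion
\[
D_2^1\setminus D\ \subseteq\ \bigcup_{v\in W\cap D} (B_v\setminus D).
\]

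From here I would apply the first bullet of Lemma~\ref{lem:dominating-dominators}, which states that for every $v\in W$ we have $B_v\subseteq A_v$ and hence $|B_v|\leq 2\nabla-1$. Taking a trivial union bound and using $|W\cap D|\leq |D|=\gamma$ yields
\[
|D_2^1\setminus D|\ \leq\ \sum_{v\in W\cap D}|B_v|\ \leq\ (2\nabla-1)\,|W\cap D|\ \leq\ (2\nabla-1)\gamma,
\]
which is exactly the claimed inequality.

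There is essentially no obstacle here; this lemma is the "easy" case of the three bounds on $D_2^1,D_2^2,D_2^3$, since the indexing set $W\cap D$ is already controlled by $\gamma$ and the dominators $B_v$ have bounded size directly from Lemma~\ref{lem:dominating-dominators}. The more delicate arguments will be needed for $D_2^2$ (where one must pay for $\hat D\setminus D$ via Lemma~\ref{lenzen-improved}) and for $D_2^3$ (where the indexing set is not a priori bounded and one has to rely on $B_v\subseteq D$ from the second bullet of Lemma~\ref{lem:dominating-dominators} combined with a counting argument over incidences with $D$).
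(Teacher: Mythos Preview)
Your proof is correct and follows essentially the same approach as the paper: strip off the indexing vertices $v\in W\cap D$ (which lie in $D$), bound $|D_2^1\setminus D|$ by $\sum_{v\in W\cap D}|B_v|$, and apply the bound $|B_v|\le 2\nabla-1$ from Lemma~\ref{lem:dominating-dominators} together with $|W\cap D|\le\gamma$.
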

\begin{proof}
  We have
  \[|D_2^1\setminus D|= |\bigcup_{v\in W\cap D} (\{v\}\cup
    B_v)\setminus D|\leq |\bigcup_{v\in W\cap D}B_v|\leq \sum_{v\in
      W\cap D}|B_v|.\] By \cref{lem:dominating-dominators} we have
  $|B_v|\leq (2\nabla-1)$ for all $v\in W$ and as we sum over $v\in W\cap D$ we
  conclude that the last term has order at most $(2\nabla-1)\gamma$.
\end{proof}

\begin{lemma}\label{lem:size-D22}
  $D_2^2 \subseteq \hat D$ and therefore
  $|D_2^2\setminus D|< \rho(G)\gamma$.
\end{lemma}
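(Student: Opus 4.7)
The plan is to prove the two inclusions $D_2^2\subseteq \hat D$ and $|D_2^2\setminus D|<\rho(G)\gamma$ in order; the second follows from the first together with \cref{lenzen-improved}.

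First I would unpack $D_2^2$: every element is either some $v\in W\cap(\hat D\setminus D)$ or lies in $B_v$ for such a $v$. The first type is trivially in $\hat D$ by definition, so the only work is to argue that $B_v\subseteq \hat D$ whenever $v\in W\cap(\hat D\setminus D)$.

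The key tool is the symmetry of the definition of $B_v$, which was already noted right before \cref{lem:WcapD1}: since the condition $|\Nr(v)\cap\Nr(z)|\geq(2\nabla-1)t+1$ is symmetric in $v$ and $z$, $z\in B_v$ implies $v\in B_z$, so in particular $B_z\neq\emptyset$ and hence $z\in W$. Now I would apply the second bullet of \cref{lem:dominating-dominators} contrapositively: if $z\in W$ were not in $\hat D$, then $B_z\subseteq D$, and since $v\in B_z$ this would force $v\in D$, contradicting $v\in \hat D\setminus D$. Therefore $z\in\hat D$, which establishes $B_v\subseteq \hat D$ and consequently $D_2^2\subseteq \hat D$.

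For the cardinality bound I would simply subtract $D$ on both sides of $D_2^2\subseteq \hat D$ and invoke \cref{lenzen-improved}, which gives $|\hat D\setminus D|<\rho(G)\gamma$, yielding $|D_2^2\setminus D|\leq|\hat D\setminus D|<\rho(G)\gamma$. There is no real obstacle here; the only subtlety is remembering to use the symmetry of $B_v$ to move from the hypothesis $v\in\hat D\setminus D$ to information about $z$, via the contrapositive of \cref{lem:dominating-dominators}.
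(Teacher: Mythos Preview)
Your proposal is correct and follows essentially the same argument as the paper: use the symmetry $z\in B_v\Rightarrow v\in B_z$, apply the second bullet of \cref{lem:dominating-dominators} contrapositively to force $z\in\hat D$, and then invoke \cref{lenzen-improved} for the size bound. If anything, you are slightly more careful than the paper in explicitly noting that $z\in W$ (needed to apply \cref{lem:dominating-dominators} to $z$), which the paper uses implicitly.
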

\begin{proof}
  Let $v\in \hat{D}\setminus D$ and let $z\in B_v$. By symmetry,
  $v\in B_z$ and according to \cref{lem:dominating-dominators}, if
  $z\not\in \hat{D}$, then $v\in D$.  Since this is not the case, we
  conclude that $z\in\hat{D}$.  Hence $B_v\subseteq \hat{D}$ and, more
  generally, $D_2^2\subseteq \hat{D}$.  Finally, according to
  \cref{lenzen-improved} we have $|\hat{D}\setminus D|<\rho(G)\gamma$.
\end{proof}

Finally, the set $D_2^3$, which appears largest at first glance, was
actually already counted, as shown in the next lemma.
\begin{lemma}\label{lem:size-D23}
  $D_2^3\subseteq D_2^1$.
\end{lemma}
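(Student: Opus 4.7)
The plan is to unpack the definition of $D_2^3$ and use the two facts that (i) the relation $z\in B_v$ is symmetric in $v$ and $z$, and (ii) for $v\in W\setminus\hat D$, Lemma~\ref{lem:dominating-dominators} guarantees $B_v\subseteq D$. These two ingredients let us ``transfer'' membership in $D_2^3$ from a witness vertex outside $D\cup\hat D$ to a witness vertex inside $W\cap D$, which is exactly the set that defines $D_2^1$.

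Concretely, I would take an arbitrary $u\in D_2^3$ and fix a witness $v\in W\setminus(D\cup\hat D)$ with $u\in\{v\}\cup B_v$. Since $v\notin\hat D$, Lemma~\ref{lem:dominating-dominators} gives $B_v\subseteq D$, and in particular $B_v\neq\emptyset$ (by $v\in W$) so we may pick some $z\in B_v\subseteq D$. I would then split into two cases.

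In the case $u\in B_v$: we have $u\in D$ directly from $B_v\subseteq D$. Moreover, by symmetry of the definition of $B_{\cdot}$ we obtain $v\in B_u$, which shows $B_u\neq\emptyset$ and hence $u\in W$. Therefore $u\in W\cap D$, so $u\in\{u\}\cup B_u\subseteq D_2^1$.

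In the case $u=v$: use the auxiliary vertex $z\in B_v$ chosen above. Since $z\in D$ and, by symmetry, $v\in B_z$, we deduce $B_z\neq\emptyset$, so $z\in W\cap D$. Consequently $u=v\in B_z\subseteq\{z\}\cup B_z\subseteq D_2^1$. Either way $u\in D_2^1$, which gives the claimed inclusion. The only step that requires any care is verifying that the symmetric-definition trick indeed produces a witness lying in $W\cap D$, and this is exactly what Lemma~\ref{lem:dominating-dominators} (together with $v\notin\hat D$) provides; there is no real obstacle beyond tracking the case distinction between $u=v$ and $u\in B_v$.
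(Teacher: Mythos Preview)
Your proof is correct and follows essentially the same approach as the paper's proof, which also uses Lemma~\ref{lem:dominating-dominators} to get $B_v\subseteq D$ and then the symmetry of the relation $z\in B_v$ to locate a witness in $W\cap D$. Your version is more explicit about the case distinction $u=v$ versus $u\in B_v$, whereas the paper compresses both cases into a two-line argument, but the underlying reasoning is identical.
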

\begin{proof}
  If $v\not\in \hat{D}$, then $B_v\subseteq D$ by
  \cref{lem:dominating-dominators}.  Hence $v\in B_z$ for some
  $z\in D$, and $v\in D_2^1$.
\end{proof}

Recall that we defined $\eta\in [0,1]$ to be such that $|(D_1 \cup D_2)\cap D| =\eta\gamma$.

%


\smallskip
\begin{lemma}\label{lem:size-D12}
  We have $|D_1\cup D_2| < \rho(G)\gamma+2\nabla\eta\gamma$.
\end{lemma}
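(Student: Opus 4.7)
The plan is to split $D_1\cup D_2$ into the part that lies inside $D$ and the part that lies outside, control the first part by the definition of $\eta$, and show that the outside part lives in a small union of known-size pieces.

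First, observe that $|(D_1\cup D_2)\cap D|=\eta\gamma$ directly from the definition of $\eta$, so it remains to bound $|(D_1\cup D_2)\setminus D|$. Using \cref{lem:size-D23} we have $D_2=D_2^1\cup D_2^2$ (since $D_2^3\subseteq D_2^1$), hence
\[
(D_1\cup D_2)\setminus D \;=\; \bigl((D_1\cup D_2^2)\setminus D\bigr)\ \cup\ (D_2^1\setminus D).
\]
Now \cref{lem:size-D1} gives $D_1\subseteq\hat D$ and \cref{lem:size-D22} gives $D_2^2\subseteq\hat D$. Therefore $(D_1\cup D_2^2)\setminus D\subseteq\hat D\setminus D$, and by \cref{lenzen-improved} this set has size strictly less than $\rho(G)\gamma$.

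The key remaining step is to bound $|D_2^1\setminus D|$ in terms of $\eta$, not just in absolute terms. Here the useful observation is that every $v\in W$ lies in $D_2$ by construction (taking the $\{v\}$ term in the union defining $D_2$). In particular $W\cap D\subseteq (D_1\cup D_2)\cap D$, so $|W\cap D|\le\eta\gamma$. Since $D_2^1=\bigcup_{v\in W\cap D}(\{v\}\cup B_v)$ and each $v\in W\cap D$ already lies in $D$, the contribution of $v$ to $D_2^1\setminus D$ is contained in $B_v\setminus D$, which by \cref{lem:dominating-dominators} has size at most $2\nabla-1$. Summing yields
\[
|D_2^1\setminus D|\;\le\;(2\nabla-1)\,|W\cap D|\;\le\;(2\nabla-1)\,\eta\gamma.
\]

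Putting the pieces together,
\[
|D_1\cup D_2|\;=\;\eta\gamma+|(D_1\cup D_2)\setminus D|\;<\;\eta\gamma+\rho(G)\gamma+(2\nabla-1)\eta\gamma\;=\;\rho(G)\gamma+2\nabla\eta\gamma,
\]
which proves the lemma. The only subtle point is the second bullet: resisting the temptation to bound $|D_2^1|$ by the unconditional $(2\nabla-1)\gamma$ of \cref{lem:size-D21}, and instead noticing that $W\cap D$ is contained in $(D_1\cup D_2)\cap D$ so that its size is controlled by $\eta\gamma$; everything else is a routine recombination of the earlier structural lemmas.
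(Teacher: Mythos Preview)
Your proof is correct and follows essentially the same approach as the paper: split $D_1\cup D_2$ into its intersection with $D$ (size $\eta\gamma$) and its complement, use $D_2^3\subseteq D_2^1$ together with $D_1\cup D_2^2\subseteq\hat D$ to bound one part by $\rho(G)\gamma$, and bound $|D_2^1\setminus D|$ by $(2\nabla-1)\eta\gamma$ via \cref{lem:dominating-dominators}. The only cosmetic difference is that you sum over $W\cap D$ and bound $|W\cap D|\le\eta\gamma$ directly, whereas the paper passes through $D_2^1\cap D$; both routes are equivalent.
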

\begin{proof}
  By \cref{lem:size-D23} we have $D_2^3\subseteq D_2^1$, hence,
  $D_1\cup D_2=D_1\cup D_2^1\cup D_2^2$. By \cref{lem:size-D1} we have
  $D_1 \subseteq \hat D$ and by \cref{lem:size-D22} we also have
  $D_2^2 \subseteq \hat D$, hence $D_1\cup D_2^2\subseteq \hat D$.
  Again by \cref{lenzen-improved}, $|\hat D \setminus D|<\rho(G)\gamma$ and
  therefore $|(D_1 \cup D_2^2 )\setminus D| < \rho(G) \gamma$.

  We have $W\cap D\subseteq D_2^1\cap D$, hence with
  \cref{lem:dominating-dominators} we conclude that
  \[
    \big\vert D_2^1 \setminus D \big\vert \leq
    \Big\vert\bigcup\limits_{v\in D \cap D_2^1}B_v\Big\vert \leq
    \sum\limits_{v\in D \cap D_2^1} |B_v| \leq (2\nabla-1)\rho\gamma,
  \]
  hence $(D_1\cup D_2)\setminus D<\rho(G)\gamma+(2\nabla-1)\eta\gamma$. Finally,
  $D_1\cup D_2=(D_1\cup D_2)\setminus D\cup ((D_1\cup D_2)\cap D)$ and
  with the definition of $\eta$ we conclude
  $|D_1\cup D_2|<\rho(G)\gamma + 2\nabla\eta\gamma$.
\end{proof}

We now update the residual degrees, that is, we update $R$ as
$V(G)\setminus N[D_1\cup D_2]$ and for every vertex the number
$\dr(v)=|\Nr(v)|$ accordingly.

Just as before, we show that after the first two phases of the algorithm we
are in the very nice situation where all residual degrees are
small. 

\begin{lemma}\label{lem:res-degree}
  For all $v\in V(G)$ we have $\dr(v)\leq (2\nabla-1)^2t+(2\nabla-1)$.
\end{lemma}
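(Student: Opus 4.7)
The plan is to split on whether $v \in D_1 \cup D_2$. If $v$ is green, i.e.\ $v \in D_1 \cup D_2$, then $v \in N[D_1 \cup D_2]$ and every neighbour of $v$ also lies in $N[D_1 \cup D_2]$, so after the update $R = V(G) \setminus N[D_1 \cup D_2]$ we have $\Nr(v) = \emptyset$ and the bound is trivial. The real work is for $v \notin D_1 \cup D_2$. Write $\Nr^{(1)}(v)$ for the residual neighbourhood obtained immediately after phase~1, which is the notion actually used to define $A_v$, $W$, and the sets $B_z$ in phase~2. Since the post-phase-2 set $R$ is obtained from the post-phase-1 one only by removing further vertices (those covered by $D_2$), we have $\Nr(v) \subseteq \Nr^{(1)}(v)$, and it suffices to bound $|\Nr^{(1)}(v)|$.

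Since $v \notin D_1$, phase~1 supplies a set $A_v = \{v_1, \dots, v_\ell\}$ with $\ell \leq 2\nabla - 1$ such that $\Nr^{(1)}(v) \subseteq N[A_v]$. Since $v \notin D_2$ and $W \subseteq D_2$ by construction, we have $v \notin W$, so $B_v = \emptyset$; in particular, for every $z \neq v$,
\[
|\Nr^{(1)}(v) \cap \Nr^{(1)}(z)| \leq (2\nabla - 1)t.
\]
The key observation is that any $w \in \Nr^{(1)}(v) \cap N(v_i)$ is red (after phase~1) and adjacent to $v_i$, hence lies in $\Nr^{(1)}(v_i)$. Consequently $\Nr^{(1)}(v) \cap N(v_i) \subseteq \Nr^{(1)}(v) \cap \Nr^{(1)}(v_i)$, and applying the above bound with $z = v_i$ yields
\[
|\Nr^{(1)}(v) \cap N[v_i]| \leq |\Nr^{(1)}(v) \cap N(v_i)| + 1 \leq (2\nabla - 1)t + 1,
\]
the extra $+1$ accounting for the possibility that $v_i$ itself belongs to $\Nr^{(1)}(v)$.

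Summing over $i = 1, \dots, \ell$ and using $\Nr^{(1)}(v) \subseteq \bigcup_i N[v_i]$ gives
\[
|\Nr^{(1)}(v)| \leq (2\nabla - 1)\bigl((2\nabla - 1)t + 1\bigr) = (2\nabla - 1)^2 t + (2\nabla - 1),
\]
which is exactly the claimed bound. The only delicate point is tracking which version of the residual neighbourhood is in play: the covering property of $A_v$ and the bound coming from $B_v = \emptyset$ both refer to the post-phase-1 set, while the lemma is stated for the post-phase-2 one, and the two are bridged by the trivial inclusion $\Nr(v) \subseteq \Nr^{(1)}(v)$. Beyond that bookkeeping, the proof is a straightforward counting argument.
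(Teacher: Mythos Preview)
Your proof is correct and follows essentially the same approach as the paper's: use that $v\notin D_1$ supplies a cover $A_v$ of size at most $2\nabla-1$, and that $v\notin D_2$ (hence $B_v=\emptyset$) bounds each intersection $|\Nr^{(1)}(v)\cap \Nr^{(1)}(v_i)|$ by $(2\nabla-1)t$. The paper phrases this as a contradiction via pigeonhole (assume $\dr(v)$ is too large, find a single $z\in A_v$ with $|\Nr(v)\cap \Nr(z)|\geq (2\nabla-1)t+1$, contradicting $v\notin D_2$), whereas you bound each $|\Nr^{(1)}(v)\cap N[v_i]|$ individually and sum; these are dual formulations of the same count. Your version is in fact more careful than the paper's about distinguishing the post-phase-1 residual neighbourhood (used to define $B_v$) from the post-phase-2 one appearing in the lemma's statement.
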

\begin{proof}
  First, every vertex of $D_1\cup D_2$ has residual degree $0$.
  Assume that there is a vertex $v$ of residual degree at least $(2\nabla-1)^2t+(2\nabla-1)+1$.
  As $v$ is not in $D_1$, its residual neighbors are
  dominated by a set $A_v$ of at most $(2\nabla-1)$ vertices. Hence there is a
  vertex~$z$ (not in $D_1$ nor $D_2$) with $|\Nr(v)\cap \Nr[z]|\geq (2\nabla-1)t+2 = ((2\nabla-1)t+1)+1$, hence, $|\Nr(v)\cap \Nr(z)|\geq (2\nabla-1)t+1$.
  This contradicts that $v$ is not in~$D_2$.
\end{proof}


\subsection{Phase 3: LP-based approximation}\label{sec:LP-planar}
%
%
%
%
%
%

We now proceed with the LP-based approximation as in the general case
presented in \cref{sec:LP}. Recall that for any desired $\epsilon>0$
we defined $\Gamma$ as an high degree and defined
$D_3^1$ as the set of all vertices with degree greater than $\Gamma$.
We added $D_3^1$ to the dominating set and were able to call the
LP-based approximation algorithm of \cref{cor:LP-approx-general}. We finally
obtained a set $D_3$ dominating the remaining vertices with
$|D_3|\leq (2\nabla_0+1)(1+\epsilon)\gamma_R$ according to \cref{lem:D3-LP}.

We now conclude our main theorem, \cref{thm:K3t-free-total}, stating
that the algorithm on $K_{3,t}$-free graphs computes a dominating
set of size at most $(6\nabla_1+3)\gamma$.

\begin{proof}[Proof of \cref{thm:K3t-free-total}]
First, $D_1,D_2$, and $D_3$ are computed locally, in a bounded number of
  rounds, and additionally the set $D_1 \cup D_2 \cup D_3$ dominates $G$. Then,
  \begin{align*}
  	|D_1\cup D_2|&<\rho(G)\gamma+2\nabla\eta\gamma&\text{(by \cref{lem:size-D12})}\\
  	&\leq (2\nabla_1+1)\gamma+2\nabla_1\eta\gamma&\text{(as $\rho(G)\leq 2\nabla_1+1$ and $\nabla\leq \nabla_1$)}
	\intertext{and}
  	|D_3|&\leq (2\nabla_0+1)(1+\epsilon)(1-\eta)\gamma&\text{(by \cref{lem:D3-LP})}\\
	&\leq (2\nabla_1+1)(1+\epsilon)(1-\eta)\gamma&\text{(as $\nabla_0\leq \nabla_1$)}
  	\intertext{ By choosing $\epsilon=1$,}
  	|D_1 \cup D_2 \cup D_3|  &\leq (2\nabla_1+1+2\nabla_1\eta+(4\nabla_1+2)(1-\eta))\gamma\\
  	&\leq  (6\nabla_1+3)\gamma&\text{(maximized when $\eta=0$)}
  \end{align*}
\end{proof}

\subsection{Planar graphs}

Finally, we complete the analysis of our algorithm for planar graphs,
showing that it computes an $(11+\epsilon)$-approximation. In the
following we fix a planar graph $G$.

\begin{proof}[Proof of \cref{thm:planar}]
We revisit the proof of \cref{thm:K3t-free-total} and plug in the
numbers for planar graphs. For planar graphs we have
$K_{3,3}\not\subseteq G$, $\nabla_0=3$, $\nabla=2$ and
$\rho=4$, as stated in \cref{lem:bounds}.
Therefore, by \cref{lem:size-D12} we have $|D_1\cup D_2|<4\gamma+4\eta\gamma$   and by \cref{lem:D3-LP} we have \linebreak $|D_3|\leq (7+\e)(1-\eta)\gamma$. Hence, $|D_1 \cup D_2 \cup D_3|\leq \gamma(4+4\eta +7 - 7\eta +\e - \e\eta)
  \leq \gamma(11+\e-3\eta -\e\eta)$.
  As $\eta\in[0,1]$, this is maximized when $\eta=0$. Hence
  $|D_1 \cup D_2 \cup D_3| \le \gamma(11+\e)$.
\end{proof}



We now further restrict the input graphs, requiring e.g.\ planarity
together with a lower bound on the girth. Our algorithm works exactly as before, however,
using different parameters. From the different edge densities and
Hall ratio of
the restricted graphs we will then derive different constants and
as a result a better approximation factor. Throughout this section
we use the same notation as in the first part of the paper.

As in the general case in the first phase we begin by computing
the set $D_1$ and analyzing it in terms of the auxiliary set $\hat{D}$.


\begin{corollary}\label{a-lenzen-improved-planar}
  ~
  \begin{enumerate}
    \item If $G$ is bipartite, then $|\hat{D}\setminus D| < 2\gamma$.\smallskip
    \item If $G$ is triangle-free, outerplanar, or has girth 5,
      then $|\hat{D}\setminus D| < 3\gamma$.
  \end{enumerate}
\end{corollary}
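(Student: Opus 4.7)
The plan is to apply Lemma \ref{lenzen-improved} directly, combined with the class-specific bounds on the Hall ratio $\rho(G)$ provided by Lemma \ref{lem:bounds}. Recall that Lemma \ref{lenzen-improved} asserts $|\hat{D}\setminus D|<\rho(G)\cdot\gamma$ whenever the parameter $\nabla$ is chosen to be an integer strictly greater than $\nabla_1^B(G)$. All of the subclasses in the statement are subclasses of planar graphs, for which $\nabla_1^B(G)<2$ by Lemma \ref{lem:bounds}(1), so the choice $\nabla=2$ is admissible uniformly, and Lemma \ref{lenzen-improved} applies. The only remaining step is therefore to plug the corresponding Hall-ratio bound into the conclusion.

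For (1), when $G$ is planar and bipartite, Lemma \ref{lem:bounds}(3) yields $\rho(G)\le 2$, so Lemma \ref{lenzen-improved} gives $|\hat{D}\setminus D|<2\gamma$. For (2), when $G$ is either outerplanar or triangle-free planar, Lemma \ref{lem:bounds}(2) yields $\rho(G)\le 3$, hence $|\hat{D}\setminus D|<3\gamma$. The remaining subcase is a planar graph $G$ of girth at least $5$; since girth $\ge 5$ in particular forbids triangles, $G$ is triangle-free planar, so the previous inequality applies and again gives $|\hat{D}\setminus D|<3\gamma$. There is no real obstacle here beyond verifying that the parameter choice of $\nabla$ is consistent with the constraints in the framework box, which it is. \qed
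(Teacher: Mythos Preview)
Your proposal is correct and follows exactly the paper's approach: the paper's own proof is the single line ``This is immediate from \cref{lem:bounds} and \cref{lenzen-improved},'' and you have simply unpacked that, plugging the Hall-ratio bounds $\rho(G)\le 2$ (bipartite planar) and $\rho(G)\le 3$ (triangle-free planar, outerplanar) into the inequality $|\hat D\setminus D|<\rho(G)\gamma$. Your extra verification that $\nabla=2$ is an admissible parameter for these subclasses is a harmless sanity check but not strictly needed, since the bound in \cref{lenzen-improved} depends only on $\rho(G)$ and the framework already fixes an admissible $\nabla$.
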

\begin{proof}
This is immediate from \cref{lem:bounds} and \cref{lenzen-improved}.
\end{proof}

The inclusion $D_1\subseteq \hat D$ continues to hold and the bound
on the sizes as stated in the next lemma is again a direct consequence of the corollary.

\smallskip
\begin{lemma}\label{alem:D-hat}
    We have $D_1\subseteq \hat D$, and
  \textit{\begin{enumerate}
    \item if $G$ is bipartite,
      then $|\hat{D}\setminus D| < 2\gamma$ and $|\hat{D}|< 3\gamma$.\smallskip
    \item if $G$ is triangle-free, outerplanar, or has girth 5,
      then $|\hat{D}\setminus D| < 3\gamma$ and $|\hat{D}|< 4\gamma$.
  \end{enumerate}}
\end{lemma}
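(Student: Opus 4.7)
The plan is to derive this lemma directly from the tools already assembled, namely \cref{lenzen-improved}, \cref{lem:bounds}, \cref{lem:size-D1}, and \cref{a-lenzen-improved-planar}. The inclusion $D_1\subseteq\hat D$ is not new at all: it was proved as part of \cref{lem:size-D1}, since any vertex whose open neighborhood cannot be dominated by $2\nabla-1$ vertices of $V(G)\setminus\{v\}$ certainly cannot be dominated by $2\nabla-1$ vertices of the smaller set $D\setminus\{v\}$, which is precisely the defining condition of $\hat D$. So the first sentence of the lemma requires only a pointer to that earlier argument.

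For the two numerical bounds on $|\hat D\setminus D|$, I would simply invoke \cref{a-lenzen-improved-planar} (which already combines \cref{lenzen-improved} with the Hall-ratio estimates of \cref{lem:bounds}). Concretely, \cref{lenzen-improved} gives $|\hat D\setminus D|<\rho(G)\cdot\gamma$ in full generality, and \cref{lem:bounds} bounds the Hall ratio by $2$ for bipartite planar graphs and by $3$ for outerplanar, triangle-free planar, and (since girth $\ge 5$ implies triangle-free) for planar graphs of girth at least $5$. Substituting these constants yields the two claimed inequalities $|\hat D\setminus D|<2\gamma$ and $|\hat D\setminus D|<3\gamma$ respectively.

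The bounds on $|\hat D|$ itself then follow from the crude inclusion $\hat D\subseteq(\hat D\setminus D)\cup D$, which gives
\[
|\hat D|\ \le\ |\hat D\setminus D|+|D|\ <\ \rho(G)\cdot\gamma+\gamma.
\]
Plugging in $\rho(G)\le 2$ in the bipartite planar case yields $|\hat D|<3\gamma$, and $\rho(G)\le 3$ in the other three cases yields $|\hat D|<4\gamma$.

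There is no real obstacle here: the lemma is essentially a bookkeeping consequence of previously established material, packaged in the form most convenient for the forthcoming refined analysis of the restricted planar subclasses. The only point worth double-checking is that girth-$5$ planar graphs really fall under the triangle-free case of \cref{lem:bounds}(2); since a graph of girth at least $5$ is in particular triangle-free, this is immediate, so no separate argument is needed.
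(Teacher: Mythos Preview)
Your proposal is correct and matches the paper's approach exactly: the paper does not even give a formal proof of this lemma, merely remarking that the inclusion $D_1\subseteq\hat D$ ``continues to hold'' (from \cref{lem:size-D1}) and that the size bounds are ``a direct consequence of the corollary'' (\cref{a-lenzen-improved-planar}). Your write-up simply makes these two pointers explicit and adds the trivial step $|\hat D|\le|\hat D\setminus D|+|D|$ to pass from the bound on $|\hat D\setminus D|$ to the bound on $|\hat D|$.
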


In case of triangle-free planar graphs (in particular in the case of bipartite
planar graphs) we proceed with the second phase exactly as in the second phase of
the general algorithm (\cref{sec:D2}), however, the parameter $(2\nabla-1)t+1$ is replaced by
the parameter $7$. In  case of planar graphs of girth at least five or outerplanar
graphs, we simply set $D_2=\emptyset$.

\begin{tcolorbox}[colback=red!5!white,colframe=red!50!black]
  If $G$ is triangle-free:

  \begin{itemize}
    \item For $v\in V(G)$ let $B_v\coloneqq \{z\in V(G)\setminus
      \{v\}: |N_R(v)\cap N_R(z)|\geq 7\}$.\smallskip
    \item Let $W$ be the set of vertices $v\in V(G)$ such
      that $B_v \neq \emptyset$.\smallskip
    \item Let $D_2\coloneqq \bigcup\limits_{v\in W} (\{v\}\cup B_v)$.
  \end{itemize}

  If $G$ has girth at least $5$ or $G$ is outerplanar, let $D_2=\emptyset$.
\end{tcolorbox}

\cref{lem:WcapD1} is based only on the definition of $B_v$ and $W$ and
does not use particular properties of planar graphs, hence, it also holds
in the restricted case.

The next lemma uses the triangle-free property.

\begin{lemma}\label{alem:dominating-dominators}
  If $G$ is triangle-free, then for all vertices $v\in W$ we have

  \vspace{-5pt}
  \begin{itemize}
  \item $B_v \subseteq A_v$ (hence $|B_v|\le 3$), and \smallskip
  \item if $v\not\in \hat{D}$, then $B_v\subseteq D$.
  \end{itemize}
\end{lemma}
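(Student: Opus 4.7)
The argument will mirror that of \cref{lem:dominating-dominators} for the general $K_{3,t}$-free setting, but replaces the use of $K_{3,t}$-freeness with two planarity-specific facts: triangle-freeness of $G$, and that planar graphs exclude $K_{3,3}$ as a subgraph. I would write $A_v=\{v_1,\dots,v_\ell\}$ with $\ell\le 2\nabla-1=3$, and suppose towards a contradiction that some $z\in B_v$ satisfies $z\notin\{v,v_1,\dots,v_\ell\}$. Setting $X\coloneqq \Nr(v)\cap\Nr(z)$ one has $|X|\ge 7$ by the definition of $B_v$ in the triangle-free case, and $X\subseteq\Nr(v)\subseteq N[A_v]$. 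Pigeonhole over the at most three bins $N[v_1],N[v_2],N[v_3]$ then yields an index $i$ with $|X\cap N[v_i]|\ge\lceil 7/3\rceil=3$.

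The heart of the argument is a case split on whether $v_i\in X$. If $v_i\in X$, then $v_i\sim v$ and $|X\cap N(v_i)|\ge 2$, so for any $x\in X\cap N(v_i)$ the triple $\{v,v_i,x\}$ would form a triangle, contradicting triangle-freeness of $G$. If instead $v_i\notin X$, then $|X\cap N(v_i)|\ge 3$, so I can pick three distinct $x_1,x_2,x_3\in X\cap N(v_i)$, each adjacent to all of $v$, $z$, and $v_i$. Triangle-freeness then rules out any edge inside $\{v,z,v_i\}$ (such an edge would close a triangle through any $x_j$) as well as any edge inside $\{x_1,x_2,x_3\}$ (such an edge would close a triangle through $v$). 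Distinctness of these six vertices follows from $z\notin A_v\cup\{v\}$ and $v_i\notin X$, so I obtain a $K_{3,3}$ subgraph of $G$ with parts $\{v,z,v_i\}$ and $\{x_1,x_2,x_3\}$, contradicting planarity of $G$. This establishes $B_v\subseteq A_v$, and hence $|B_v|\le 3$.

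The second bullet will follow by re-running the same pigeonhole and case analysis, but starting from a witness $A'\subseteq D\setminus\{v\}$ with $|A'|\le 2\nabla-1$ dominating $N(v)$ (and hence $\Nr(v)$), whose existence is precisely the defining property of $v\notin\hat D$; the derivation then forces $B_v\subseteq A'\subseteq D$. The only delicate point—and the reason the threshold was set to exactly $7$—is that $\lceil 7/3\rceil\ge 3$ is what the pigeonhole needs to furnish three common neighbors of $v$, $z$, and $v_i$; anything smaller would yield only a $K_{3,2}$, which is planar, and would therefore not contradict the hypothesis. The main obstacle is thus essentially bookkeeping: correctly handling the degenerate case $v_i\in X$ (which is ruled out immediately by triangle-freeness rather than by invoking $K_{3,3}$) and verifying the distinctness of the six vertices in the $K_{3,3}$ witness.
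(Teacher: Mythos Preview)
Your proposal is correct and follows essentially the same approach as the paper: pigeonhole over the at most three dominators in $A_v$ to find some $v_i$ with $|X\cap N[v_i]|\ge 3$, then split into the case yielding a triangle (when $v_i\in X$, equivalently when $|X\cap N(v_i)|=2$) and the case yielding a $K_{3,3}$ subgraph (when $|X\cap N(v_i)|\ge 3$). Your verification that the two sides $\{v,z,v_i\}$ and $\{x_1,x_2,x_3\}$ are independent is unnecessary---a $K_{3,3}$ \emph{subgraph} only requires the six vertices to be distinct and the nine cross edges to be present---but the distinctness check is indeed needed, and your argument there is sound.
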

\begin{proof}
  Assume $A_v=\{v_1,v_2, v_3\}$ and assume there is $z\in V(G)\setminus \{v,v_1,v_2, v_3\}$
  with $|\Nr(v) \cap \Nr(z)| \geq 7$.
  As the vertices $v_1, v_2, v_3$ dominate $\Nr(v)$, and hence $\Nr(v)\cap \Nr(z)$,
  there must be some~$v_i$, $1\leq i\leq 3$, with
  \mbox{$|\Nr(v) \cap \Nr(z) \cap N[v_i]| \geq \lceil 7/3\rceil \geq 3$}.
  Then on of the following holds: either
  \mbox{$|\Nr(v) \cap \Nr(z) \cap N(v_i)| \geq 3$},  or
  \mbox{$|\Nr(v) \cap \Nr(z) \cap N(v_i)| =2$}.
  The first case shows that $K_{3,3}$ is a subgraph of~$G$
  contradicting the assumption that $G$ is planar.
  The second case implies that $v_i\in \Nr(v)$. In this situation, by picking
  $w \in \Nr(v) \cap \Nr(z) \cap N(v_i)$, we get that $(v,v_i,w)$ is a triangle,
  hence we also reach a contradiction.

  If furthermore $v\not\in \hat{D}$, by definition of $\hat{D}$,
  we can find $w_1,w_2, w_3$ from $D$
  that dominate $N(v)$, and in particular $\Nr(v)$.
  If $z\in V(G)\setminus \{v,w_1,w_2, w_3\}$
  with $|\Nr(v) \cap \Nr(z)| \geq 7$ we can argue as above to obtain
  a contradiction.
\end{proof}

For our analysis we again split $D_2$ into three sets $D_2^1, D_2^2$ and
$D_2^3$. The next lemmas hold also for the restricted cases. We repeat
them for convenience with the appropriate numbers filled it.

\begin{lemma}\label{alem:size-D21}
  If $G$ is triangle-free, then $|D_2^1\setminus D|\leq 3\gamma$.
\end{lemma}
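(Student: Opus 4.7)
The plan is to follow exactly the template of the proof of the general \cref{lem:size-D21}, but specialising the bound on $|B_v|$ to the triangle-free case. Since $D_2^1 \coloneqq \bigcup_{v\in W\cap D}(\{v\}\cup B_v)$, every element of $D_2^1\setminus D$ that comes from the ``$\{v\}$'' part is discarded (as $v\in D$ there), so
\[
D_2^1\setminus D \;\subseteq\; \bigcup_{v\in W\cap D} B_v,
\]
and therefore $|D_2^1\setminus D|\le \sum_{v\in W\cap D} |B_v|$ by a trivial union bound.

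The key input is \cref{alem:dominating-dominators}, which in the triangle-free setting gives $B_v\subseteq A_v$ and hence $|B_v|\le 3$ for every $v\in W$ (this is the analogue of the generic bound $|B_v|\le 2\nabla-1$, with $\nabla=2$ for planar graphs, specialised using triangle-freeness to rule out the edge case where $v_i\in N_R(v)$). Combining with $|W\cap D|\le |D|=\gamma$, we obtain
\[
|D_2^1\setminus D| \;\le\; \sum_{v\in W\cap D} |B_v| \;\le\; 3\,|W\cap D| \;\le\; 3\gamma,
\]
which is the claimed bound.

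There is no real obstacle here: the lemma is a direct bookkeeping consequence of \cref{alem:dominating-dominators} together with the definition of $D_2^1$, mirroring verbatim the argument already used for \cref{lem:size-D21} in the general case. The only substantive content is the improved constant $3$, which comes from the triangle-free version of the domination bound rather than from anything new in this proof.
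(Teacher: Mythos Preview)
Your proposal is correct and follows exactly the same approach as the paper: the paper explicitly states that this lemma is just the general \cref{lem:size-D21} with the appropriate constant filled in, and its proof is precisely the union-bound argument you give, invoking \cref{alem:dominating-dominators} for the bound $|B_v|\le 3$ and summing over $v\in W\cap D$.
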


\begin{lemma}\label{alem:size-D22}
  If $G$ is triangle-free, then $D_2^2 \subseteq \hat D$ and therefore
  $|D_2^2\setminus D|< 3\gamma$.
\end{lemma}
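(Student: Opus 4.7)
The plan is to mirror the argument used in \cref{lem:size-D22} for the general case, but replacing the calls to \cref{lem:dominating-dominators} and \cref{lenzen-improved} by their planar triangle-free refinements, namely \cref{alem:dominating-dominators} and the triangle-free case of \cref{a-lenzen-improved-planar}.

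First I would establish the inclusion $D_2^2 \subseteq \hat{D}$. Let $v \in W \cap (\hat{D} \setminus D)$; such a $v$ lies in $\hat D$ by hypothesis, so it suffices to show that $B_v \subseteq \hat{D}$. Fix any $z \in B_v$. The definition of $B_v$ is symmetric since $\Nr(v) \cap \Nr(z) = \Nr(z) \cap \Nr(v)$, hence $v \in B_z$ and in particular $z \in W$. Now suppose toward a contradiction that $z \notin \hat{D}$. Since $G$ is triangle-free and $z \in W \setminus \hat{D}$, \cref{alem:dominating-dominators} applies to $z$ and gives $B_z \subseteq D$. But $v \in B_z$, so $v \in D$, contradicting the choice $v \in \hat{D} \setminus D$. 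Hence $z \in \hat{D}$. This gives $B_v \subseteq \hat{D}$, and since $D_2^2$ is the union of the sets $\{v\} \cup B_v$ over all such $v$, we conclude $D_2^2 \subseteq \hat{D}$.

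The size bound then follows immediately: by the first bullet of \cref{a-lenzen-improved-planar} (more precisely, the triangle-free case stated in \cref{alem:D-hat}), we have $|\hat D \setminus D| < 3\gamma$ whenever $G$ is triangle-free planar. Combining this with $D_2^2 \subseteq \hat D$ yields $|D_2^2 \setminus D| \le |\hat D \setminus D| < 3\gamma$, as claimed.

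There is no real obstacle here: the entire argument is a direct transcription of the proof of \cref{lem:size-D22}, and the only subtlety is verifying that the triangle-free version \cref{alem:dominating-dominators} still delivers the implication ``$z \in W \setminus \hat D \Rightarrow B_z \subseteq D$'' that drove the contradiction in the general case. Since that lemma is stated in exactly this form (with the triangle-free bound $|B_v|\le 3$ in place of $|B_v|\le 2\nabla-1$), the proof goes through verbatim.
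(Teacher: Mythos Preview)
Your proof is correct and follows exactly the paper's approach: it is the direct transcription of the proof of \cref{lem:size-D22} with \cref{alem:dominating-dominators} and the triangle-free bound from \cref{alem:D-hat} substituted in. One tiny slip: you write ``the first bullet of \cref{a-lenzen-improved-planar}'', but the triangle-free bound $|\hat D\setminus D|<3\gamma$ is in fact the second item there (the first is the bipartite bound $<2\gamma$); your parenthetical reference to \cref{alem:D-hat} already points to the right statement.
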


\begin{lemma}\label{alem:size-D23}
  If $G$ is triangle-free, then $D_2^3\subseteq D_2^1$.
\end{lemma}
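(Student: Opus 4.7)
The plan is to mirror the proof of \cref{lem:size-D23} from the $K_{3,t}$-free case, simply substituting the triangle-free variant \cref{alem:dominating-dominators} for \cref{lem:dominating-dominators}. The only subtlety to keep in mind is that $D_2^3=\bigcup_{v\in W\setminus (D\cup \hat{D})}(\{v\}\cup B_v)$, so I must verify that not only each such $v$ but also every element of $B_v$ lies in $D_2^1$.

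First I would fix an arbitrary $v\in W\setminus (D\cup \hat{D})$. Since $v\notin \hat{D}$, \cref{alem:dominating-dominators} applied in the triangle-free setting gives $B_v\subseteq D$. Because $v\in W$, the set $B_v$ is nonempty, so pick any $z\in B_v$. The defining condition $|\Nr(v)\cap \Nr(z)|\ge 7$ is symmetric in $v$ and $z$, so $v\in B_z$; in particular $B_z\ne\emptyset$, which means $z\in W$. Combined with $z\in D$ this yields $z\in W\cap D$, so by definition of $D_2^1$ we get $\{z\}\cup B_z\subseteq D_2^1$, and consequently both $z\in D_2^1$ and $v\in D_2^1$.

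Since $z\in B_v$ was arbitrary, the same argument applied to each element of $B_v$ shows $B_v\subseteq D_2^1$. Putting this together with $v\in D_2^1$ gives $\{v\}\cup B_v\subseteq D_2^1$, and taking the union over all $v\in W\setminus (D\cup \hat{D})$ yields $D_2^3\subseteq D_2^1$, as required.

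I do not anticipate a genuine obstacle here: the argument is essentially a mechanical translation of the general proof, with the triangle-free analogue \cref{alem:dominating-dominators} doing all the work. The one place to be careful is to exploit the symmetry of the relation $z\in B_v\Leftrightarrow v\in B_z$ so as to certify $z\in W$ (equivalently, $B_z\ne\emptyset$) and thus access the ``$W\cap D$'' side of the definition of $D_2^1$.
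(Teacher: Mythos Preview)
Your proposal is correct and follows essentially the same approach as the paper's proof of \cref{lem:size-D23}, with \cref{alem:dominating-dominators} substituted for \cref{lem:dominating-dominators}. If anything, you are more careful than the paper: the original proof only explicitly verifies that $v\in D_2^1$, whereas you also spell out why every $z\in B_v$ lies in $D_2^1$ via the symmetry $z\in B_v\Leftrightarrow v\in B_z$.
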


Recall that $\eta\in [0,1]$ is such that $|(D_1 \cup D_2)\cap D| =\eta\gamma$.

\begin{lemma}\label{alem:size-D12}
\mbox{ }
\vspace{-1mm}
\begin{enumerate}
\item If $G$ is bipartite, then $|D_1\cup D_2| < 2\gamma+4\eta\gamma$.
\item If $G$ is triangle-free, then $|D_1\cup D_2| < 3\gamma+4\eta\gamma$.
\item If $G$ has girth at least $5$ or is outerplanar, then $|D_1\cup D_2| < 3\gamma+\eta\gamma$.
\end{enumerate}
\end{lemma}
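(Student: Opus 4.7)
The plan is to follow the proof of the general \cref{lem:size-D12} verbatim, substituting the sharper numerical constants now available from \cref{lem:bounds} (Hall ratio), \cref{alem:D-hat} (size of $\hat D$), and \cref{alem:dominating-dominators} (the bound $|B_v|\le 3$ in the triangle-free case). The three cases differ only in which of these constants we plug in, so I would structure the proof as ``general skeleton, three specializations.''

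For the triangle-free cases (i) and (ii), the first step is to invoke \cref{alem:size-D23} to write $D_1\cup D_2=D_1\cup D_2^1\cup D_2^2$. Then \cref{alem:D-hat} gives $D_1\subseteq\hat D$ and \cref{alem:size-D22} gives $D_2^2\subseteq\hat D$, so $(D_1\cup D_2^2)\setminus D\subseteq\hat D\setminus D$ has cardinality strictly less than $\rho(G)\gamma$, which is $2\gamma$ in the bipartite case and $3\gamma$ in the triangle-free case by \cref{lem:bounds}. For $D_2^1\setminus D$ I would use $W\cap D\subseteq D_2^1\cap D\subseteq(D_1\cup D_2)\cap D$, whose size is $\eta\gamma$ by definition of $\eta$, together with $|B_v|\le 3$ from \cref{alem:dominating-dominators}, to get
\[
|D_2^1\setminus D|\le\Bigl|\bigcup_{v\in D\cap D_2^1}B_v\Bigr|\le\sum_{v\in D\cap D_2^1}|B_v|\le 3\eta\gamma.
\]
Adding the two contributions yields $|(D_1\cup D_2)\setminus D|<\rho(G)\gamma+3\eta\gamma$, and then $|D_1\cup D_2|=|(D_1\cup D_2)\setminus D|+|(D_1\cup D_2)\cap D|<\rho(G)\gamma+3\eta\gamma+\eta\gamma=\rho(G)\gamma+4\eta\gamma$, giving the stated bounds $2\gamma+4\eta\gamma$ and $3\gamma+4\eta\gamma$.

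For case (iii) the situation is even simpler, because $D_2=\emptyset$ by construction. Hence $|D_1\cup D_2|=|D_1|=|D_1\setminus D|+|D_1\cap D|$. Since $D_1\subseteq\hat D$ (by \cref{lem:size-D1}) and, under either the girth-$5$ planar or the outerplanar assumption, $|\hat D\setminus D|<3\gamma$ by \cref{alem:D-hat}, while $|D_1\cap D|\le|(D_1\cup D_2)\cap D|=\eta\gamma$, we obtain $|D_1\cup D_2|<3\gamma+\eta\gamma$ immediately.

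I do not expect a real obstacle: this lemma is a direct numerical specialization of \cref{lem:size-D12}. The one subtlety worth stating explicitly is that the ``$4\eta$'' coefficient (rather than a trivial ``$4$'') comes from using $|D_2^1\cap D|\le|(D_1\cup D_2)\cap D|=\eta\gamma$ instead of bounding $|W\cap D|$ by $\gamma$; getting this right in the bookkeeping is what separates a meaningful bound from a loose one and is precisely what makes the trade-off in the subsequent optimization over $\eta$ pay off.
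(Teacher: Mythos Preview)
Your proposal is correct and follows essentially the same argument as the paper's proof: decompose $D_1\cup D_2$ via $D_2^3\subseteq D_2^1$, bound $(D_1\cup D_2^2)\setminus D$ through $\hat D$, bound $|D_2^1\setminus D|$ by $3\eta\gamma$ using $|B_v|\le 3$ and $|D_2^1\cap D|\le\eta\gamma$, then add back $|(D_1\cup D_2)\cap D|=\eta\gamma$; for case~(iii) use $D_2=\emptyset$ directly. Your write-up is in fact slightly more explicit than the paper about why the sum over $D\cap D_2^1$ is bounded by $\eta\gamma$ (via the inclusion $D_2^1\cap D\subseteq(D_1\cup D_2)\cap D$), which is a good touch.
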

\begin{proof}
If $G$ is outerplanar or $G$ has girth at least $5$, then $D_2=\emptyset$.
By \cref{alem:D-hat} we have $D_1\subseteq \hat{D}$ and
$|\hat{D}\setminus D|<3\gamma$, hence $(D_1\cup D_2)\setminus D<3\gamma$.

If $G$ is triangle-free, by \cref{alem:size-D23} we have $D_2^3\subseteq D_2^1$, hence,
  $D_1\cup D_2=D_1\cup D_2^1\cup D_2^2$. By \cref{alem:D-hat} we have
  $D_1 \subseteq \hat D$ and by \cref{alem:size-D22} we also have
  $D_2^2 \subseteq \hat D$, hence $D_1\cup D_2^2\subseteq \hat D$.
  Again by \cref{alem:D-hat}, if $G$ is bipartite, then
  $|\hat D \setminus D|<2\gamma$, therefore $|(D_1 \cup D_2^2 )\setminus D| < 2\gamma$, and if~$G$ is triangle-free,
  then $|\hat D \setminus D|<3\gamma$,
  therefore $|(D_1 \cup D_2^2 )\setminus D| < 3\gamma$.
  We have $W\cap D\subseteq D_2^1\cap D$, hence with
  \cref{alem:dominating-dominators} we conclude that
  \[
    \big\vert D_2^1 \setminus D \big\vert \leq
    \Big\vert\bigcup\limits_{v\in D \cap D_2^1}B_v\Big\vert \leq
    \sum\limits_{v\in D \cap D_2^1} |B_v| \leq 3\eta\gamma,
  \]
  hence $(D_1\cup D_2)\setminus D<2\gamma+3\eta\gamma$ if
  $G$ is bipartite and $(D_1\cup D_2)\setminus D<3\gamma+3\eta\gamma$
  if $G$ is triangle-free.

  Finally,
  $D_1\cup D_2=(D_1\cup D_2)\setminus D\cup (D_1\cup D_2)\cap D$ and
  with the definition of $\eta$ we conclude

  \vspace{-2mm}
  \begin{enumerate}
  \item $|D_1\cup D_2|<2\gamma + 4\eta\gamma$ if $G$ is bipartite, 
  \item $|D_1\cup D_2|<3\gamma + 4\eta\gamma$ if $G$ is triangle-free, 
  \item $|D_1\cup D_2|<3\gamma + \eta\gamma$ if $G$ has girth at least
  $5$ or is outerplanar.
  \end{enumerate}

\end{proof}

Again, we update the residual degrees, that is, we update
$R$ as $V(G)\setminus N[D_1\cup D_2]$ and for every vertex the
number $\dr(v)=N(v)\cap R$ accordingly and proceed with
the third phase.

\begin{lemma}\label{alem:res-degree}
\textit{\begin{enumerate}
\item If $G$ is triangle-free, then $\Dr(v)\leq 18$.
\item If $G$ has girth at least $5$, then $\Dr(v)\leq 3$.
\item If $G$ is outerplanar, then $\Dr(v)\leq 9$.
\end{enumerate}}
\end{lemma}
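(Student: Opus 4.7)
The plan is to treat the three restricted classes by one common pigeonhole argument, instantiated with the appropriate numerical parameters for each. Vertices in $D_1\cup D_2$ trivially have residual degree $0$, so throughout I fix a vertex $v\in V(G)\setminus(D_1\cup D_2)$. Since $G$ is planar and bipartite $1$-shallow minors satisfy $\nabla_1^B(G)<2$, we may take $\nabla=2$, so that $v\notin D_1$ guarantees the existence of a set $A_v\subseteq V(G)\setminus\{v\}$ of size at most $2\nabla-1=3$ dominating $\Nr(v)$. Assuming, for contradiction, that $\dr(v)$ exceeds the claimed bound, the pigeonhole principle yields some $z\in A_v$ with $|\Nr(v)\cap N[z]|\geq\lceil \dr(v)/3\rceil$, and the whole proof reduces to showing that such a $z$ creates a forbidden configuration.

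For the triangle-free case, I would suppose $\dr(v)\geq 19$, so that some $z\in A_v$ satisfies $|\Nr(v)\cap N[z]|\geq 7$. If $z\in\Nr(v)$, then $v\sim z$ and $v,z$ share at least $6$ common residual neighbors, any of which closes a triangle $v,z,w$. Hence $z\notin\Nr(v)$, giving $|\Nr(v)\cap N(z)|\geq 7$; by the definition of $B_v$ in this section, this puts $z\in B_v$ and therefore $v\in W\subseteq D_2$, contradicting $v\notin D_2$.

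For planar graphs of girth at least $5$, the algorithm sets $D_2=\emptyset$, so I only exploit that $v\notin D_1$ and $|A_v|\leq 3$. Suppose $\dr(v)\geq 4$; then some $z\in A_v$ has $|\Nr(v)\cap N[z]|\geq 2$. If $z\in\Nr(v)$, then $z$ has some further neighbor $w\in\Nr(v)$, and $v,z,w$ form a triangle, contradicting girth $\geq 5$. If $z\notin\Nr(v)$, then $z$ has two distinct neighbors $w_1,w_2\in\Nr(v)$, and $v\,w_1\,z\,w_2\,v$ is a $4$-cycle, again contradicting girth $\geq 5$.

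For outerplanar graphs, I again have $D_2=\emptyset$ and $|A_v|\leq 3$. Suppose $\dr(v)\geq 10$; then pigeonhole produces $z\in A_v$ with $|\Nr(v)\cap N[z]|\geq 4$. In both sub-cases ($z\in\Nr(v)$ or $z\notin\Nr(v)$) the vertices $v$ and $z$ acquire at least three common neighbors (three others of $\Nr(v)\cap N(z)$ when $z$ is itself red, or four when $z$ is not), exhibiting $K_{2,3}$ as a subgraph of $G$. Since outerplanar graphs exclude $K_{2,3}$ as a minor, let alone as a subgraph, this is a contradiction, so $\dr(v)\leq 9$. The main care in each case lies in correctly separating the two sub-cases $z\in\Nr(v)$ and $z\notin\Nr(v)$, so that the pigeonhole count of common \emph{neighbors} is tight enough to invoke the correct forbidden subgraph.
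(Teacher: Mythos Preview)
Your proposal is correct and follows essentially the same pigeonhole-plus-forbidden-configuration argument as the paper; in each case you derive a dominator $z\in A_v$ covering $\lceil \dr(v)/3\rceil$ residual neighbours and then exhibit a triangle, a $4$-cycle, or a $K_{2,3}$, exactly as the authors do. Your treatment is in fact slightly more careful than the paper's in making the sub-case $z\in \Nr(v)$ versus $z\notin \Nr(v)$ explicit (the paper sometimes jumps directly from ``$z$ dominates four of them'' to ``$|N(v)\cap N(z)|\geq 3$''), and your observation that $\Nr(v)\cap N(z)=\Nr(v)\cap \Nr(z)$ is what makes the appeal to the definition of $B_v$ go through in the triangle-free case.
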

\begin{proof}
  Every vertex of $D_1\cup D_2$ has residual degree $0$, hence, we
  need to consider only vertices that are not in $D_1$ or $D_2$.

  First assume that the graph is triangle-free and
  assume that there is a vertex $v$ of residual degree at least $19$.
  As $v$ is not in $D_1$, its $19$ non-dominated
  neighbors are dominated by a
  set $A_v$ of at most~3 vertices. Hence, there is vertex $z$ (not in $D_1$
  nor $D_2$) dominating at least $\lceil 19/3\rceil = 7$ of them.
  Here, $z$ cannot be one of these 7 vertices, otherwise it would be connected
  to $v$ and there would be a triangle in the graph.
  Therefore we
  have $|\Nr(v)\cap \Nr(z)|\geq 7$, contradicting that $v$ is not in~$D_2$.

  Now assume that $G$ has girth at least $5$ and
  assume that there is a vertex $v$ of residual degree at least~$4$.
  As $v$ is not in $D_1$, its $4$ non-dominated
  neighbors are dominated by a
  set $A_v$ of at most~3 vertices. Hence, there is vertex $z$ (not in $D_1$
  nor $D_2$) dominating at least $\lceil 4/3\rceil = 2$ of them.
  Here, $z$ cannot be one of these 2 vertices, otherwise it would be connected
  to $v$ and there would be a triangle in the graph. However, $z$ can
  also not be any other vertex, as otherwise we find a cycle of length $4$,
  contradicting that~$G$ has girth at least $5$.

  Finally, assume that $G$ is outerplanar and
  assume that there is a vertex $v$ of residual degree at least $10$.
  As $v$ is not in $D_1$, its $10$ non-dominated
  neighbors are dominated by a
  set $A_v$ of at most~3 vertices. Hence, there is vertex $z$ (not in $D_1$
  nor $D_2$) dominating at least $\lceil 10/3\rceil = 4$ of them.
  Therefore $|N(v)\cap N(z)|\geq 3$, and we find a $K_{2,3}$ as a
  subgraph, contradicting that $G$ is outerplanar.
\end{proof}

We proceed with the LP-based approximation as in the general case
presented in \cref{sec:LP}. Recall that for any desired $\epsilon>0$ 
we defined $\Gamma$ as an high degree and defined 
$D_3^1$ as the set of all vertices with degree greater than $\Gamma$. 
We added $D_3^1$ to the dominating set and were able to call the
LP-based approximation algorithm of \cref{cor:LP-approx-general}. We finally
obtained a set $D_3$ dominating the remaining vertices with 
$|D_3|\leq (2\nabla_0+1)(1+\epsilon/5)\gamma_R$ according to \cref{lem:D3-LP} (by applying it with $\epsilon'=\epsilon/5$). The second item does not follow by the LP approximation, but simply from \cref{alem:res-degree} and \cref{cor:size-R}. 

%
%
%
%
%
%

\begin{lemma}\label{alem:size-D3}

\textit{\begin{enumerate}
\item If $G$ is triangle-free, then  $|D_3|\le 5(1+\epsilon/5)\gamma_R$.\smallskip
\item If $G$ has girth at least $5$, then $|D_3|\le 4\gamma_R$.\smallskip
\item If $G$ is outerplanar, then $|D_3|\le 5(1+\epsilon/5)\gamma_R$.
\end{enumerate}}
\end{lemma}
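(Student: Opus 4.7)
The plan is to dispatch each item separately, leveraging the refined structural bounds that these planar subclasses enjoy over the generic bounded-expansion case.

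For items (1) and (3), I would follow the template of the proof of \cref{lem:D3-LP} verbatim, but replace the generic orientation of out-degree $\nabla_0$ (given by \cref{lem:orientations}) by the improved orientation of out-degree $d=2$ guaranteed by \cref{cor:planar-orientations}, since both triangle-free planar graphs and outerplanar graphs admit such an orientation. Substituting $d=2$ into \cref{lem:ds-factor} yields the LP-rounding constant $2d+1=5$. I would then apply \cref{cor:LP-approx-general} with accuracy parameter $\epsilon/10$ and set the high-degree threshold $\Gamma$ so that (via the analog of \cref{lem:size-D31}) $|D_3^1|\leq (\epsilon/10)\gamma_R$, obtaining $|D_3^2|\leq 5(1+\epsilon/10)\gamma_R$. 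Summing via the trivial inequality $5(1+\epsilon/10)+\epsilon/10\leq 5(1+\epsilon/5)$ gives $|D_3|\leq 5(1+\epsilon/5)\gamma_R$ as required, which is the stated bound in both items.

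For item (2), the LP machinery is unnecessary, and indeed the surrounding text already announces this. By \cref{alem:res-degree} every remaining vertex has residual degree at most $3$, so by \cref{cor:size-R} the set $R$ of still undominated vertices satisfies $|R|\leq (\Dr+1)\gamma_R\leq 4\gamma_R$. I would therefore simply set $D_3:=R$, which trivially dominates $R$ and yields $|D_3|\leq 4\gamma_R$.

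The only real obstacle is the minor bookkeeping of the $\epsilon$-splits in items (1) and (3); all the combinatorial ingredients — the out-degree $2$ orientations from \cref{cor:planar-orientations}, the tight residual-degree bound for girth five from \cref{alem:res-degree}, and the LP rounding from \cref{lem:ds-factor} — are already available from earlier in the paper, so no new technical work is needed.
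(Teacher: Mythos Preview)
Your proposal is correct and matches the paper's own argument essentially line by line: items (1) and (3) are obtained by rerunning the proof of \cref{lem:D3-LP} with the out-degree~$2$ orientation (equivalently, with $\nabla_0=2$) and accuracy parameter $\epsilon/5$, while item (2) bypasses the LP entirely and just takes $D_3=R$ using the residual-degree bound $\Dr\le 3$ from \cref{alem:res-degree} together with \cref{cor:size-R}. Your $\epsilon/10$ split is exactly what unrolling \cref{lem:D3-LP} with parameter $\epsilon'=\epsilon/5$ yields, so even the bookkeeping coincides.
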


\pagebreak
We conclude. 

\begin{theorem}\label{thm:tri}
There exists a distributed LOCAL algorithm that, for every triangle free planar
graph $G$, computes in a constant number of rounds a dominating set
of size at most $(8+\epsilon)\gamma(G)$.
\end{theorem}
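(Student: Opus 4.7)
The plan is to follow exactly the template of the proof of \cref{thm:planar}, but substitute the triangle-free bounds established in \cref{alem:size-D12} and \cref{alem:size-D3}. The ingredients are already assembled: the first two phases compute $D_1 \cup D_2$ dominating $N[D_1 \cup D_2]$, the third phase (LP-based) computes $D_3$ dominating the remaining red set $R$, and the union is a dominating set of $G$. It remains only to add up the sizes, parameterized by the quantity $\eta \in [0,1]$ with $|(D_1\cup D_2)\cap D|=\eta\gamma$, and optimize.

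Concretely, I would write: since $G$ is triangle-free and planar, \cref{alem:size-D12}(2) gives $|D_1\cup D_2| < 3\gamma + 4\eta\gamma$. By \cref{lem:size-DR}, $\gamma_R \leq (1-\eta)\gamma$, and \cref{alem:size-D3}(1) yields $|D_3| \leq 5(1+\epsilon/5)\gamma_R = (5+\epsilon)(1-\eta)\gamma$. Summing,
\begin{align*}
|D_1\cup D_2\cup D_3| &\leq 3\gamma + 4\eta\gamma + (5+\epsilon)(1-\eta)\gamma \\
&= \bigl(8+\epsilon - \eta(1+\epsilon)\bigr)\gamma.
\end{align*}
Since $\eta \geq 0$ and $1+\epsilon > 0$, the right-hand side is maximized at $\eta = 0$, giving the claimed bound of $(8+\epsilon)\gamma$.

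There is essentially no obstacle here; the work was done in the earlier lemmas. The only thing to double-check is that the LP-approximation of \cref{cor:LP-approx-general} is invoked with parameter $\epsilon/5$ rather than $\epsilon$, which is what produces the factor $5(1+\epsilon/5)=5+\epsilon$ in \cref{alem:size-D3}(1); this matches the convention already used in the paragraph preceding \cref{alem:size-D3}. Locality and constant round complexity follow from the corresponding property of each of the three phases, as was already the case in \cref{thm:K3t-free-total} and \cref{thm:planar}.
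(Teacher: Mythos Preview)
Your proof is correct and follows essentially the same approach as the paper: combine the triangle-free bounds from \cref{alem:size-D12} and \cref{alem:size-D3} with $\gamma_R\le(1-\eta)\gamma$, and observe that the resulting expression is maximized at $\eta=0$. Your arithmetic $8+\epsilon-\eta(1+\epsilon)$ is in fact cleaner than the paper's (which contains a harmless typo in the intermediate step).
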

\begin{proof}
	We have
	\begin{align*}
		|D_1\cup D_2|&<3\gamma+4\eta\gamma&\text{(by \cref{alem:size-D12})}\\
		|D_3|&\le 5(1+\e/5)\gamma_R\leq 5(1+\e/5)(1-\eta)\gamma	&\text{(by \cref{alem:size-D3})}\\
			\intertext{Thus,}
		|D_1 \cup D_2 \cup D_3| &< 8\gamma -5\eta\gamma+\epsilon\gamma-\epsilon\eta\gamma\\
		&\leq  (8+\epsilon) \gamma&\text{(maximized when $\eta=0$)}
	\end{align*}
\end{proof}

The remaining theorems are proved analogously. 

\begin{theorem}\label{thm:bip}
  There exists a distributed LOCAL algorithm that, for every bipartite planar graph
  $G$, computes in a constant number of rounds a
  dominating set of size at most $(7+\epsilon)\gamma(G)$.
\end{theorem}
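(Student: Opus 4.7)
The plan is to imitate the proof of \cref{thm:tri} (triangle-free planar) almost verbatim, substituting the tighter bipartite bounds where they apply. Since every bipartite planar graph is triangle-free planar, the algorithm from Phases 1--3 (with the bipartite parameters $\nabla_0<2$, $\nn=2$, $\rho\le 2$) is identical; only the numerical accounting improves. In particular, both the Phase 2 rule (selecting $B_v=\{z\neq v:|N_R(v)\cap N_R(z)|\ge 7\}$) and the Phase 3 LP step remain unchanged, so the same constant-round LOCAL implementation works.

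First I would invoke item~1 of \cref{alem:size-D12} to get the bipartite improvement
\[
|D_1\cup D_2|<2\gamma+4\eta\gamma,
\]
which is tighter than the triangle-free bound by exactly $\gamma$; the gain comes from $\rho(G)\le 2$ for bipartite planar graphs (\cref{lem:bounds}), propagated through \cref{alem:D-hat} to give $|\hat D\setminus D|<2\gamma$ instead of $3\gamma$.

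Next I would bound $D_3$. Since bipartite planar graphs are triangle-free planar, \cref{alem:res-degree}(1) still applies ($\Dr\le 18$), and because $\nabla_0<2$, \cref{cor:planar-orientations} supplies an orientation of maximum out-degree $2$. Applying \cref{lem:D3-LP} with $\epsilon'=\epsilon/5$ and using \cref{lem:size-DR} then yields
\[
|D_3|\le 5(1+\epsilon/5)\gamma_R\le (5+\epsilon)(1-\eta)\gamma,
\]
exactly the bound used in the triangle-free case.

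Finally I would combine and optimize in $\eta\in[0,1]$:
\[
|D_1\cup D_2\cup D_3|<(2+4\eta)\gamma+(5+\epsilon)(1-\eta)\gamma=(7+\epsilon-\eta-\epsilon\eta)\gamma.
\]
Since the coefficient of $\eta$ is $-(1+\epsilon)<0$, the right-hand side is maximized at $\eta=0$, giving the claimed $(7+\epsilon)\gamma$ bound. There is no real obstacle here: every structural lemma needed (\cref{alem:D-hat}, \cref{alem:size-D12}(1), \cref{alem:res-degree}(1), \cref{lem:D3-LP}) is already in place and specialized to the bipartite setting; the proof is purely an arithmetic specialization of \cref{thm:tri} that trades one $\gamma$ saved in Phase~1+2 against the same LP-based Phase~3.
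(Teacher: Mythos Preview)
Your proposal is correct and matches the paper's approach: the paper states that \cref{thm:bip} is proved ``analogously'' to \cref{thm:tri}, and your argument is exactly that specialization, replacing the triangle-free bound $|D_1\cup D_2|<3\gamma+4\eta\gamma$ by the bipartite bound $|D_1\cup D_2|<2\gamma+4\eta\gamma$ from \cref{alem:size-D12}(1) while keeping the same Phase~3 estimate $|D_3|\le 5(1+\epsilon/5)\gamma_R$ from \cref{alem:size-D3}(1).
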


\begin{theorem}\label{thm:girth}
  There exists a distributed LOCAL algorithm that, for every planar graph
  $G$ of girth at least~$5$, computes in a constant number of rounds a
  dominating set of size at most~$7\gamma(G)$.
\end{theorem}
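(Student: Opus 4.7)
The plan is to mirror the proof of \cref{thm:tri}, but using the girth-specific bounds from \cref{alem:size-D12,alem:size-D3}. First I would invoke \cref{alem:size-D12}(3), which gives $|D_1\cup D_2|<3\gamma+\eta\gamma$; recall that for graphs of girth at least~$5$ we set $D_2=\emptyset$, so this is really just the bound on $|D_1|$ coming from $D_1\subseteq\hat D$ together with $|\hat D\setminus D|<3\gamma$ (from \cref{alem:D-hat}), and accounting for the part of $D_1$ already in $D$ via the parameter~$\eta$.

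Next I would bound $|D_3|$. The key point — and the reason no $\epsilon$ appears in the final approximation factor — is that for girth-$5$ planar graphs \cref{alem:res-degree}(2) yields the exceptionally strong bound $\Dr\le 3$. Consequently, by \cref{cor:size-R}, $|R|\le 4\gamma_R$, and we may simply set $D_3:=R$ rather than running the LP-based approximation. This gives $|D_3|\le 4\gamma_R$, as recorded in \cref{alem:size-D3}(2). Coupled with \cref{lem:size-DR}, which says $\gamma_R\le(1-\eta)\gamma$, we obtain $|D_3|\le 4(1-\eta)\gamma$.

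Finally, summing the two contributions,
\begin{align*}
|D_1\cup D_2\cup D_3| &< 3\gamma+\eta\gamma+4(1-\eta)\gamma\\
&= 7\gamma-3\eta\gamma,
\end{align*}
which is maximized at $\eta=0$, yielding the claimed bound $|D_1\cup D_2\cup D_3|\le 7\gamma$. Correctness (that the output is a dominating set) and the constant round count follow from the same arguments as in the general and triangle-free cases: phase~1 is computed in a constant number of rounds from the $2$-neighborhoods, phase~2 is trivial (empty), and phase~3 amounts to each remaining red vertex selecting itself into the output, which is purely local.

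There is no real obstacle here: the nontrivial ingredients — the strengthened Lenzen-type bound (\cref{lenzen-improved}), the edge-density/Hall-ratio bounds for girth-$5$ planar graphs (\cref{lem:bounds}), and the residual-degree bound $\Dr\le 3$ — have all been established earlier. The only thing to be careful about is the arithmetic optimization over $\eta\in[0,1]$, where the coefficient $-3\eta$ ensures the bound is tight at $\eta=0$ and strictly better otherwise.
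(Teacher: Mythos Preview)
Your proposal is correct and follows exactly the approach the paper intends: the paper merely states that \cref{thm:girth} is ``proved analogously'' to \cref{thm:tri}, and your argument instantiates that analogy with the girth-$5$ bounds from \cref{alem:size-D12}(3) and \cref{alem:size-D3}(2), including the observation (which the paper also makes just before \cref{alem:size-D3}) that the second item comes directly from \cref{alem:res-degree} and \cref{cor:size-R} rather than the LP approximation, which is why no $\epsilon$ appears.
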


\begin{theorem}\label{thm:outer}
  There exists a distributed LOCAL algorithm that, for every outerplanar graph~$G$, computes in a constant number of rounds a
  dominating set of size at most $(8+\epsilon)\gamma(G)$.
\end{theorem}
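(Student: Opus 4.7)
The plan is to proceed exactly as in the proof of \cref{thm:tri} (and the other planar restrictions), but plugging in the parameters specific to the outerplanar case that have already been isolated in the lemmas above. Concretely, I would invoke item~(3) of \cref{alem:size-D12} to bound the size contribution of the first two phases, and item~(3) of \cref{alem:size-D3} to bound the contribution of the third phase; everything then reduces to a short arithmetic optimization in the parameter $\eta\in[0,1]$ measuring how much of the optimum dominating set is already covered by $D_1\cup D_2$.

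Recall that for outerplanar graphs we set $D_2=\emptyset$ in \cref{sec:D2}, so the bound $|D_1\cup D_2|<3\gamma+\eta\gamma$ from \cref{alem:size-D12} applies. By \cref{lem:size-DR} we have $\gamma_R\le (1-\eta)\gamma$, and by \cref{alem:size-D3} we have $|D_3|\le 5(1+\epsilon/5)\gamma_R\le (5+\epsilon)(1-\eta)\gamma$. Summing gives
\begin{align*}
|D_1\cup D_2\cup D_3|&<3\gamma+\eta\gamma+(5+\epsilon)(1-\eta)\gamma\\
&=(8+\epsilon)\gamma-(4+\epsilon)\eta\gamma,
\end{align*}
which is maximized at $\eta=0$ and yields the claimed bound $(8+\epsilon)\gamma$.

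Correctness and round complexity come for free from the general framework: $D_1$ is computed in $2$ rounds (it depends only on $N^2[v]$), $D_2=\emptyset$ requires no communication, and $D_3=D_3^1\cup D_3^2$ is computed in a constant number of rounds by \cref{cor:LP-approx-general} applied to the reduced graph of bounded maximum degree $\Gamma=\Gamma(\epsilon)$, after the high-degree vertices have been absorbed into $D_3^1$. The union $D_1\cup D_2\cup D_3$ dominates $G$ by construction.

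There is no genuine obstacle here: all the combinatorial content (the Hall-ratio bound $\rho(G)\le 3$ for outerplanar graphs, the residual-degree bound $\Dr\le 9$ used in \cref{alem:res-degree}, and the LP-based approximation of \cref{sec:LP}) has already been established. The only minor subtlety is to make sure one uses the correct item of each restricted lemma (outerplanar falls under the ``girth $\ge 5$ or outerplanar'' branch in \cref{alem:size-D12} but under its own branch in \cref{alem:size-D3}, which accounts for the extra $\epsilon$ term that is absent in \cref{thm:girth}), and to choose the LP parameter as $\epsilon/5$ so that the final approximation factor comes out as $(8+\epsilon)$ rather than $(8+5\epsilon)$.
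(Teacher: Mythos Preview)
Your proof is correct and is exactly the analogous computation the paper has in mind: invoking item~(3) of \cref{alem:size-D12} and item~(3) of \cref{alem:size-D3}, bounding $\gamma_R\le(1-\eta)\gamma$, and optimizing over $\eta$ to obtain $(8+\epsilon)\gamma$. Your remark distinguishing which branch of each restricted lemma applies to the outerplanar case (and why the extra~$\epsilon$ appears here but not in \cref{thm:girth}) is spot on.
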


\section{Conclusion}

We simplified the presentation and generalized the algorithm of Czygrinow et al.~\cite{czygrinow2018distributed} from graph classes that exclude
some topological minor to graph classes~$\Cc$ where
$\nabla_1(G)$ is bounded
by an absolute constant for all $G\in \Cc$. This is a property
in particular possessed by classes with bounded expansion, which include
many commonly studied sparse graph classes. The obtained general  bounds are still large,
but by magnitudes smaller than those obtained in the original work of
Czygrinow et al.\cite{czygrinow2018distributed}.

It is an interesting and important question to identify the most
general graph classes on which certain algorithmic techniques work.
The key argument of \cref{lenzen-improved} works only for classes with~$\nabla_1(G)$ bounded by an absolute constant. We need different methods
to push towards classes with only $\nabla_0(G)$ bounded,
which are the degenerate classes.

 We then
provided a fine-tuned LOCAL algorithm that computes an
  ($11+\epsilon$)\hspace{1pt}-\hspace{1pt}approximation of
  a minimum dominating set in a planar graph in a constant number of
  rounds. Started with different parameters, the algorithm works also
  for several restricted cases of planar graphs. We showed that
  it computes an ($8+\epsilon$)\hspace{1pt}-\hspace{1pt}approximation for
  triangle-free planar graphs, a ($7+\epsilon$)\hspace{1pt}-\hspace{1pt}approximation
  for bipartite planar graphs, a 7\hspace{1pt}-\hspace{1pt}approximation
  for planar graphs of girth 5 and \linebreak an ($8+\epsilon$)\hspace{1pt}-\hspace{1pt}approximation
  for outerplanar graphs. In all cases except for the outerplanar case,
  where an optimal bound of 5 was already known, our algorithm
  improves on the previously best known approximation factors.
  This improvement is most significant in the case of general planar
  graphs, where the previously best known factor was 52.

\pagebreak
\bibliographystyle{abbrv}
\bibliography{ref}

%

\end{document}